\documentclass[onefignum,onetabnum]{siamsials251208}
\usepackage{tikz}
\usepackage{enumitem}
\usepackage{algorithm}
\usepackage{algpseudocode}
\usepackage{amsmath}
\usepackage{multirow}

\usepackage{amsfonts}
\usepackage{graphics}
\usepackage{xcolor} % For cell coloring
\usepackage{amssymb}
\usepackage{psfrag}
\usepackage{multicol}
\usepackage{multirow}
\usepackage{graphicx}
\usepackage[utf8]{inputenc}
\usepackage{bm}
\usepackage{mathtools}
\usepackage{epstopdf}
\usepackage{setspace}
\usepackage{amsmath,amssymb,bbm}
\usepackage[utf8]{inputenc}
\usepackage{url}
\usepackage{color}
\usepackage{booktabs}

\DeclareMathOperator{\Ima}{Im}

\newsiamremark{remark}{Remark}
\newsiamremark{hypothesis}{Hypothesis}
\crefname{hypothesis}{Hypothesis}{Hypotheses}
\newsiamthm{claim}{Claim}
\newsiamthm{assum}{Assumption}
\newsiamthm{scheme}{Scheme}
\newsiamthm{lem}{Lemma}
\newsiamthm{thm}{Theorem}
\headers{PHL: Persistent Hyperdigraph Learning}{X.~Xu, C.~Wang, J.~Chen }

\title{PHL: Persistent Hyperdigraph Learning for Protein–Protein Binding Affinity Prediction}

\author{
Xingjian Xu\thanks{
Department of Mathematics, University of Florida, Gainesville, FL 32611, USA 
({\tt xingjianxu@ufl.edu}).}
\and 
Chunmei Wang%\footnotemark[3]
\thanks{
Department of Mathematics, University of Florida, Gainesvile, FL 32611, USA ({\tt chunmei.wang@ufl.edu)}.}
 \and 
Jiahui Chen\thanks{Department of Mathematical Sciences, University of Arkansas, Fayetteville, AR 72701, USA ({\tt jiahuic@uark.edu}).}
}

\usepackage{bbm}

\allowdisplaybreaks

\begin{document}

\maketitle

\begin{abstract}
Persistent homology and persistent Laplacians have become effective descriptors of large molecular binding interactions, the latter encoding multiscale geometry and spectral information beyond the harmonic subspace. Both, however, are built on undirected graphs and pairwise contacts, while the interactions that determine binding are often directional and involve more than two atoms at once. We introduce persistent hyperdigraph learning (PHL), which represents these directed, many-body interactions directly. Because hyperdigraph Laplacian spectra become costly to compute at higher topological orders, we further use stochastic trace estimation to develop a matrix-free formulation (MFPHL) that estimates Laplacian trace statistics from probe-based quadratic forms evaluated through sparse boundary operators, without assembling or diagonalizing. The number of probes required for a target relative accuracy of the trace estimate is independent of matrix dimension, and per-probe cost scales with operator sparsity. On protein–protein binding affinity benchmarks, MFPHL reduces feature-generation time by roughly two orders of magnitude relative to the eigenvalue-based pipeline. Its predictive accuracy matches that of spectral PHL on the P2P wild-type set within training-seed variability but is lower on the two larger datasets. This dataset-dependent trade-off brings higher-order hyperdigraph features within practical reach.
\end{abstract}

\begin{keywords} Persistent Laplacians; Protein-protein binding; Persistent hyperdigraph learning; Binding affinity prediction.
\end{keywords}

\section{Introduction}
The formation of specific protein-protein complexes governs signal transduction \cite{good2011scaffold,pawson1995protein}, immune recognition \cite{rudolph2006how,stone2009tcell,
sundberg2002molecular,vandermerwe2003molecular}, transcriptional regulation \cite{jolma2015dna,lambert2018human,ptashne1997transcriptional,spitz2012transcription}, and the assembly of macromolecular machines \cite{ahnert2015principles,alberts1998cell,marsh2015structure,robinson2007molecular}. Each such complex is characterized by its binding affinity, given by the equilibrium dissociation constant $K_d$ or the associated standard free energy $\Delta G^{\circ} = RT \ln (K_d/c^{\circ})$, where $R$ is the gas constant, $T$ is the absolute temperature, and $c^{\circ}=1$\,M is the standard concentration. Since affinity determines the occupancy of a complex at physiological concentrations, it is the quantity that connects interface structure to biological function and the structure of a complex carries much of the information needed to predict it. Affinity also depends on solution conditions such as temperature, pH, and ionic strength, and reflects a conformational ensemble rather than a single structure. Quantitative prediction supports the optimization of therapeutic antibodies and engineered binders~ \cite{cao2022design,kuroda2012computer} and allows interface mutations to be interpreted on a continuous scale rather than classified only as tolerated or disruptive~\cite{starr2020deep}. However, direct measurement by calorimetry or surface plasmon resonance requires purified protein and substantial instrument time per complex \cite{kastritis2013binding,schuck1997use,velazquez2006isothermal}, which places large-scale surveys of complexes and mutations beyond experimental reach.

Computational alternatives range from physics-based free energy
calculations, which are well founded but costly and sensitive to
force-field and sampling choices
\cite{cournia2017relative,genheden2015mmpbsa}, to empirical scoring
functions such as FoldX and Rosetta, which are fast but generalize
unevenly across interface types
\cite{alford2017rosetta,schymkowitz2005foldx}. Data-driven models have since become dominant, progressing from hand-designed descriptors
\cite{moal2011protein} to representations learned from molecular
surfaces \cite{gainza2020deciphering} or protein language models
\cite{lin2023evolutionary}. What such models can learn is bounded by how structure is encoded, and algebraic topology has proved an effective encoding: persistent homology tracks connected components, loops, and cavities across a filtration of nested simplicial complexes, yielding a stable, coordinate-free description of shape
\cite{edelsbrunner2002topological,zomorodian2005computing}, and
element-specific variants that label atoms by chemical type have achieved leading performance in blind affinity prediction
\cite{cang2018integration,nguyen2019mathematical}. Such descriptions are purely harmonic. Persistent Laplacians retain the harmonic information in the kernel, while the remaining eigenvalues supply geometric detail beyond topology \cite{memoli2022persistent,wang2020persistent}, and the construction extends to hypergraphs
\cite{bressan2019embedded}.

These representations nonetheless build their cells from pairwise proximity. Contact maps, distance descriptors, and the graphs and simplicial filtrations underlying both graph neural networks and topological descriptors are generated by thresholding interatomic distances, so that even when atoms carry chemical labels, the groups themselves are never primitive objects of the representation. Binding, by contrast, is cooperative \cite{horovitz1996double}: hydrophobic clusters \cite{chandler2005interfaces}, aromatic stacking \cite{burley1985aromatic}, and salt bridges formed between carboxylate and amine groups \cite{gvritishvili2008cooperativity} derive their energetic contribution from the collective arrangement of several atoms at once, and the many-body structure of this kind enters only implicitly. Interactions across an interface are moreover asymmetric: nitrogen and oxygen play distinguishable chemical roles in a polar contact \cite{mcdonald1994satisfying}, and the two partners of a complex are not interchangeable \cite{chakrabarti2002dissecting}. Hyperdigraphs make both features explicit: a hyperedge relates an arbitrary number of vertices simultaneously, so a chemical group can be represented as a single object, while the directed formulation retains the asymmetry that undirected representations discard. Whether this additional expressiveness translates into predictive gain is an empirical question, and one that has not been tested at high topological order.

Building on the formulation of persistent hyperdigraph homology and its associated Laplacians \cite{chen2023persistent}, we introduce persistent hyperdigraph learning (PHL) for protein–protein binding affinity prediction. PHL represents molecular interfaces through directed hyperedges that group atoms by elemental composition, with orientations determined by electronegativity. In the representation tested here, the hyperedges are ordered cliques of an interatomic distance graph within each element-pair channel, not independently identified chemical groups. The experiments, therefore, assess this clique-derived representation rather than the full expressive capacity of general hyperdigraphs. This construction incorporates directed, many-body relationships into a multiscale representation, whose Laplacian spectra capture both harmonic and non-harmonic information. On the P2P benchmark \cite{xu2025plnet}, PHL achieves a higher correlation than the published results of previously reported methods. However, the rapid growth of hyperedge counts with topological order makes repeated Laplacian assembly and eigendecomposition expensive: the spectral pipeline takes about $3.6\times10^{5}$\,s on the 343 WT complexes, and in the largest channels, the operators must be truncated before diagonalization in Section~\ref{results}.

To overcome this computational limitation, we develop matrix-free persistent hyperdigraph learning (MFPHL), which replaces eigendecomposition with stochastic estimation of the first two spectral moments \cite{hutchinson1989stochastic}. By exploiting the Laplacian's factorization through sparse boundary operators, MFPHL computes these estimates without explicitly assembling the Laplacian, retaining non-harmonic spectral information at a cost governed by operator sparsity. The required probe count admits dimension-independent bounds for prescribed relative accuracy and confidence of the trace estimate for each operator. MFPHL reduces feature-generation time by approximately two orders of magnitude. Its predictive performance is within the training-seed variability of the spectral PHL baseline on WT but lower on V2020 and P2P$_{\mathrm{WT}}$. It makes higher-order descriptors substantially cheaper to compute on all three datasets.

The remainder of this article is structured as follows. Section~\ref{Methods} presents the mathematical foundations of PHL and MFPHL, including hyperdigraph homology, persistent Laplacians, and stochastic spectral moment estimation. Section~\ref{MFPHL_framework} describes the construction of PHL representations and their matrix-free implementation for binding affinity prediction. Section~\ref{results} evaluates predictive accuracy and compares the feature-generation costs of PHL and MFPHL. Section~\ref{discussion} examines what the descriptors encode, why Betti numbers are omitted, and the practical limits of increasing topological order. Section~\ref{Conclusion} summarizes the findings and outlines future directions.

\section{Modeling}\label{Methods}
This section develops the proposed method in three steps. We first recall hyperdigraphs and their Laplacians, which encode many-body and directed structures in a single algebraic object. We then introduce a filtration and the associated persistent Laplacians, giving a multiscale description of an interface. Finally, we develop the stochastic trace estimator that extracts spectral descriptors, together with an error bound, from the boundary operators alone and never forms the Laplacian. The first two steps determine what is described; the third determines how it is computed and at what cost.

\subsection{Hyperdigraphs and hyperdigraph Laplacians}\label{hyperdigraph}

We recall the hyperdigraph construction of \cite{chen2023persistent}, which uses directed hyperedges to encode asymmetric relations among vertices. Throughout, let $V$ be a finite, nonempty ordered vertex set;
let $\mathbf{P}_{n}(V)$ denote the collection of $n$-element subsets of $V$; let $\Sigma_{n}$ denote the symmetric group on $n$ letters; and let $\mathbb{K}$ be a field.

\begin{definition}[Directed hyperedge~\cite{chen2023persistent}] \label{def_hyperedge}
A directed $p$-hyperedge on $V$ is a pair \((\sigma,e)\in\Sigma_{p+1}\times\mathbf{P}_{p+1}(V),\) where $e$ is a set of $p+1$ vertices and $\sigma$ specifies an ordering of those vertices. Since $V$ is ordered, the subset $e$ inherits an
ordering, and applying $\sigma$ to this induced ordering determines a sequence \(v_{0}v_{1}\cdots v_{p}\)
of distinct vertices. We write
\(
\mathbf{S}(V)
=
\coprod_{n=1}^{|V|}
\Sigma_{n}\times\mathbf{P}_{n}(V)
\) for the collection of all directed hyperedges on $V$.
\end{definition}

\begin{definition}[Hyperdigraph~\cite{chen2023persistent}]
\label{def_hyperdigraph}
A hyperdigraph on $V$ is a pair \(\vec{\mathcal{H}}=(V,\vec{E}),\) where $\vec{E}\subseteq\mathbf{S}(V)$. We denote by
$\vec{E}^{\,p}$ the collection of directed $p$-hyperedges of $\vec{\mathcal{H}}$. If every permutation occurring in $\vec{E}$ is the
identity permutation, then $\vec{\mathcal{H}}$ reduces to a hypergraph
in the usual sense. A {morphism}
\(
f:\vec{\mathcal{H}}\longrightarrow\vec{\mathcal{H}}'
\) is a map $f:V\to V'$ such that
\(
(\sigma,f(e))\in\vec{E}'
\quad
\text{for every }(\sigma,e)\in\vec{E}.
\) Here $f(e)=\{f(v):v\in e\}$ is regarded as a subset of the ordered vertex set $V'$. Consequently, $f$ is injective on each hyperedge,
because $(\sigma,f(e))$ must contain the same number of vertices as $(\sigma,e)$.
\end{definition}

\begin{remark}
The morphism convention above follows \cite{chen2023persistent}: the permutation $\sigma$ is preserved while $f$ is applied to the underlying vertex set $e$. In the filtrations considered below, all morphisms are inclusions arising from enlargement of the directed hyperedge set, for which this convention agrees with the corresponding sequence-based description.
\end{remark}

Since the vertices of a directed hyperedge are distinct, no regularity condition is required. To obtain a boundary operator, the hyperedges are first placed inside a chain complex generated by all such sequences.
 
\begin{definition}[Sequence chain complex~\cite{chen2023persistent}]\label{def_boundary}
Let $S_{p}(V;\mathbb{K})$ denote the $\mathbb{K}$-linear space generated by $\Sigma_{p+1}\times\mathbf{P}_{p+1}(V)$. The boundary operator $d_{p} : S_{p}(V;\mathbb{K}) \to S_{p-1}(V;\mathbb{K})$ is
\begin{equation}\label{eq_boundary}
  d_{p} = \sum_{i=0}^{p}(-1)^{i}\partial_{i},
  \qquad
  \partial_{i}(x_{0},x_{1},\dots,x_{p})
  = (x_{0},\dots,\widehat{x_{i}},\dots,x_{p}),
\end{equation}
where $\widehat{x_{i}}$ indicates omission of $x_{i}$, with the convention $d_{0}e = 0$ for $e \in V$. Since $d_{p-1}\circ d_{p} = 0$, the family $S_{\ast}(V;\mathbb{K}) = (S_{p}(V;\mathbb{K}))_{p\geq 0}$ is a chain complex.
\end{definition}
 
Let \(F_{p}(\vec{\mathcal{H}};\mathbb{K})\) denote the subspace of \(S_{p}(V;\mathbb{K})\) generated by \(\vec{E}^{\,p}\), and \(F_{\ast}(\vec{\mathcal{H}};\mathbb{K})\) is the resulting graded space. The operator \(d_{p}\) restricts to a map
\(F_{p}(\vec{\mathcal{H}};\mathbb{K}) \to S_{p-1}(V;\mathbb{K})\), but its image need not lie in \(F_{p-1}(\vec{\mathcal{H}};\mathbb{K})\).
 
\begin{remark}\label{rem_no_closure}
A simplicial complex is closed under the passage to faces, which is what makes the boundary operator well behaved on it. A hyperdigraph carries no such closure, and this is precisely what allows a chemical group to be a single primitive object of the representation. The cost is that \(d_{p}\big(F_{p}(\vec{\mathcal{H}};\mathbb{K})\big) \not\subseteq
F_{p-1}(\vec{\mathcal{H}};\mathbb{K})\) in general, so \(F_{\ast}(\vec{\mathcal{H}};\mathbb{K})\) together with \(d\) is not a chain complex and its homology is undefined.
\end{remark}
 
Following the embedded homology of hypergraphs \cite{bressan2019embedded}, one retains the largest subspace on which \(d\) is well defined.
 
\begin{definition}[Infimum complex]\label{def_omega}
The infimum chain group of \(\vec{\mathcal{H}}\) in degree \(p\) is
\begin{equation}\label{eq_omega}
  \Omega_{p}(\vec{\mathcal{H}};\mathbb{K})
  = F_{p}(\vec{\mathcal{H}};\mathbb{K})
    \cap d^{-1}F_{p-1}(\vec{\mathcal{H}};\mathbb{K})
  = \big\{\, x \in F_{p}(\vec{\mathcal{H}};\mathbb{K})
    \;:\; dx \in F_{p-1}(\vec{\mathcal{H}};\mathbb{K}) \,\big\}.
\end{equation}
\end{definition}
 
\begin{theorem}\label{thm_functor}
The family \(\Omega_{\ast}(\vec{\mathcal{H}};\mathbb{K}) = (\Omega_{p}(\vec{\mathcal{H}};\mathbb{K}))_{p\geq 0}\) is a chain complex, and it is the maximal chain complex embedded in
\(F_{\ast}(\vec{\mathcal{H}};\mathbb{K})\). Its homology \(H_{p}(\vec{\mathcal{H}};\mathbb{K}) = H_{p}\big(\Omega_{\ast}(\vec{\mathcal{H}};\mathbb{K})\big)\) is the embedded homology of \(\vec{\mathcal{H}}\), and \(\beta_{p} = \dim H_{p}(\vec{\mathcal{H}};\mathbb{K})\) is its \(p\)-th Betti number. Moreover, a morphism of hyperdigraphs induces a \(\mathbb{K}\)-linear map on embedded homology, so \(H_{\ast}(-;\mathbb{K})\) is functorial.
\end{theorem}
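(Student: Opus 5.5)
We prove the three assertions in order: closure of \(\Omega_\ast\) under \(d\), maximality, and functoriality. Throughout we use only the definition \eqref{eq_omega} and the identity \(d_{p-1}\circ d_p=0\) from Definition~\ref{def_boundary}.

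\emph{Chain complex.} It suffices to show \(d\,\Omega_p(\vec{\mathcal{H}};\mathbb{K})\subseteq\Omega_{p-1}(\vec{\mathcal{H}};\mathbb{K})\). Take \(x\in\Omega_p\). By definition, \(dx\in F_{p-1}\). Moreover \(d(dx)=0\), and \(0\in F_{p-2}\), so \(dx\in d^{-1}F_{p-2}\). Hence \(dx\in\Omega_{p-1}\). Since \(\Omega_p\) is an intersection of a subspace with the preimage of a subspace under a linear map, it is a linear subspace. The restriction of \(d\) to \(\Omega_\ast\) still squares to zero, so \(\Omega_\ast\) is a chain complex. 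The case \(p=0\) is trivial because \(d_0=0\).

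\emph{Maximality.} Let \(C_\ast\) be any graded subspace with \(C_p\subseteq F_p\) and \(dC_p\subseteq C_{p-1}\). For \(x\in C_p\) we have \(x\in F_p\) and \(dx\in C_{p-1}\subseteq F_{p-1}\), so \(x\in\Omega_p\). Thus every chain complex embedded in \(F_\ast\) is contained in \(\Omega_\ast\), which is itself such a complex by the previous step. The statements about \(H_p\) and \(\beta_p\) are then definitions.

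\emph{Functoriality.} This is the main obstacle, because we must define the induced chain map correctly and check that it lands in \(\Omega_\ast\) of the target. A morphism \(f\) determines a map on generators \(S_p(V)\to S_p(V')\): the sequence \(v_0\cdots v_p\) with distinct entries is sent to \(f(v_0)\cdots f(v_p)\). Injectivity of \(f\) on hyperedges guarantees these entries are distinct, so the image is a generator. On \(F_p\), i.e.\ on generators \((\sigma,e)\in\vec E\), we use the paper's convention that the image is \((\sigma,f(e))\in\vec E'\) (see the Remark after Definition~\ref{def_hyperdigraph}); in the inclusion setting used later this coincides with the sequence-based map, and we check compatibility with \(d\) in that sequence description. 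Deleting the \(i\)-th entry commutes with applying \(f\) entrywise, so \(f_\ast\partial_i=\partial_i f_\ast\) and hence \(f_\ast d=d f_\ast\) on \(F_p\). The faces of an element of \(F_p\) may fail to lie in \(\vec E\) and can have repeated images, so we state the argument for face-injective maps: we extend \(f_\ast\) to sequences of distinct vertices by sending a sequence with a repeated image to \(0\), and note that in the filtration setting all morphisms are inclusions, where these issues vanish. Then for \(x\in\Omega_p(\vec{\mathcal{H}})\) we get \(f_\ast x\in F_p(\vec{\mathcal{H}}')\), and \(d f_\ast x=f_\ast dx\in f_\ast F_{p-1}(\vec{\mathcal{H}})\subseteq F_{p-1}(\vec{\mathcal{H}}')\). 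Therefore \(f_\ast\) maps \(\Omega_\ast(\vec{\mathcal{H}})\) into \(\Omega_\ast(\vec{\mathcal{H}}')\) and is a chain map, so it induces a linear map on homology. Finally, \((g\circ f)_\ast=g_\ast f_\ast\) and \(\mathrm{id}_\ast=\mathrm{id}\) hold on generators, which gives functoriality.
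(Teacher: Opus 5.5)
Your proofs that $\Omega_\ast$ is closed under $d$ and that it is the largest subcomplex of $F_\ast$ are correct. They are the standard embedded-homology arguments, which the paper does not reproduce and only cites.

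The gap is in functoriality. The morphism condition of Definition~\ref{def_hyperdigraph} only guarantees $(\sigma,f(e))\in\vec{E}'$, i.e.\ the sequence obtained by ordering $f(e)$ in $V'$ and then applying $\sigma$. Your chain-map check instead uses the entrywise map $x_0\cdots x_p\mapsto f(x_0)\cdots f(x_p)$. These two agree only when $f$ is order-preserving on the hyperedge. So your step $f_\ast x\in F_p(\vec{\mathcal{H}}')$ is unjustified for a general morphism, and switching to the paper's convention does not rescue it.

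Here is an example. Take $V=\{u_0<u_1<u_2\}$ and $V'=\{w_0<w_1<w_2\}$. Let $\vec{E}$ and $\vec{E}'$ consist of all vertices, all increasingly ordered edges, and the increasingly ordered triangle. Set $f(u_0)=w_1$, $f(u_1)=w_0$, $f(u_2)=w_2$.
\begin{itemize}
\item $f$ is a morphism in the paper's sense.
\item The entrywise image $w_1w_0$ of $u_0u_1$ is not in $\vec{E}'$, so your map leaves $F_\ast(\vec{\mathcal{H}}')$.
\item The convention map $(\sigma,e)\mapsto(\sigma,f(e))$ gives $d f_\ast(u_0u_1)=w_1-w_0=-f_\ast d(u_0u_1)$ but $d f_\ast(u_0u_2)=w_2-w_1=f_\ast d(u_0u_2)$. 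So it is not a chain map, even up to a global sign.
\end{itemize}

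What your argument does prove is functoriality for the sequence-based notion of morphism: $f$ is injective on hyperedges and $f(x_0)\cdots f(x_p)\in\vec{E}'$ for every $x\in\vec{E}$. For that notion the argument goes through verbatim. The span of sequences on which $f$ is injective is closed under $d$ and contains $F_\ast$, the entrywise map is a chain map on it, and your inclusion $f_\ast\Omega_p\subseteq\Omega_p'$ then follows. This notion coincides with the paper's for inclusions, which is all the persistence constructions use; the paper's remark after Definition~\ref{def_hyperdigraph} concedes as much. You should therefore state this restriction as a hypothesis, rather than present the argument as proving the claim for all morphisms.

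Separately, your concern about repeated images does not arise. The definition forces $f$ to be injective on every hyperedge, hence on all of its faces. Moreover, the proposed extension by zero is not a chain map on $S_\ast(V)$. If $f(v_0)=f(v_2)\neq f(v_1)$, then $f_\ast(v_0v_1v_2)=0$, while $f_\ast d(v_0v_1v_2)=f(v_1)f(v_2)+f(v_0)f(v_1)\neq0$. You never need this extension, so drop it.
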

 
\begin{proof}
See \cite{bressan2019embedded} for the hypergraph case and \cite{chen2023persistent} for hyperdigraphs.
\end{proof}
 
We now take \(\mathbb{K} = \mathbb{R}\), so that the inner product required below is available, and equip $F_{\ast}(\vec{\mathcal{H}};\mathbb{R})$ with the inner product in which the directed hyperedges form an orthonormal basis. The subspace \(\Omega_{\ast}(\vec{\mathcal{H}};\mathbb{R})\) inherits this inner product, and the adjoint \((d_{p})^{\ast} : \Omega_{p-1} \to \Omega_{p}\) is defined by \(\langle d_{p}x, y\rangle = \langle x, (d_{p})^{\ast}y\rangle\).
 
\begin{definition}[Hyperdigraph Laplacian]\label{def_laplacian}
The \(p\)-th hyperdigraph Laplacian is the operator \(\Delta_{p}^{\vec{\mathcal{H}}} : \Omega_{p}(\vec{\mathcal{H}};\mathbb{R}) \to \Omega_{p}(\vec{\mathcal{H}};\mathbb{R})\) given by
\begin{equation}\label{eq_laplacian}
  \Delta_{p}^{\vec{\mathcal{H}}}
  = (d_{p})^{\ast}\circ d_{p}
  + d_{p+1}\circ (d_{p+1})^{\ast},
  \qquad p \geq 0.
\end{equation}
\end{definition}
 
Fix an orthonormal basis of \(\Omega_{\ast}(\vec{\mathcal{H}};\mathbb{R})\) and let \(B_{p}\) denote the matrix of \(d_{p}\) with respect to it. The matrix of \(\Delta_{p}^{\vec{\mathcal{H}}}\) is then \(L_{p}^{\vec{\mathcal{H}}} = B_{p}^{\top}B_{p} + B_{p+1}B_{p+1}^{\top}\), and we write \(n_{p} = \dim \Omega_{p}(\vec{\mathcal{H}};\mathbb{R})\). {Throughout, matrices act on column vectors of coordinates: \(B_{p}\in\mathbb{R}^{n_{p-1}\times n_{p}}\) has one row per basis element of \(\Omega_{p-1}\) and one column per basis element of \(\Omega_{p}\), so that \((d_{p})^{\ast}\) is represented by \(B_{p}^{\top}\) and \(L_{p}^{\vec{\mathcal{H}}}\in\mathbb{R}^{n_{p}\times n_{p}}\).}

{\begin{lemma}[Face-closed hyperdigraphs]\label{lem_face_closed}
Suppose that every vertex of \(V\) is a \(0\)-hyperedge of \(\vec{\mathcal{H}}\) and that, for every \(p\geq1\) and every \(x\in\vec{E}^{\,p}\), each face \(\partial_{i}x\), \(0\le i\le p\), belongs to \(\vec{E}^{\,p-1}\). Then \(\Omega_{p}(\vec{\mathcal{H}};\mathbb{R})=F_{p}(\vec{\mathcal{H}};\mathbb{R})\) for all \(p\), the directed hyperedges form an orthonormal basis of \(\Omega_{p}\), and \(B_{p}\) is the signed incidence matrix whose column indexed by \(x\in\vec{E}^{\,p}\) has the entry \((-1)^{i}\) in the row indexed by \(\partial_{i}x\) and zeros elsewhere. In particular, \(\operatorname{nnz}(B_{p})=(p+1)\,n_{p}\) for \(p\ge1\). Conversely, if some \(x\in\vec{E}^{\,p}\) has a face outside \(\vec{E}^{\,p-1}\), then \(x\notin\Omega_{p}\) and \(\Omega_{p}\ne F_{p}\).
\end{lemma}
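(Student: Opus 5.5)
The plan is to argue directly from Definition~\ref{def_omega} and the explicit formula \eqref{eq_boundary} for \(d_p\), splitting the statement into four pieces: the equality \(\Omega_p=F_p\), the orthonormality of the hyperedge basis, the description of \(B_p\), and the converse.

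\emph{Forward direction.} For \(p=0\) we have \(d_0=0\) by convention, so \(\Omega_0=F_0\) trivially.

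For \(p\ge1\), take any \(x\in\vec{E}^{\,p}\). By \eqref{eq_boundary}, \(dx=\sum_{i}(-1)^i\partial_i x\), and each \(\partial_i x\) lies in \(\vec{E}^{\,p-1}\) by hypothesis. Hence \(dx\in F_{p-1}\). By linearity of \(d\), every element of \(F_p\), being a linear combination of such \(x\), satisfies \(dx\in F_{p-1}\). Therefore \(F_p\subseteq d^{-1}F_{p-1}\), which gives \(\Omega_p=F_p\).

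\emph{Basis and matrix.} Since \(F_p\) carries by definition the inner product making \(\vec{E}^{\,p}\) orthonormal, the hyperedges form an orthonormal basis of \(\Omega_p\). Expressing \(d_p x\) in the basis \(\vec{E}^{\,p-1}\) gives the column of \(B_p\) indexed by \(x\): it has entry \((-1)^i\) in the row indexed by \(\partial_i x\), and zeros elsewhere.

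\emph{Nonzero count.} This is the step needing care: we must rule out cancellation, meaning two different \(i\) producing the same face. The faces \(\partial_i x\) for \(i=0,\dots,p\) omit different vertices. Because the vertices of a directed hyperedge are distinct, these faces have distinct underlying vertex sets, so they are distinct elements of \(\vec{E}^{\,p-1}\). Each column therefore has exactly \(p+1\) nonzero entries, and \(\operatorname{nnz}(B_p)=(p+1)n_p\).

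\emph{Converse.} Suppose \(x\in\vec{E}^{\,p}\) has a face \(\partial_j x\notin\vec{E}^{\,p-1}\). Write \(dx=\sum_i(-1)^i\partial_i x\) in the standard basis of \(S_{p-1}(V;\mathbb{R})\), which consists of all sequences of distinct vertices. By the distinctness argument above, the coefficient of the basis sequence \(\partial_j x\) in \(dx\) is \(\pm1\), so it is nonzero. Since \(F_{p-1}\) is spanned by a subset of this basis that excludes \(\partial_j x\), it follows that \(dx\notin F_{p-1}\). Hence \(x\notin\Omega_p\). As \(x\in F_p\), this gives \(\Omega_p\ne F_p\).
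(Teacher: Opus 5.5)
Your proposal is correct and follows essentially the same argument as the paper. Both rest on the observation that the faces \(\partial_0 x,\dots,\partial_p x\) have pairwise distinct vertex sets, so no terms of \(d_p x\) cancel; this gives \(d_pF_p\subseteq F_{p-1}\) with the stated signed incidence columns, and also the converse. The only cosmetic point is that this no-cancellation fact is already needed for the column description of \(B_p\), not just for the count \(\operatorname{nnz}(B_p)=(p+1)n_p\), but you do establish it.
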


\begin{proof}
Since the vertices of \(x=x_{0}\cdots x_{p}\) are distinct, the faces \(\partial_{0}x,\dots,\partial_{p}x\) have pairwise distinct vertex sets, so no two terms of \(d_{p}x=\sum_{i}(-1)^{i}\partial_{i}x\) cancel. Hence \(d_{p}x\in F_{p-1}\) if and only if every face of \(x\) lies in \(\vec{E}^{\,p-1}\). Under the hypothesis, \(d_{p}F_{p}\subseteq F_{p-1}\), so \(\Omega_{p}=F_{p}\) by \eqref{eq_omega}, with the orthonormal basis \(\vec{E}^{\,p}\) and the stated matrix. The converse is the same observation applied to \(x\).
\end{proof}

When the hypothesis of Lemma~\ref{lem_face_closed} fails, \(\Omega_{p}\) is a proper subspace of \(F_{p}\), and an orthonormal basis of it must be computed from \eqref{eq_omega} before the matrices \(B_{p}\) are available. That basis need not be sparse, and its construction is a setup cost separate from the per-probe cost below.}
 
\begin{theorem}\label{thm_hodge}
The operator \(\Delta_{p}^{\vec{\mathcal{H}}}\) is self-adjoint and non-negative definite, so its eigenvalues \(0 \leq \lambda_{1} \leq \cdots \leq \lambda_{n_{p}}\) are real and non-negative. Moreover,
\begin{equation}\label{eq_hodge}
  \Omega_{p}(\vec{\mathcal{H}};\mathbb{R})
  = \ker \Delta_{p}^{\vec{\mathcal{H}}}
    \oplus \operatorname{Im} d_{p+1}
    \oplus \operatorname{Im} (d_{p})^{\ast},
  \quad
  \ker \Delta_{p}^{\vec{\mathcal{H}}}
  = \ker d_{p} \cap \ker (d_{p+1})^{\ast}
  \cong H_{p}(\vec{\mathcal{H}};\mathbb{R}),
\end{equation}
 the multiplicity of the zero eigenvalue equals $\beta_{p}$, while the non-zero eigenvalues carry geometric information that the Betti numbers do not determine.
\end{theorem}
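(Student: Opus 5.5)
The argument is the standard finite-dimensional Hodge theorem applied to the chain complex \(\Omega_{\ast}(\vec{\mathcal{H}};\mathbb{R})\) of Theorem~\ref{thm_functor}. All spaces are finite-dimensional real inner product spaces, and \(d_{p+1}\) maps \(\Omega_{p+1}\) into \(\Omega_{p}\) with \(d_{p}\circ d_{p+1}=0\). This is all we need.

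\emph{Self-adjointness and non-negativity.} Both summands in \eqref{eq_laplacian} are of the form \(T^{\ast}T\), with \(T=d_p\) or \(T=(d_{p+1})^{\ast}\). Each is self-adjoint, so \(\Delta_p^{\vec{\mathcal{H}}}\) is self-adjoint. For \(x\in\Omega_p\),
\[
\langle \Delta_p^{\vec{\mathcal{H}}} x,x\rangle=\|d_px\|^2+\|(d_{p+1})^{\ast}x\|^2\ge 0 .
\]
By the spectral theorem the eigenvalues are therefore real and non-negative. The same identity shows that \(\Delta_p^{\vec{\mathcal{H}}}x=0\) holds iff \(d_px=0\) and \((d_{p+1})^{\ast}x=0\). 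This gives \(\ker\Delta_p^{\vec{\mathcal{H}}}=\ker d_p\cap\ker(d_{p+1})^{\ast}\).

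\emph{Orthogonal decomposition.} First, \(\operatorname{Im}d_{p+1}\perp\operatorname{Im}(d_p)^{\ast}\), because \(\langle d_{p+1}y,(d_p)^{\ast}z\rangle=\langle d_pd_{p+1}y,z\rangle=0\). Next, for a linear map \(T\) between finite-dimensional inner product spaces we have \((\operatorname{Im}T)^{\perp}=\ker T^{\ast}\). This gives
\[
(\operatorname{Im}d_{p+1}\oplus\operatorname{Im}(d_p)^{\ast})^{\perp}=\ker(d_{p+1})^{\ast}\cap\ker d_p=\ker\Delta_p^{\vec{\mathcal{H}}},
\]
which yields \eqref{eq_hodge}. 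For the isomorphism with homology, note that \(\operatorname{Im}d_{p+1}\subseteq\ker d_p\). Combining \(\ker d_p=(\operatorname{Im}(d_p)^{\ast})^{\perp}\) with the decomposition gives \(\ker d_p=\ker\Delta_p^{\vec{\mathcal{H}}}\oplus\operatorname{Im}d_{p+1}\). Hence the quotient map \(\ker d_p\to H_p(\vec{\mathcal{H}};\mathbb{R})=\ker d_p/\operatorname{Im}d_{p+1}\) restricts to an isomorphism on \(\ker\Delta_p^{\vec{\mathcal{H}}}\). Since the operator is self-adjoint, the multiplicity of \(0\) equals \(\dim\ker\Delta_p^{\vec{\mathcal{H}}}=\beta_p\).

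The one point needing care is that the adjoint \((d_{p+1})^{\ast}\) must be taken as a map \(\Omega_p\to\Omega_{p+1}\), using the restricted inner product, and not in the ambient \(S_\ast\). Otherwise \(\operatorname{Im}(d_{p+1})^{\ast}\) could leave \(\Omega_{p+1}\). Fixing orthonormal bases of the \(\Omega\)'s, so that adjoints are transposes \(B^{\top}\), makes this automatic. The final clause, that nonzero eigenvalues carry information not determined by Betti numbers, is interpretive. I would justify it by a small example: two face-closed hyperdigraphs with equal \(\beta_p\) whose Laplacians, computed via Lemma~\ref{lem_face_closed}, have different nonzero spectra. Two path digraphs of different lengths suffice for \(p=0\).
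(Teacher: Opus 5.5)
Your proposal is correct and is essentially the paper's route. The paper proves the statement only by citation: it points to the algebraic Hodge decomposition for finite chain complexes \cite{eckmann1944harmonische,lim2020hodge}, via \cite{chen2023persistent}, and your argument is a self-contained write-up of exactly that, applied to $\Omega_\ast$ with adjoints taken in the restricted inner product as you rightly insist. The paper gives no justification for the interpretive final clause, so your path-digraph example adds something the paper lacks: $L_0$ has spectra $\{0,2\}$ and $\{0,1,3\}$, both with $\beta_0=1$.
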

 
\begin{proof}
See \cite{chen2023persistent}; the underlying algebraic Hodge decomposition is due to \cite{eckmann1944harmonische}, and a modern account is given in \cite{lim2020hodge}.
\end{proof}
 
\subsection{Persistent Hyperdigraph}\label{persistent_hyperdigraph}
To describe a hyperdigraph across scales, we let its hyperedge set grow with a parameter and track the Laplacians along the resulting family, following \cite{chen2023persistent}.

\begin{definition}[Persistence hyperdigraph]\label{def_persistence}
Let \((\mathbb{R},\leq)\) denote the category whose objects are the real numbers and whose morphisms are \(a \to b\) for \(a \leq b\), and let \(\vec{\mathbf{H}}\mathbf{yper}\) denote the category of hyperdigraphs. A persistence hyperdigraph is a functor \(\mathcal{P} : (\mathbb{R},\leq) \to \vec{\mathbf{H}}\mathbf{yper}\). For \(a \leq b\), the \((a,b)\)-persistent homology of \(\mathcal{P}\) is
\begin{equation}\label{eq_persistent_homology}
  H_{p}^{a,b}(\mathcal{P};\mathbb{K})
  = \Ima\big( H_{p}(\mathcal{P}(a);\mathbb{K})
    \to H_{p}(\mathcal{P}(b);\mathbb{K}) \big),
\end{equation}
the image of the map induced on embedded homology by the morphism \(a \to b\), and the \((a,b)\)-persistent Betti number is \(\beta_{p}^{a,b} = \dim H_{p}^{a,b}(\mathcal{P};\mathbb{R})\) for \(p \geq 0\).
\end{definition}

Definition~\ref{def_persistence} is well posed precisely because embedded homology is functorial. For the Laplacian, we restrict to the subcategory \(\vec{\mathbf{H}}\mathbf{yper}^{\hookrightarrow}\) whose morphisms are inclusions of hyperdigraphs, so that each $a \leq b$ yields an inclusion \(j^{a,b} : \mathcal{P}(a) \hookrightarrow \mathcal{P}(b)\) and hence an inclusion of chain complexes \(\mathfrak{j}^{a,b}_{p} : \Omega_{p}^{a} \hookrightarrow \Omega_{p}^{b}\), where \(\Omega_{p}^{t} = \Omega_{p}(\mathcal{P}(t);\mathbb{R})\).
 
\begin{definition}[Persistent hyperdigraph Laplacian]
\label{def_persistent_laplacian}
For \(a \leq b\) set
\begin{equation}\label{eq_persistent_chains}
  \Omega_{p+1}^{a,b}
  = \big\{\, x \in \Omega_{p+1}^{b}
    \;:\; d_{p+1}^{b}x \in \Omega_{p}^{a} \,\big\},
  \qquad
  d_{p+1}^{a,b}
  = \big(\mathfrak{j}_{p}^{a,b}\big)^{\ast}
    \circ d_{p+1}^{b}
    \circ \iota_{p+1}^{a,b},
\end{equation}
where \(\iota_{p+1}^{a,b} : \Omega_{p+1}^{a,b} \hookrightarrow \Omega_{p+1}^{b}\) denotes the inclusion. The \(p\)-th
\((a,b)\)-persistent hyperdigraph Laplacian \(\Delta_{p}^{a,b} : \Omega_{p}^{a} \to \Omega_{p}^{a}\) is
\begin{equation}\label{eq_persistent_laplacian}
  \Delta_{p}^{a,b}
  = d_{p+1}^{a,b} \circ \big(d_{p+1}^{a,b}\big)^{\ast}
  + \big(d_{p}^{a}\big)^{\ast} \circ d_{p}^{a}.
\end{equation}
Fixing orthonormal bases of \(\Omega_{p-1}^{a}\), \(\Omega_{p}^{a}\) and \(\Omega_{p+1}^{a,b}\) and writing \(B_{p}^{a}\), \(B_{p+1}^{a,b}\) for the matrices of \(d_{p}^{a}\) and \(d_{p+1}^{a,b}\), with the column-vector convention above, so that \(B_{p}^{a}\in\mathbb{R}^{n_{p-1}^{a}\times n_{p}^{a}}\) and \(B_{p+1}^{a,b}\in\mathbb{R}^{n_{p}^{a}\times n_{p+1}^{a,b}}\), where \(n_{q}^{a}=\dim\Omega_{q}^{a}\) and \(n_{p+1}^{a,b}=\dim\Omega_{p+1}^{a,b}\), the matrix of \(\Delta_{p}^{a,b}\) is
\begin{equation}\label{eq_persistent_matrix}
  L_{p}^{a,b}
  ={B_{p+1}^{a,b}\big(B_{p+1}^{a,b}\big)^{\top}
  + \big(B_{p}^{a}\big)^{\top} B_{p}^{a}}
  \;{\in\mathbb{R}^{n_{p}^{a}\times n_{p}^{a}}},
\end{equation}
with spectrum \(\mathbf{Spec}(L_{p}^{a,b}) = \{\lambda_{p}^{a,b}(1),\dots,\lambda_{p}^{a,b}(n)\}\) arranged in ascending order, where \(n = \dim \Omega_{p}^{a}\). {For \(a=b\), \eqref{eq_persistent_matrix} reduces to \(L_{p}^{\vec{\mathcal{H}}}\) above.}
\end{definition}

\begin{theorem}[Persistent Hodge decomposition]
\label{thm_persistent_hodge}
For $a \leq b$,
\begin{equation}\label{eq_persistent_hodge}
  \Omega_{p}^{a}
  = \ker \Delta_{p}^{a,b}
  \oplus \Ima d_{p+1}^{a,b}
  \oplus \Ima \big(d_{p}^{a}\big)^{\ast},
  \qquad
  \ker \Delta_{p}^{a,b} \cong H_{p}^{a,b}(\mathcal{P};\mathbb{R}).
\end{equation}
\end{theorem}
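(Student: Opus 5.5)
The plan is to prove both parts by finite-dimensional linear algebra on $\Omega_{p}^{a}$ with its inherited inner product. The key preliminary is that $d_{p}^{a}\circ d_{p+1}^{a,b}=0$.

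For $x\in\Omega_{p+1}^{a,b}$ we have $d_{p+1}^{b}x\in\Omega_{p}^{a}$, so the orthogonal projection $(\mathfrak{j}_{p}^{a,b})^{\ast}$ acts as the identity on it. Hence $d_{p+1}^{a,b}x=d_{p+1}^{b}x$, and therefore
\[
d_{p}^{a}d_{p+1}^{a,b}x=d_{p}^{b}d_{p+1}^{b}x=0 .
\]
This uses that $d^{a}$ is the restriction of $d^{b}$ along the inclusion $\mathfrak{j}^{a,b}$, which holds because both are restrictions of the single boundary $d$ on $S_{\ast}(V;\mathbb{R})$. From this, $\Ima d_{p+1}^{a,b}\subseteq\ker d_{p}^{a}=(\Ima (d_{p}^{a})^{\ast})^{\perp}$, so the two image summands are mutually orthogonal.

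Next I would show $\ker\Delta_{p}^{a,b}=\ker d_{p}^{a}\cap\ker(d_{p+1}^{a,b})^{\ast}$. The operator $\Delta_{p}^{a,b}$ is a sum of two operators of the form $TT^{\ast}$ and $T^{\ast}T$, so it is self-adjoint and nonnegative, and
\[
\langle\Delta_{p}^{a,b}x,x\rangle=\|(d_{p+1}^{a,b})^{\ast}x\|^{2}+\|d_{p}^{a}x\|^{2}.
\]
Thus $\Delta_{p}^{a,b}x=0$ exactly when both terms vanish. This intersection equals $(\Ima d_{p+1}^{a,b})^{\perp}\cap(\Ima (d_{p}^{a})^{\ast})^{\perp}$. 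Since the two images are orthogonal, their sum is direct, and its orthogonal complement is this kernel. That gives the three-term orthogonal decomposition of $\Omega_{p}^{a}$ in \eqref{eq_persistent_hodge}.

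For the isomorphism, the decomposition restricted to $\ker d_{p}^{a}$ gives
\[
\ker d_{p}^{a}=\ker\Delta_{p}^{a,b}\oplus\Ima d_{p+1}^{a,b},
\]
because $\Ima(d_{p}^{a})^{\ast}\cap\ker d_{p}^{a}=0$. Hence $\ker\Delta_{p}^{a,b}\cong\ker d_{p}^{a}/\Ima d_{p+1}^{a,b}$.

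It remains to identify this quotient with $H_{p}^{a,b}$, which by \eqref{eq_persistent_homology} is the image of $H_{p}(\Omega^{a}_{\ast})\to H_{p}(\Omega^{b}_{\ast})$. That image is $\ker d_{p}^{a}/(\ker d_{p}^{a}\cap\Ima d_{p+1}^{b})$. So the main step is the equality
\[
\ker d_{p}^{a}\cap\Ima d_{p+1}^{b}=\Ima d_{p+1}^{a,b}.
\]
The inclusion $\supseteq$ follows from the preliminary computation. For $\subseteq$: if $z=d_{p+1}^{b}y\in\Omega_{p}^{a}$ with $y\in\Omega_{p+1}^{b}$, then $y\in\Omega_{p+1}^{a,b}$ by the definition in \eqref{eq_persistent_chains}, and $d_{p+1}^{a,b}y=z$.

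I expect the main obstacle to be not this argument but checking that the induced maps are genuinely the inclusions. Specifically, I must verify that $\Omega_{p}^{a}\subseteq\Omega_{p}^{b}$ as subspaces of $S_{p}$ compatibly with the boundaries, so that the persistent homology map of Definition~\ref{def_persistence} is induced by $\mathfrak{j}^{a,b}$. This holds because $F_{p}^{a}\subseteq F_{p}^{b}$ for inclusions of hyperdigraphs, and intersecting with $d^{-1}F_{p-1}$ is monotone. With that in place, the result follows, in the same way as the argument for Theorem~\ref{thm_hodge} in the non-persistent case.
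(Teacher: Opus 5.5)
Your proof is correct, and it supplies something the paper does not: the paper gives no argument for this theorem and simply defers to \cite{chen2023persistent}.

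Your route is a self-contained finite-dimensional argument. Its key observation is that on $\Omega_{p+1}^{a,b}$ the projection $(\mathfrak{j}_{p}^{a,b})^{\ast}$ is inert, so $d_{p+1}^{a,b}$ is just the restriction of $d_{p+1}^{b}$. This yields two facts:
\begin{itemize}
\item $d_{p}^{a}d_{p+1}^{a,b}=0$, so the decomposition is orthogonal and not merely direct;
\item $\Ima d_{p+1}^{a,b}=\Omega_{p}^{a}\cap\Ima d_{p+1}^{b}$, which identifies $\ker d_{p}^{a}/\Ima d_{p+1}^{a,b}$ with the image of $H_{p}(\Omega_{\ast}^{a})\to H_{p}(\Omega_{\ast}^{b})$.
\end{itemize}
This is the same linear algebra that underlies the simplicial persistent Laplacian. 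The only hyperdigraph-specific inputs are $\Omega_{p}^{a}\subseteq\Omega_{p}^{b}$ and the fact that all the $\Omega$'s are subcomplexes of $S_{\ast}(V;\mathbb{R})$ under a single $d$, and you check both.

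What your version buys over the citation is explicitness where the paper later relies on it. You prove $\ker\Delta_{p}^{a,b}=\ker d_{p}^{a}\cap\ker(d_{p+1}^{a,b})^{\ast}$. The proof of Proposition~\ref{prop_annihilate} attributes this identity to the present theorem, although the displayed statement does not contain it. You also show that $B_{p+1}^{a,b}$ is simply the matrix of $d_{p+1}^{b}$ restricted to $\Omega_{p+1}^{a,b}$.

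One point you should state rather than assume: $\mathfrak{j}_{p}^{a,b}$ is an isometry. This holds because both $\Omega_{p}^{a}$ and $\Omega_{p}^{b}$ carry the restriction of the inner product in which the directed hyperedges are orthonormal, and $\vec{E}(a)\subseteq\vec{E}(b)$. That is what makes $(\mathfrak{j}_{p}^{a,b})^{\ast}$ the orthogonal projection onto $\Omega_{p}^{a}$, and hence the identity on $d_{p+1}^{b}\Omega_{p+1}^{a,b}$.
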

 
\begin{proof}
See \cite{chen2023persistent}.
\end{proof}

Write \(\mathcal{H}_{p}^{a} = \ker \Delta_{p}^{a}\) with \(\Delta_{p}^{a}=\Delta_{p}^{a,a}\) for the harmonic space at parameter \(a\), and
\(\pi_{p}^{a} : \Omega_{p}^{a} \to \mathcal{H}_{p}^{a}\) for the associated orthogonal projection. The {\((a,b)\)-persistent harmonic space} is {the image of the earlier harmonic space,
\begin{equation}\label{eq_persistent_harmonic}
  \mathcal{H}_{p}^{a,b}
  = \Ima\Big(\big(\pi_{p}^{b} \circ \mathfrak{j}_{p}^{a,b}\big)\big|_{\mathcal{H}_{p}^{a}}\Big)
  = \pi_{p}^{b}\big(\mathfrak{j}_{p}^{a,b}(\mathcal{H}_{p}^{a})\big)
  \subseteq \mathcal{H}_{p}^{b}.
\end{equation}
The restriction to \(\mathcal{H}_{p}^{a}\) is essential. Without it, a chain at \(a\) that is not a cycle can have a nonzero projection onto a cycle born after \(a\). For example, two edges \(e_{1}=v_{0}v_{1}\) and \(e_{2}=v_{1}v_{2}\) at \(a\), joined at \(b\) by \(e_{3}=v_{0}v_{2}\) with no \(2\)-hyperedges, give \(\mathcal{H}_{1}^{b}=\operatorname{span}(e_{1}+e_{2}-e_{3})\), onto which \(e_{1}\) projects nontrivially, although \(H_{1}^{a}=0\).}
 
\begin{theorem}\label{thm_harmonic}
For all $a \leq b$ and $p \geq 0$,
\begin{equation}
  \mathcal{H}_{p}^{a,b} \cong H_{p}^{a,b}(\mathcal{P};\mathbb{R}),
\end{equation}
 the harmonic spaces and the homology carry the same persistence.
\end{theorem}
 
\begin{proof}
See \cite{chen2023persistent}.
%  {In brief, by Theorem~\ref{thm_hodge} \(h\mapsto[h]\) is an isomorphism \(\mathcal{H}_{p}^{t}\to H_{p}(\mathcal{P}(t);\mathbb{R})\), and \(\ker d_{p}^{b}=\mathcal{H}_{p}^{b}\oplus\Ima d_{p+1}^{b}\). For \(h\in\mathcal{H}_{p}^{a}\), the chain \(\mathfrak{j}_{p}^{a,b}h\) is a cycle at \(b\), so it differs from \(\pi_{p}^{b}\mathfrak{j}_{p}^{a,b}h\) by a boundary. Hence \(\pi_{p}^{b}\mathfrak{j}_{p}^{a,b}h\) is the harmonic representative of the image of \([h]\) under the map induced by \(a\to b\). The restricted map in \eqref{eq_persistent_harmonic} therefore corresponds to this induced map, and its image is isomorphic to \(H_{p}^{a,b}(\mathcal{P};\mathbb{R})\).}
\end{proof}
 
\begin{remark}[Distance-based filtration]\label{rem_filtration}
The filtration used in this work is distance-based, in the manner of a Vietoris–Rips construction. Let $\vec{\mathcal{H}} = (V,\vec{E})$ be a hyperdigraph whose vertices are points in Euclidean space. For
\(a \in \mathbb{R}\), put \(\mathcal{P}(a) = (V, \vec{E}(a))\) with
\begin{equation}\label{eq_distance_filtration}
  \vec{E}(a) = \big\{\, x \in \vec{E} \;:\;
  \text{every pair of points in } x
  \text{ lies within distance } a \,\big\}.
\end{equation}
Then \(a \mapsto \mathcal{P}(a)\) is a functor \((\mathbb{R},\leq) \to \vec{\mathbf{H}}\mathbf{yper}^{\hookrightarrow}\),
so Definitions~\ref{def_persistence} and \ref{def_persistent_laplacian} apply, and the geometry of the interface
enters only through \eqref{eq_distance_filtration}.
\end{remark}

% \updatecomment{The persistent harmonic space is now defined as the image of $\mathcal H_p^a$ under $\pi_p^b\circ\mathfrak j_p^{a,b}$, matching the source restriction in the cited hyperdigraph paper. A counterexample shows why the restriction is needed, and a short argument for Theorem~\ref{thm_harmonic} has been added (blue).}

\subsection{Stochastic Trace Estimation}\label{stochastic_trace}

The persistent Laplacians of the previous subsection act on spaces $\Omega_{p}^{a}$ whose dimension grows combinatorially in both the number of atoms and the order $p$, since a directed $p$-hyperedge is an ordered
sequence of $p+1$ distinct vertices. Extracting their spectra by diagonalization costs $\mathcal{O}(n^{3})$ operations and $\mathcal{O}(n^{2})$ storage per matrix, repeated for every order and every filtration pair. For protein–protein interfaces, this becomes prohibitive beyond the first few orders, and existing applications have accordingly remained there.

This subsection develops an alternative. Rather than the eigenvalues themselves, we use the spectral moments as descriptors.
\begin{equation}\label{eq_moments}
  \operatorname{tr}\big(L_{p}^{a,b}\big)=\sum_{i}\lambda_{i},
  \qquad
  \big\lVert L_{p}^{a,b}\big\rVert_{F}^{2}=\sum_{i}\lambda_{i}^{2},
\end{equation}
where $n=\dim\Omega_{p}^{a}$ and $0\le\lambda_{1}\le\cdots\le\lambda_{n}$ are the eigenvalues of
$L_{p}^{a,b}$. Three things must be established: that both moments are obtainable without forming the matrix whose spectrum they summarize; that the accuracy of the resulting estimates does not degrade as $n$ grows; and that the number of probes needed can be certified in advance from the boundary matrices alone.

\begin{theorem}[Unbiased matrix-free estimation]\label{thm_hutchinson}
Let \(z\) be a Rademacher vector, with independent entries taking the values $\pm1$ with equal probability, and let \(B_{p}^{a}\) and \(B_{p+1}^{a,b}\) be the boundary matrices of Definition~\ref{def_persistent_laplacian}. Then
$\mathbb{E}[zz^{\top}]=I_{n}$ and
\begin{equation}\label{eq_hutchinson}
  \mathbb{E}\big[z^{\top}L_{p}^{a,b}z\big]
  =\operatorname{tr}\big(L_{p}^{a,b}\big),
  \qquad
  \operatorname{Var}\big(z^{\top}L_{p}^{a,b}z\big)
  =2\Big(\big\lVert L_{p}^{a,b}\big\rVert_{F}^{2}
    -\sum_{i}\big(L_{p}^{a,b}\big)_{ii}^{2}\Big)
  \;\le\;2\big\lVert L_{p}^{a,b}\big\rVert_{F}^{2},
\end{equation}
the bound on the right being the variance obtained with Gaussian probes. Both moments in \eqref{eq_moments} are therefore estimated from the distribution of a single quadratic form. That form admits the factorization
\begin{equation}\label{eq_quadratic_form}
  z^{\top}L_{p}^{a,b}z
  ={\big\lVert \big(B_{p+1}^{a,b}\big)^{\top}z\big\rVert_{2}^{2}
  +\big\lVert B_{p}^{a}z\big\rVert_{2}^{2}}.
\end{equation}
Each evaluation costs two sparse matrix–vector products and \(\mathcal{O}(N+n)\) operations, where
\(N=\operatorname{nnz}(B_{p+1}^{a,b})+\operatorname{nnz}(B_{p}^{a})\), with storage proportional to $N$ rather than to $n^{2}${, once \(B_{p}^{a}\) and \(B_{p+1}^{a,b}\) are available. In the face-closed case of Lemma~\ref{lem_face_closed} these are the raw boundary matrices and require no further setup}.
\end{theorem}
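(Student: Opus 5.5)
The plan is to work directly with the symmetric matrix \(A=L_{p}^{a,b}\in\mathbb{R}^{n\times n}\) and with a Rademacher vector \(z\). Only the symmetry of \(A\) from \eqref{eq_persistent_matrix} and the independence, zero mean and unit variance of the entries \(z_{i}\) will be used.

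\emph{Unbiasedness.} Since \(z_{i}^{2}=1\) and \(\mathbb{E}[z_{i}z_{j}]=\mathbb{E}[z_{i}]\mathbb{E}[z_{j}]=0\) for \(i\ne j\), we get \(\mathbb{E}[zz^{\top}]=I_{n}\). Then
\[
\mathbb{E}[z^{\top}Az]=\mathbb{E}\big[\operatorname{tr}(Azz^{\top})\big]=\operatorname{tr}\big(A\,\mathbb{E}[zz^{\top}]\big)=\operatorname{tr}(A),
\]
by linearity of trace and expectation.

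\emph{Variance.} Expand \(z^{\top}Az=\sum_{i}A_{ii}+\sum_{i\ne j}A_{ij}z_{i}z_{j}\), using \(z_{i}^{2}=1\) on the diagonal terms. Hence
\[
z^{\top}Az-\operatorname{tr}(A)=2\sum_{i<j}A_{ij}z_{i}z_{j}.
\]
The products \(z_{i}z_{j}\) with \(i<j\) have mean zero and are pairwise uncorrelated. Indeed, \(\mathbb{E}[z_{i}z_{j}z_{k}z_{l}]\) vanishes unless \(\{i,j\}=\{k,l\}\), since otherwise some index appears an odd number of times. Each such product has second moment \(1\). Therefore
\[
\operatorname{Var}(z^{\top}Az)=4\sum_{i<j}A_{ij}^{2}=2\sum_{i\ne j}A_{ij}^{2}=2\Big(\lVert A\rVert_{F}^{2}-\sum_{i}A_{ii}^{2}\Big)\le 2\lVert A\rVert_{F}^{2}.
\]
For the remark about Gaussian probes, diagonalize \(A=Q\Lambda Q^{\top}\). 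The vector \(g=Q^{\top}z\) is again standard Gaussian, so \(z^{\top}Az=\sum_{i}\lambda_{i}g_{i}^{2}\) has variance \(2\sum_{i}\lambda_{i}^{2}=2\lVert A\rVert_{F}^{2}\). The identity \(\lVert A\rVert_{F}^{2}=\sum_{i}\lambda_{i}^{2}\) for symmetric \(A\) also justifies \eqref{eq_moments}. Together with \(\mathbb{E}[(z^{\top}Az)^{2}]=\operatorname{Var}+\operatorname{tr}(A)^{2}\), this shows that both moments are recoverable from the law of the single quadratic form.

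\emph{Factorization and cost.} Substituting \eqref{eq_persistent_matrix} gives
\[
z^{\top}Az=z^{\top}B_{p+1}^{a,b}(B_{p+1}^{a,b})^{\top}z+z^{\top}(B_{p}^{a})^{\top}B_{p}^{a}z,
\]
and each term is a squared Euclidean norm, which is \eqref{eq_quadratic_form}. Computing \((B_{p+1}^{a,b})^{\top}z\) costs \(\mathcal{O}(\operatorname{nnz}(B_{p+1}^{a,b})+n)\) operations in a compressed sparse format, and computing \(B_{p}^{a}z\) costs \(\mathcal{O}(\operatorname{nnz}(B_{p}^{a})+n)\). The squared norms are linear in the lengths of the output vectors. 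Those lengths are \(n_{p+1}^{a,b}\) and \(n_{p-1}^{a}\), each bounded by the corresponding nnz whenever the matrices have no empty columns or rows; otherwise empty rows and columns may simply be dropped. This gives \(\mathcal{O}(N+n)\) work and storage proportional to \(N\), never \(n^{2}\).

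In the face-closed case, Lemma~\ref{lem_face_closed} identifies \(B_{p}^{a}\) with the signed incidence matrix, so no setup is needed for it. For \(B_{p+1}^{a,b}\), the key point is that \(\Omega_{p+1}^{a,b}\) is spanned by hyperedges at \(b\) whose faces all lie at \(a\), and \((\mathfrak{j}^{a,b}_{p})^{\ast}\) is the coordinate restriction. This case needs checking: a general element of \(\Omega_{p+1}^{a,b}\) could be a combination of hyperedges whose boundaries cancel outside \(\Omega_{p}^{a}\). The statement only asserts that the setup is absent once the matrices are available, so this point affects the setup claim, not the proof of the bounds.

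The main obstacle is the fourth-moment bookkeeping in the variance step. It is routine, but it is where Rademacher probes differ from Gaussian ones, through the removal of the diagonal term.
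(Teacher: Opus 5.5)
The core of your argument is the paper's own proof almost line for line:
\begin{itemize}
\item $\mathbb{E}[zz^{\top}]=I_n$ from $z_i^2=1$ and independence;
\item the expansion $z^{\top}Lz-\operatorname{tr}(L)=2\sum_{i<j}L_{ij}z_iz_j$ with the same fourth-moment orthogonality;
\item orthogonal invariance for the Gaussian comparison;
\item substitution of \eqref{eq_persistent_matrix} to get \eqref{eq_quadratic_form};
\item the same counting of active output coordinates for the $\mathcal{O}(N+n)$ cost.
\end{itemize}

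\textbf{Recovering both moments.} One step does not go through as written: your claim that both moments in \eqref{eq_moments} are recoverable from the law of the quadratic form. That holds for Gaussian probes, but the theorem concerns Rademacher probes. Since $z_i^2=1$, the law of $z^{\top}Lz=\operatorname{tr}(L)+2\sum_{i<j}L_{ij}z_iz_j$ depends on the diagonal of $L$ only through $\operatorname{tr}(L)$. By contrast, $\lVert L\rVert_F^2$ depends on $\sum_i L_{ii}^2$. Two matrices with the same off-diagonal part and the same trace, but different diagonals, therefore give identical laws and different Frobenius norms. The missing ingredient is the diagonal correction the paper adds: $\sum_i\lambda_i^2=\tfrac12\operatorname{Var}(z^{\top}Lz)+\sum_i L_{ii}^2$. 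Here $L_{ii}=\sum_k (B_{p+1}^{a,b})_{ik}^2+\sum_k (B_p^a)_{ki}^2$ is read off the stored entries in $\mathcal{O}(N+n)$, without assembling $L$.

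\textbf{The face-closed clause.} Your suspicion about the sentence you left open is correct, and the paper's proof does not address it either. For $a=b$ it holds: $\Omega_{p+1}^{a,a}=\Omega_{p+1}^{a}$ and $\mathfrak{j}_p^{a,a}=\mathrm{id}$, so Lemma~\ref{lem_face_closed} gives the raw signed incidence matrices directly. This is the only case the paper evaluates. For $a<b$ it fails even when both $\mathcal{P}(a)$ and $\mathcal{P}(b)$ are face-closed. Let $\mathcal{P}(a)$ be the $4$-cycle $v_0v_1,v_1v_2,v_2v_3,v_0v_3$, and let $\mathcal{P}(b)$ add $v_0v_2$, $v_0v_1v_2$ and $v_0v_2v_3$. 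Neither triangle has boundary in $\Omega_1^a$, but their sum does. Hence $\Omega_2^{a,b}$ is spanned by $(v_0v_1v_2+v_0v_2v_3)/\sqrt2$. In the basis $(v_0v_1,v_1v_2,v_2v_3,v_0v_3)$, $B_2^{a,b}$ is then the single column $\tfrac{1}{\sqrt2}(1,1,1,-1)^{\top}$, which is not a column of an incidence matrix. So this clause should be restricted to $a=b$ rather than proved in general.
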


\begin{proof}
See Section~\ref{proof_thm_hutchinson} of the supplementary material.
\end{proof}

Rademacher probes are used throughout this work. They satisfy the same isotropy condition, have strictly smaller variance whenever \(L_p^{a,b}\)  has a non-zero diagonal, and every bound below is stated in terms of 
\(\big\lVert L_p^{a,b}\big\rVert_F^2\)	and therefore holds for normal distribution as well.

\begin{remark}\label{rem_fillin}
The savings exceed the avoided $\mathcal{O}(n^{3})$ diagonalization.
Forming $L_{p}^{a,b}$ at all requires the product
{$B_{p+1}^{a,b}(B_{p+1}^{a,b})^{\top}$}, which is generally far denser than
its factor, since any two rows sharing a column may contribute a non-zero
entry. The factorization \eqref{eq_quadratic_form} avoids this fill-in as
well.
\end{remark}

The factorization also settles what the estimator is measuring. Both
terms on the right of \eqref{eq_quadratic_form} are squared norms of
images of $z$ under the boundary operators, so the quadratic form is a
Dirichlet energy on chains.

\begin{proposition}[Harmonic annihilation]\label{prop_annihilate}
For $z\in\Omega_{p}^{a}$, the expression \eqref{eq_quadratic_form}
vanishes if and only if $z\in\ker L_{p}^{a,b}$. Writing
$z=z_{\mathrm{harm}}+z_{\perp}$ according to the Hodge decomposition of
Theorem~\ref{thm_persistent_hodge}, one has
$z^{\top}L_{p}^{a,b}z=z_{\perp}^{\top}L_{p}^{a,b}z_{\perp}$.
\end{proposition}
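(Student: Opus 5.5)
The proof rests on the factorization \eqref{eq_quadratic_form} and on the persistent Hodge decomposition of Theorem~\ref{thm_persistent_hodge}. Throughout, $z$ denotes a coordinate vector with respect to the fixed orthonormal basis of $\Omega_{p}^{a}$. Because the basis is orthonormal, Euclidean inner products of coordinate vectors agree with the inner product on $\Omega_{p}^{a}$, and transposes of the matrices represent the adjoints. We therefore work with operators and matrices interchangeably.

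\emph{First claim.} By \eqref{eq_quadratic_form},
\[
z^{\top}L_{p}^{a,b}z=\big\lVert (B_{p+1}^{a,b})^{\top}z\big\rVert_{2}^{2}+\big\lVert B_{p}^{a}z\big\rVert_{2}^{2},
\]
which is a sum of two non-negative terms. It vanishes if and only if both terms vanish, that is, if and only if $z\in\ker (d_{p+1}^{a,b})^{\ast}\cap\ker d_{p}^{a}$.

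It remains to identify this intersection with $\ker L_{p}^{a,b}$.
\begin{itemize}
\item If $z$ lies in both kernels, then each summand of \eqref{eq_persistent_matrix} annihilates $z$, so $L_{p}^{a,b}z=0$.
\item Conversely, if $L_{p}^{a,b}z=0$, then $z^{\top}L_{p}^{a,b}z=0$, and by the identity above both squared norms vanish.
\end{itemize}
This gives the equivalence. It uses only the positive semidefiniteness already visible in the factorization, so no spectral theorem is needed.

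\emph{Second claim.} Write $z=z_{\mathrm{harm}}+z_{\perp}$, where $z_{\mathrm{harm}}\in\ker\Delta_{p}^{a,b}$ and $z_{\perp}\in\Ima d_{p+1}^{a,b}\oplus\Ima (d_{p}^{a})^{\ast}$, as in \eqref{eq_persistent_hodge}. Expanding the quadratic form by bilinearity gives
\[
z^{\top}Lz=z_{\mathrm{harm}}^{\top}Lz_{\mathrm{harm}}+2\,z_{\perp}^{\top}Lz_{\mathrm{harm}}+z_{\perp}^{\top}Lz_{\perp},
\]
using the symmetry of $L=L_{p}^{a,b}$, which is evident from \eqref{eq_persistent_matrix}. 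Since $Lz_{\mathrm{harm}}=0$, the first and middle terms vanish and only $z_{\perp}^{\top}Lz_{\perp}$ remains.

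Note that this step does not require the decomposition to be orthogonal; it only uses that the harmonic part lies in the kernel. If one also wants $z_{\perp}$ to be the orthogonal complement component, orthogonality follows from two facts. First, $\ker\Delta_{p}^{a,b}=\ker(d_{p+1}^{a,b})^{\ast}\cap\ker d_{p}^{a}$ by the first part. Second, $\ker (d_{p+1}^{a,b})^{\ast}=(\Ima d_{p+1}^{a,b})^{\perp}$ and $\ker d_{p}^{a}=(\Ima (d_{p}^{a})^{\ast})^{\perp}$.

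\emph{Main subtlety.} The main obstacle is bookkeeping rather than analysis. The chain $z$ must be interpreted in the coordinates of the orthonormal basis of $\Omega_{p}^{a}$, not in hyperedge coordinates of $F_{p}$, so that the matrix transpose represents the adjoint. In the face-closed case of Lemma~\ref{lem_face_closed} these two coordinate systems coincide. In general the statement is taken with respect to the fixed orthonormal basis of Definition~\ref{def_persistent_laplacian}.
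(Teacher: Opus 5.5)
Your proof is correct and follows the paper's argument: the squared-norm factorization \eqref{eq_quadratic_form} gives the first claim, and for the second, $L_{p}^{a,b}z_{\mathrm{harm}}=0$ together with symmetry of $L_{p}^{a,b}$ makes the cross terms vanish. The only difference is that you identify $\ker L_{p}^{a,b}$ with $\ker B_{p}^{a}\cap\ker(B_{p+1}^{a,b})^{\top}$ directly from the factorization instead of citing Theorem~\ref{thm_persistent_hodge}, which makes the first claim self-contained. You also correctly note that orthogonality of the decomposition is not needed for the second claim.
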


\begin{proof}
See Section~\ref{proof_prop_annihilate} of the supplementary material.
\end{proof}

\begin{corollary}[Insensitivity to zero modes]
\label{cor_invisible}
For any $f$ with $f(0)=0$, the quantity
$\operatorname{tr}\big(f(L_{p}^{a,b})\big)$ {depends only on the non-zero eigenvalues of $L_{p}^{a,b}$ and not on the multiplicity $\beta_{p}^{a,b}$ of the zero eigenvalue}: two filtration pairs whose non-zero spectra agree
return the same value, although their persistent Betti numbers may differ.
In particular, the descriptors \eqref{eq_moments} are insensitive to additional zero modes, and are complementary to the invariants of
Theorem~\ref{thm_persistent_hodge} rather than a substitute for them. %{This is an algebraic statement about individual operators; it does not imply that the descriptors are statistically independent of topology across data.}
\end{corollary}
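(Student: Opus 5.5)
The plan is to use the spectral theorem for the symmetric matrix $L_{p}^{a,b}$ and read the zero-mode multiplicity off the persistent Hodge decomposition. By \eqref{eq_persistent_matrix}, $L_{p}^{a,b}$ is a sum of two Gram-type matrices, so it is real symmetric and positive semidefinite. Hence it admits an orthogonal diagonalization $L_{p}^{a,b}=Q\,\mathrm{diag}(\lambda_{1},\dots,\lambda_{n})\,Q^{\top}$ with $\lambda_{i}\ge0$. For a function $f$ defined on the spectrum we take $f(L_{p}^{a,b})=Q\,\mathrm{diag}(f(\lambda_{i}))\,Q^{\top}$. Cyclicity of the trace then gives $\operatorname{tr} f(L_{p}^{a,b})=\sum_{i=1}^{n}f(\lambda_{i})$.

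Next I split this sum into the indices with $\lambda_{i}=0$ and those with $\lambda_{i}>0$. Each zero eigenvalue contributes $f(0)=0$, so
\begin{equation*}
  \operatorname{tr} f\big(L_{p}^{a,b}\big)=\sum_{\lambda_{i}>0} f(\lambda_{i}),
\end{equation*}
with nonzero eigenvalues counted with their multiplicities. This expression depends only on the multiset of nonzero eigenvalues. The number of zero eigenvalues equals $\dim\ker\Delta_{p}^{a,b}$, which by Theorem~\ref{thm_persistent_hodge} is $\dim H_{p}^{a,b}=\beta_{p}^{a,b}$, and it does not enter the formula.

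Two filtration pairs whose nonzero spectra agree as multisets therefore give the same value of $\operatorname{tr} f$, whatever their values of $n$ and $\beta_{p}^{a,b}$. To justify ``may differ'' I would give a small example. Take a single vertex versus two isolated vertices with $p=0$ and $a=b$: both Laplacians are zero matrices, so every descriptor vanishes, while the Betti numbers are $1$ and $2$. Specializing to $f(\lambda)=\lambda$ and $f(\lambda)=\lambda^{2}$ recovers the two moments in \eqref{eq_moments}, since $\|L\|_{F}^{2}=\operatorname{tr}(L^{2})$ for symmetric $L$. Proposition~\ref{prop_annihilate} gives the probe-level version: harmonic components of $z$ contribute nothing to the quadratic form.

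I do not expect a real obstacle. The one point needing care is making ``depends only on'' precise: the nonzero eigenvalues must be treated as a multiset, and $f$ needs to be defined only on the spectrum, so no continuity is required. The claim of complementarity is interpretive rather than a theorem. I would support it only with the example above, which shows that $\beta_{p}^{a,b}$ cannot be recovered from the descriptors.
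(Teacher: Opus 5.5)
Your proof is correct and follows the paper's argument: diagonalize $L_{p}^{a,b}$, write $\operatorname{tr} f(L_{p}^{a,b})=\sum_{i}f(\lambda_{i})=\beta_{p}^{a,b}f(0)+\sum_{\lambda_{i}>0}f(\lambda_{i})$, identify $\beta_{p}^{a,b}$ as the zero-eigenvalue multiplicity via Theorem~\ref{thm_persistent_hodge}, and drop the first term because $f(0)=0$. Your example of one versus two isolated vertices (both Laplacians zero, Betti numbers $1$ and $2$) and your identity $\lVert L\rVert_{F}^{2}=\operatorname{tr}(L^{2})$ linking the result to the moments \eqref{eq_moments} are sound additions that the paper's proof leaves implicit.
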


\begin{proof}
See Section~\ref{proof_cor_invisible} of the supplementary material.
\end{proof}

\begin{remark}
\label{rem_betti_probe}
The complementary quantity is also an expectation: if $\Pi_{0}$ denotes
the orthogonal projector onto $\ker L_{p}^{a,b}$, then
$\mathbb{E}\big[\lVert\Pi_{0}z\rVert^{2}\big]
=\operatorname{tr}(\Pi_{0})=\beta_{p}^{a,b}$, with variance
$2\beta_{p}^{a,b}$ for Gaussian $z$. For $\beta_{p}^{a,b}>0$, averaging
$s$ independent Gaussian probes gives relative root mean square error
$\sqrt{2/(s\beta_{p}^{a,b})}$. If $\beta_{p}^{a,b}=0$, the projector
and every probe value are zero. Applying
$\Pi_{0}$ requires a nullspace computation and is not pursued here.
\end{remark}

Two consequences of \eqref{eq_quadratic_form} make the first moment
computable exactly and provide a check on the estimates.

\begin{lemma}[The relation with the boundary matrices]
\label{lem_trace_frob}
\begin{equation}\label{eq_trace_frob}
  \operatorname{tr}\big(L_{p}^{a,b}\big)
  =\big\lVert B_{p+1}^{a,b}\big\rVert_{F}^{2}
  +\big\lVert B_{p}^{a}\big\rVert_{F}^{2},
\end{equation}
computable in $\mathcal{O}(N)$ time from the stored entries, with no
probing and no assembly.
\end{lemma}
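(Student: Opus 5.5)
The plan is to compute the trace of each summand of \eqref{eq_persistent_matrix} separately, using linearity of the trace and the cyclic identity $\operatorname{tr}(AA^{\top})=\operatorname{tr}(A^{\top}A)=\sum_{i,j}A_{ij}^{2}=\lVert A\rVert_{F}^{2}$. This holds for any real rectangular matrix $A$. Applying it with $A=B_{p+1}^{a,b}$ gives
\[
\operatorname{tr}\big(B_{p+1}^{a,b}(B_{p+1}^{a,b})^{\top}\big)=\lVert B_{p+1}^{a,b}\rVert_{F}^{2},
\]
and applying it with $A=B_{p}^{a}$ gives
\[
\operatorname{tr}\big((B_{p}^{a})^{\top}B_{p}^{a}\big)=\lVert B_{p}^{a}\rVert_{F}^{2}.
\]
Summing the two yields \eqref{eq_trace_frob}. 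The only point to check is dimensional consistency. Both products are $n_{p}^{a}\times n_{p}^{a}$ by the column-vector convention fixed in Definition~\ref{def_persistent_laplacian}, so their traces can be added. The identity itself requires nothing beyond $\operatorname{tr}(A^{\top}A)=\sum_{j}\lVert A e_{j}\rVert^{2}$.

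As a cross-check consistent with Theorem~\ref{thm_hutchinson}, one can take expectations in the factorization \eqref{eq_quadratic_form}. With $\mathbb{E}[zz^{\top}]=I$, we have $\mathbb{E}\lVert A^{\top}z\rVert^{2}=\operatorname{tr}(A A^{\top})$. The same identity therefore follows from unbiasedness, which shows that the probe estimator and the exact formula target the same quantity. I would include this remark only briefly, since the direct computation already suffices.

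For the complexity claim, note that $\lVert A\rVert_{F}^{2}$ is the sum of the squares of the nonzero entries of $A$. In compressed sparse storage this takes $\operatorname{nnz}(A)$ multiplications and additions. The total cost is therefore $\mathcal{O}(\operatorname{nnz}(B_{p+1}^{a,b})+\operatorname{nnz}(B_{p}^{a}))=\mathcal{O}(N)$, with no product of matrices formed and no random vectors drawn. In the face-closed case of Lemma~\ref{lem_face_closed}, every nonzero entry is $\pm1$. The sums then reduce to entry counts, namely $(p+2)\,n_{p+1}^{a,b}+(p+1)\,n_{p}^{a}$, which I would record as a corollary.

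There is essentially no obstacle. The one subtlety worth stating is that $B$ must be expressed in \emph{orthonormal} bases of the $\Omega$ spaces, as in Definition~\ref{def_persistent_laplacian}. Only then do the matrix transposes represent the adjoints, so that the trace of the matrix equals the basis-independent trace of $\Delta_{p}^{a,b}$. Once this is noted, the argument is a single line per term.
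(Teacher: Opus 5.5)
Your proof is correct and is the paper's argument: apply $\operatorname{tr}(A^{\top}A)=\operatorname{tr}(AA^{\top})=\lVert A\rVert_F^2$ to each of the two summands of $L_p^{a,b}$ and add. The side corollary you propose holds only for $a=b$ and $p\ge1$. At $p=0$ one has $B_0^a=0$ because $d_0=0$, so the count is $2n_1$, not $2n_1+n_0$. For $a<b$, an orthonormal basis of $\Omega_{p+1}^{a,b}$ need not consist of hyperedges (e.g.\ two triangles sharing an edge born after $a$ contribute only $(t_1+t_2)/\sqrt{2}$), so the entries of $B_{p+1}^{a,b}$ need not be $\pm1$.
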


\begin{proof}
$\operatorname{tr}(B^{\top}B)=\operatorname{tr}(BB^{\top})
=\lVert B\rVert_{F}^{2}$ applied to each summand of
$L_{p}^{a,b}=B_{p+1}^{a,b}(B_{p+1}^{a,b})^{\top}
+(B_{p}^{a})^{\top}B_{p}^{a}$.
\end{proof}

\begin{corollary}[Alternating trace identity]\label{cor_alternating}
At a fixed filtration value $a=b$, writing
$f_{p}=\lVert B_{p}^{a}\rVert_{F}^{2}$, so that $f_{0}=0$ because $d_{0}=0$, one has for every $P\ge0$
\begin{equation}\label{eq_alternating}
  \sum_{p=0}^{P}(-1)^{p}\operatorname{tr}\big(L_{p}(a)\big)=(-1)^{P}f_{P+1} .
\end{equation}
In particular, the sum vanishes when the construction is truncated at order $P$, so that no $(P+1)$-hyperedges enter $L_{P}(a)$.
\end{corollary}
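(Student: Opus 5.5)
By Lemma~\ref{lem_trace_frob} with $a=b$, each trace splits into two Frobenius norms. Here $B_{p+1}^{a,a}$ coincides with $B_{p+1}^{a}$: for $a=b$ the space $\Omega_{p+1}^{a,a}$ is $\Omega_{p+1}^{a}$ and $d_{p+1}^{a,a}=d_{p+1}^{a}$, as noted after \eqref{eq_persistent_matrix}. Writing $f_q=\lVert B_q^{a}\rVert_F^2$, we therefore get
\begin{equation*}
  \operatorname{tr}\big(L_p(a)\big)=f_{p+1}+f_{p},\qquad p\ge 0 .
\end{equation*}

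The alternating sum then telescopes. We have
\begin{equation*}
  \sum_{p=0}^{P}(-1)^{p}\big(f_{p}+f_{p+1}\big)
  =\sum_{p=0}^{P}(-1)^{p}f_{p}-\sum_{q=1}^{P+1}(-1)^{q}f_{q}
  =f_{0}-(-1)^{P+1}f_{P+1}
  =(-1)^{P}f_{P+1},
\end{equation*}
using $f_0=0$, which holds because $d_0=0$ means $B_0^a$ is the zero matrix. We can also argue by induction on $P$: the case $P=0$ gives $\operatorname{tr}L_0=f_1$, and the inductive step adds $(-1)^{P+1}(f_{P+1}+f_{P+2})$, which cancels the $(-1)^{P}f_{P+1}$ term.

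For the truncation clause, suppose no $(P+1)$-hyperedges enter $L_P(a)$. Then $\Omega_{P+1}^{a}=0$, so $B_{P+1}^{a}$ has no columns, $f_{P+1}=0$, and the sum vanishes.

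The only delicate point is confirming that the up-Laplacian factor at $a=b$ is exactly $B_{p+1}^{a}$, the same matrix whose Frobenius norm enters as the down term of $L_{p+1}$. This identification is what makes the telescoping legitimate, and it follows from the reduction remark in Definition~\ref{def_persistent_laplacian}. The choice of orthonormal basis does not matter, since Frobenius norms are invariant under orthogonal changes of basis.
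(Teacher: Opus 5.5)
Your proof is correct and follows the paper's argument: Lemma~\ref{lem_trace_frob} at $a=b$ gives $\operatorname{tr}(L_p(a))=f_p+f_{p+1}$, and the alternating sum telescopes to $f_0-(-1)^{P+1}f_{P+1}=(-1)^{P}f_{P+1}$. Your identification of $B_{p+1}^{a,a}$ with $B_{p+1}^{a}$, the basis-invariance remark, and the explicit treatment of the truncation clause spell out steps that the paper's one-line proof leaves implicit.
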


\begin{proof}
By Lemma~\ref{lem_trace_frob}, $\operatorname{tr}(L_{p}(a))=f_{p}+f_{p+1}$,
and the alternating sum telescopes to $f_{0}-(-1)^{P+1}f_{P+1}=(-1)^{P}f_{P+1}$.
\end{proof}

\begin{remark}[Numerical consistency check]
\label{rem_check}
The alternating-trace identity \eqref{eq_alternating} provides a consistency check for the probe-based trace estimates across the retained orders. In this work, we retain orders up to the largest feasible order $P$ to capture higher-order topological information. Accordingly, the alternating sum is determined by the order-$(P+1)$ boundary contribution
and is not generally zero.
\end{remark}
% \begin{remark}
% \label{rem_check}
% Identity \eqref{eq_alternating} is an exact linear constraint on
% quantities the estimator returns across the six orders used in
% Section~\ref{hyperdigraph_construction}.\revised{ If the probes for different orders are drawn independently, the statistic
% $\sum_{p}(-1)^{p}\widehat{\operatorname{tr}}(L_{p}(a))$ has mean $(-1)^{P}\lVert B_{P+1}^{a}\rVert_{F}^{2}$, which is zero under truncation at order $P$, and
% variance $\sum_{p}\operatorname{Var}_{p}$. A deviation from this mean flags
% either too few probes or an error in the construction of the boundary
% matrices. Both sides are computable exactly in $\mathcal{O}(N)$ time by Lemma~\ref{lem_trace_frob}, so this is a consistency check on the implementation rather than a substitute for exact computation. The identity involves only traces and carries no homological information}, so it is a numerical check rather than a
% topological one.
% \end{remark}

Theorem~\ref{thm_hutchinson} makes each probe cheap, but the estimator it
defines is random. The next result bounds the number of probes required
for a prescribed accuracy, first abstractly and then in terms of
quantities available before any spectral computation.

\begin{theorem}[Probe complexity]\label{thm_recovery}
Write $L=L_{p}^{a,b}$, {a fixed positive semidefinite matrix with $L\ne0$,} let $z_{1},\dots,z_{s}$ be independent probe vectors, each with independent Rademacher entries, and set
$T_{s}=s^{-1}\sum_{j=1}^{s}z_{j}^{\top}Lz_{j}$. Then $T_{s}$ is unbiased
for $\operatorname{tr}(L)$, and for any $\varepsilon,\delta\in(0,1)$ and $s\;\ge\;\frac{2}{\delta\varepsilon^{2}}$,
\begin{equation}\label{eq_probe_count}
  \mathbb{P}\Big(\big|T_{s}-\operatorname{tr}(L)\big|
  \ge\varepsilon\operatorname{tr}(L)\Big)\;\le\;\delta .
\end{equation}
The number of probes needed for a prescribed relative accuracy of the trace estimate is
therefore independent of $n$.
\end{theorem}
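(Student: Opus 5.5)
The argument is Chebyshev's inequality applied to the averaged estimator, combined with a comparison between the Frobenius norm and the trace of a positive semidefinite matrix.

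\emph{Unbiasedness.} By Theorem~\ref{thm_hutchinson}, each $z_j^{\top}Lz_j$ has mean $\operatorname{tr}(L)$. Linearity of expectation then gives $\mathbb{E}[T_s]=\operatorname{tr}(L)$.

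\emph{Variance.} The probes are independent, so the variance of the average is $s^{-1}$ times the single-probe variance. The variance bound in \eqref{eq_hutchinson} then yields
\[
\operatorname{Var}(T_s)=\frac{1}{s}\operatorname{Var}\big(z^{\top}Lz\big)\le\frac{2\lVert L\rVert_F^{2}}{s}.
\]

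\emph{Norm comparison.} This is the only step that needs an idea, and it is where positive semidefiniteness enters. By \eqref{eq_moments}, $\lVert L\rVert_F^2=\sum_i\lambda_i^2$. Because every $\lambda_i\ge0$,
\[
\sum_i\lambda_i^2\le\Big(\sum_i\lambda_i\Big)^2=\operatorname{tr}(L)^2.
\]
Expanding the square, the difference is a sum of nonnegative cross terms. This is also what removes any dependence on $n$. The hypothesis $L\neq0$ ensures $\operatorname{tr}(L)>0$, so dividing by the trace in the relative-error statement is legitimate.

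\emph{Chebyshev.} Applying Chebyshev's inequality with threshold $\varepsilon\operatorname{tr}(L)$ gives
\[
\mathbb{P}\big(|T_s-\operatorname{tr}(L)|\ge\varepsilon\operatorname{tr}(L)\big)\le\frac{2\lVert L\rVert_F^2}{s\,\varepsilon^2\operatorname{tr}(L)^2}\le\frac{2}{s\,\varepsilon^2}.
\]
The right-hand side is at most $\delta$ as soon as $s\ge 2/(\delta\varepsilon^2)$, which is \eqref{eq_probe_count}. The resulting probe count depends only on $\varepsilon$ and $\delta$, not on $n$. I do not expect any real obstacle beyond the norm comparison.
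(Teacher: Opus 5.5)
Your proposal is correct and follows the paper's proof essentially step for step. Both arguments combine unbiasedness from Theorem~\ref{thm_hutchinson}, the variance bound $\operatorname{Var}(T_s)\le 2\lVert L\rVert_F^2/s$, the positive-semidefinite comparison $\lVert L\rVert_F^2\le\operatorname{tr}(L)^2$ (with $L\ne0$ giving $\operatorname{tr}(L)>0$), and Chebyshev's inequality at threshold $\varepsilon\operatorname{tr}(L)$.
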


\begin{proof}
See Section~\ref{proof_thm_recovery} of the supplementary material.
\end{proof}

\begin{remark}[Scope of the guarantee]
\label{rem_recovery_scope}
Theorem~\ref{thm_recovery} gives a concentration guaranty for the trace estimate of a single fixed nonzero operator. For the zero operator, the trace is recovered exactly since every probe value vanishes. The theorem does not provide the same guaranty for the other probe-based statistics
used below. For simultaneous control of $K$ nonzero operators, the same result applies with a failure probability $\delta/K$ for each operator, by the union bound.
\end{remark}

The bound \eqref{eq_probe_count} is uniform over all operators and therefore pessimistic for any particular one. The governing quantity is the spectral participation ratio, defined for $L\ne0$ by
\begin{equation}\label{eq_stable_rank}
  r(L)=\frac{\operatorname{tr}(L)^{2}}{\lVert L\rVert_{F}^{2}}
      =\frac{\big(\sum_{i}\lambda_{i}\big)^{2}}{\sum_{i}\lambda_{i}^{2}}
  \;\in\;[1,n],
\end{equation}
in terms of which independent Rademacher probes satisfy
\[
  \frac{\operatorname{Var}(T_s)}{\operatorname{tr}(L)^2}
  = \frac{2\big(\|L\|_F^2-\sum_i L_{ii}^2\big)}
         {s\,\operatorname{tr}(L)^2}
  \le \frac{2}{s\,r(L)}.
\]
For independent standard Gaussian probes, the relative variance is exactly $2/(s\,r(L))$. The following theorem brackets $r(L)$ between quantities that require no eigenvalue.

\begin{theorem}[Sandwich]\label{thm_sandwich}
Let $L=L_{p}^{a,b}\ne0$ and put
\begin{equation}\label{eq_schur}
  \Lambda_{p}
  =\sigma\big(B_{p+1}^{a,b}\big)^{2}+\sigma\big(B_{p}^{a}\big)^{2},
  \qquad
  \sigma(B)=\sqrt{\lVert B\rVert_{1}\,\lVert B\rVert_{\infty}},
\end{equation}
where $\lVert\cdot\rVert_{1}$ and $\lVert\cdot\rVert_{\infty}$ are the
maximum absolute column and row sums. Then
\begin{equation}\label{eq_sandwich}
  \frac{\lVert B_{p+1}^{a,b}\rVert_{F}^{2}
        +\lVert B_{p}^{a}\rVert_{F}^{2}}{\Lambda_{p}}
  \;\le\;\frac{\operatorname{tr}(L)}{\lambda_{n}}
  \;\le\;r(L)\;\le\;n-\beta_{p}^{a,b},
\end{equation}
with equality on the right exactly when the non-zero spectrum is flat. Every quantity in the leftmost expression is computable in $\mathcal{O}(N)$ time from the stored entries of the boundary matrices.
\end{theorem}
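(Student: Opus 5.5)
The chain \eqref{eq_sandwich} splits into three inequalities, which I will prove separately, from left to right. Throughout I use the eigenvalues $0\le\lambda_1\le\cdots\le\lambda_n$ of $L=L_{p}^{a,b}$. Since $L\ne0$, we have $\lambda_n>0$ and $\operatorname{tr}(L)>0$.

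\emph{Leftmost inequality.} By Lemma~\ref{lem_trace_frob}, the numerator equals $\operatorname{tr}(L)$. It therefore suffices to show $\lambda_n\le\Lambda_p$. I would apply Weyl's inequality (subadditivity of the largest eigenvalue of positive semidefinite matrices) to the two summands in \eqref{eq_persistent_matrix}:
\[
\lambda_n\le\lambda_{\max}\big(B_{p+1}^{a,b}(B_{p+1}^{a,b})^{\top}\big)+\lambda_{\max}\big((B_{p}^{a})^{\top}B_{p}^{a}\big)=\lVert B_{p+1}^{a,b}\rVert_2^2+\lVert B_p^a\rVert_2^2 .
\]
Next I use the standard interpolation bound $\lVert B\rVert_2^2\le\lVert B\rVert_1\lVert B\rVert_\infty$. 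It follows from $\lVert B\rVert_2^2=\rho(B^{\top}B)\le\lVert B^{\top}B\rVert_\infty\le\lVert B^{\top}\rVert_\infty\lVert B\rVert_\infty=\lVert B\rVert_1\lVert B\rVert_\infty$, or directly from Schur's test. This gives $\lambda_n\le\sigma(B_{p+1}^{a,b})^2+\sigma(B_p^a)^2=\Lambda_p$. The $\mathcal{O}(N)$ claim holds because all column sums, row sums and squared entries are accumulated in a single pass over the stored nonzeros.

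\emph{Middle inequality.} Since $\lambda_i\ge0$, we have $\sum_i\lambda_i^2\le\lambda_n\sum_i\lambda_i$. Hence
\[
r(L)=\frac{\operatorname{tr}(L)^2}{\sum_i\lambda_i^2}\ge\frac{\operatorname{tr}(L)}{\lambda_n}.
\]

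\emph{Rightmost inequality and equality case.} Let $m$ be the number of nonzero eigenvalues. By Theorem~\ref{thm_persistent_hodge}, $\dim\ker L=\dim\ker\Delta_p^{a,b}=\beta_p^{a,b}$, so $m=n-\beta_p^{a,b}$. Cauchy--Schwarz over the $m$ nonzero eigenvalues gives $(\sum\lambda_i)^2\le m\sum\lambda_i^2$. Equality in Cauchy--Schwarz holds exactly when all nonzero $\lambda_i$ coincide, which is what ``the non-zero spectrum is flat'' means.

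\emph{Main obstacle.} The only nontrivial step is the spectral-norm bound by $\sigma(B)^2$ together with the Weyl splitting of the sum. A minor point to handle is that the matrices are expressed in orthonormal bases of $\Omega$. The norms $\lVert\cdot\rVert_1$ and $\lVert\cdot\rVert_\infty$ are taken entrywise on those same matrices, so the bound is purely matrix-theoretic and needs no face-closure hypothesis. The bound is most natural in the face-closed setting of Lemma~\ref{lem_face_closed}, where these entries are the signed incidence entries.
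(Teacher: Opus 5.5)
Your proof is correct and follows the paper's argument essentially step for step. You use the trace identity together with $\lambda_n\le\lVert B_{p+1}^{a,b}\rVert_2^2+\lVert B_p^a\rVert_2^2\le\Lambda_p$ for the left inequality, $\sum_i\lambda_i^2\le\lambda_n\operatorname{tr}(L)$ for the middle one, and Cauchy--Schwarz over the $n-\beta_p^{a,b}$ nonzero eigenvalues for the right one. The only cosmetic differences are these: you invoke Weyl's inequality where the paper uses the spectral-norm triangle inequality, and you obtain $\lVert B\rVert_2^2\le\lVert B\rVert_1\lVert B\rVert_\infty$ through $\rho(B^{\top}B)\le\lVert B^{\top}B\rVert_\infty$, whereas the paper proves that bound directly by a Cauchy--Schwarz (Schur-test) estimate on $\lVert Mx\rVert_2^2$.
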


\begin{proof}
See Section~\ref{proof_thm_sandwich} of the supplementary material.
\end{proof}

\begin{corollary}[A priori probe count]\label{cor_apriori}
For $\varepsilon,\delta\in(0,1)$,
\begin{equation}\label{eq_apriori}
  s\;\ge\;\frac{2\,\Lambda_{p}}
  {\delta\,\varepsilon^{2}\big(\lVert B_{p+1}^{a,b}\rVert_{F}^{2}
  +\lVert B_{p}^{a}\rVert_{F}^{2}\big)}
\end{equation}
suffices for the conclusion of Theorem~\ref{thm_recovery} {under its hypotheses}. The right-hand side depends only on the stored entries of the boundary matrices, so the probe count can be certified before any spectral computation and separately for each element pair, cutoff, and order.
\end{corollary}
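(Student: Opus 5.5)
The plan is to rerun the Chebyshev argument behind Theorem~\ref{thm_recovery} with the relative variance written in terms of the participation ratio $r(L)$. Then we replace $r(L)$ by the eigenvalue-free lower bound of Theorem~\ref{thm_sandwich}. Write $L=L_{p}^{a,b}$. The hypotheses of Theorem~\ref{thm_recovery} give $L\ne0$ and positive semidefinite, so $\operatorname{tr}(L)>0$ and $r(L)$ is defined.

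First, by Theorem~\ref{thm_hutchinson}, $T_s$ is unbiased. By independence of the probes,
\[
\operatorname{Var}(T_s)=\frac{1}{s}\operatorname{Var}(z^{\top}Lz)\le\frac{2\lVert L\rVert_F^2}{s}.
\]
Dividing by $\operatorname{tr}(L)^2$ gives the displayed bound $\operatorname{Var}(T_s)/\operatorname{tr}(L)^2\le 2/(s\,r(L))$. Chebyshev's inequality then yields
\[
\mathbb{P}\big(|T_s-\operatorname{tr}(L)|\ge\varepsilon\operatorname{tr}(L)\big)\le\frac{2}{s\,r(L)\,\varepsilon^{2}}.
\]
This is at most $\delta$ as soon as $s\ge 2/(\delta\varepsilon^{2}r(L))$.

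Second, set $\Phi_p=\lVert B_{p+1}^{a,b}\rVert_F^2+\lVert B_p^a\rVert_F^2$. The left inequality chain in \eqref{eq_sandwich} gives $r(L)\ge\Phi_p/\Lambda_p$. Hence
\[
\frac{2}{\delta\varepsilon^{2}r(L)}\le\frac{2\Lambda_p}{\delta\varepsilon^{2}\Phi_p}.
\]
Any $s$ satisfying \eqref{eq_apriori} therefore satisfies the sufficient condition from the first step.

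We must also check that the bound is well defined. By Lemma~\ref{lem_trace_frob}, $\Phi_p=\operatorname{tr}(L)>0$, so the denominator is nonzero. Also $\Lambda_p>0$, since some boundary matrix has a nonzero entry. By Lemma~\ref{lem_trace_frob} and the definition of $\sigma$, both are computed from stored entries in $\mathcal{O}(N)$ time. This gives the certification claim.

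The main obstacle is only bookkeeping. We must confirm that the sandwich bound is invoked under exactly the hypothesis $L\ne0$. We must also note that $\Phi_p/\Lambda_p\le r(L)$ may be less than $1$. This is harmless, because it only makes the required $s$ larger, so the corollary remains a valid (possibly pessimistic) sufficient condition.
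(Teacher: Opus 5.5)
Your proposal is correct and follows essentially the same route as the paper: Chebyshev's inequality with the relative variance bounded by $2/(s\,r(L))$, then the eigenvalue-free lower bound $r(L)\ge(\lVert B_{p+1}^{a,b}\rVert_F^2+\lVert B_p^a\rVert_F^2)/\Lambda_p$ from Theorem~\ref{thm_sandwich}. Your extra checks, that the denominator equals $\operatorname{tr}(L)>0$ and that $\Lambda_p>0$, are a small addition the paper leaves implicit.
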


\begin{proof}
See Section~\ref{proof_cor_apriori}  of the supplementary material.
\end{proof}

\begin{remark}\label{rem_two_roles}
For independent Rademacher or standard Gaussian probes and $L\ne0$, the leftmost bound in \eqref{eq_sandwich} gives
\[
  \frac{\operatorname{Var}(T_s)}{\operatorname{tr}(L)^2}
  \le \frac{2}{s\,r(L)}
  \le \frac{2\Lambda_p}{s\,\operatorname{tr}(L)}.
\]
The final expression is computable from the boundary-matrix entries and provides the variance upper bound used in Corollary~\ref{cor_apriori}. The rightmost inequality, $r(L)\le n-\beta_p^{a,b}$, bounds the participation ratio by the number of nonzero eigenvalues; it does not provide a variance lower bound for Rademacher probes. For example, a nonzero diagonal $L$ has a constant Rademacher quadratic form and hence zero estimator variance.
\end{remark}

\begin{remark}\label{rem_moments}
The sample \(\{z_j^\top L_p^{a,b}z_j\}_{j\le s}\) carries more than its mean. By \eqref{eq_hutchinson},its sample variance estimates~\(
2\left(\|L_p^{a,b}\|_F^2-\sum_i\bigl((L_p^{a,b})_{ii}\bigr)^2\right).\) The empirical variance used in our descriptor is the uncorrected sample dispersion of these probe values. Its expectation is therefore the quantity above multiplied by the usual finite-sample factor $(s-1)/s$. We do not remove this factor or apply the diagonal correction;
instead, the resulting quantity is retained as a probe-dispersion feature.
Together with the first moment, these quantities describe spectral information beyond the kernel alone. The first moment is available exactly and without probing by Lemma \ref{lem_trace_frob}, and the diagonal correction is likewise computable directly from the boundary operators. Probing is used to estimate the second moment and obtain the order statistics used in Section \ref{hyperdigraph_construction}.
\end{remark}

\section{Persistent Hyperdigraph Learning Framework}\label{MFPHL_framework}

This section describes how the constructions of Section~\ref{Methods} are assembled into the PHL and MFPHL predictors through four stages, summarized in Figure~\ref{fig:workflow}. Starting from a protein–protein complex, interface atoms are partitioned by element into ordered channels, and a distance filtration generates directed hyperedges oriented by electronegativity. Two element-specific constructions are considered: an ordinary hyperdigraph incorporating multiple topological orders to capture many-body geometry and an order-one bipartite hyperdigraph restricted to cross-interface contacts to capture pairwise interactions. Both give boundary matrices, from which the two predictors diverge: PHL assembles each Laplacian and diagonalizes it, summarizing the resulting spectrum, while MFPHL leaves the Laplacian unformed and summarizes instead a sample of quadratic forms obtained by probing the boundary matrices. The two paths produce descriptors of identical shape, so they enter the model interchangeably. In both cases the features are combined block by block with ESM-2 embeddings into a fixed-length vector used to train a gradient boosting decision tree for binding affinity prediction.
\begin{figure}[!htbp]
    \centering
    \includegraphics[width=1\linewidth]{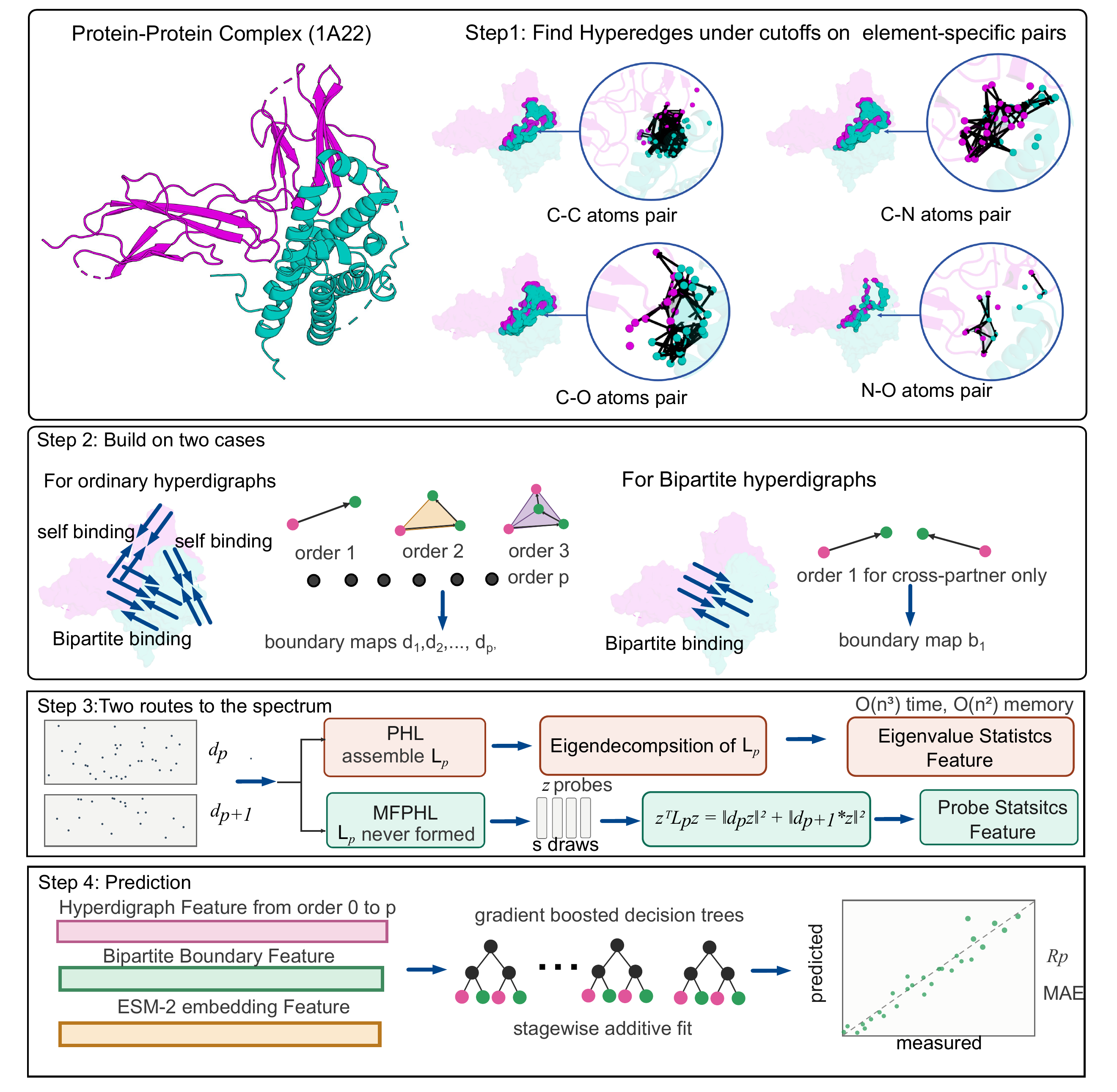}
    \caption{Overview of the proposed PHL and MFPHL frameworks for protein-protein binding affinity prediction. The workflow comprises four steps: element-specific hyperedges under distance cutoffs;
  hyperdigraph and bipartite constructions with their boundary maps; two routes to the spectrum, by eigendecomposition of the assembled Laplacian for PHL or by randomized probing of the boundary matrices for
  MFPHL and feature alignment with ESM-2 embeddings and prediction by
  gradient boosted trees.}
    \label{fig:workflow}
\end{figure}

\subsection{Data Sources and Preprocessing}

The benchmark used here is P2P~\cite{xu2025plnet}, which is constructed from PDBbind V2020 and SKEMPI v2. PDBbind V2020 provides the original protein-protein structures and affinity records, while the curation and
preprocessing protocol used here follows the P2P construction summarized in \cite{xu2025plnet}. Starting from the $2{,}852$ protein-protein complexes in PDBbind V2020, this procedure verifies binding partners
against the RCSB PDB, retains affinities reported as $K_d$, $K_i$, or $\Delta G_{\mathrm{bind}}$, and removes incomplete structures or imprecise affinity records, leaving $2{,}571$ complexes. Reported $K_d$
and $K_i$ values are converted to molar units, while $\Delta G_{\mathrm{bind}}$ values are converted to the corresponding equilibrium constants; all affinities are then expressed on the same negative base-$10$ logarithmic scale. After removing $203$ complexes overlapping with the SKEMPI v2 wild-type set, $2{,}368$ complexes remain, which we denote by V2020.

From SKEMPI v2, we retain $343$ wild-type complexes, denoted WT. Their union with V2020 gives P2P$_{\mathrm{WT}}$ with $2{,}711$ complexes, which is a subset of the broader P2P benchmark~\cite{xu2025plnet}. Each entry includes the assignment of chains to
partners A and B, which defines the interface as contacts between atoms belonging to different partners. Structures are completed with Profix~\cite{xiang2001extending} and stripped of hydrogens, so all
subsequent constructions use heavy atoms only. Mutant records from SKEMPI v2 are excluded because a single substitution can leave the geometric representation nearly unchanged while substantially altering
the measured affinity; modeling such effects requires features defined relative to the wild-type parent and is left for future work.
\subsection{Persistent Hyperdigraph Construction}
\label{hyperdigraph_construction}

From each complex, we take the residues whose C$_{\alpha}$ atom lies within $12\,\text{\AA}$ of the opposite partner and form element-specific point clouds from their atoms. Hydrogens having been removed, the atom types considered are \(\mathcal{T}=\{\mathrm{S},\mathrm{C},\mathrm{N},\mathrm{O}\}\), giving \(16\) ordered element-pair channels. For \((\tau_{1},\tau_{2})\in\mathcal{T}\times\mathcal{T}\), the vertex set \(V\)
consists of the atoms of type \(\tau_{1}\) in partner A together with those
of type \(\tau_{2}\) in partner B.

At a cutoff \(\epsilon\), two vertices are adjacent when their Euclidean distance is at most \(\epsilon\), whether or not they belong to the same partner, so that intra-partner geometry is recorded alongside the interface. The directed \(p\)-hyperedges of \(\vec{E}(\epsilon)\) are the orderings of the \((p+1)\)-cliques of this adjacency that are non-decreasing in electronegativity,
\begin{equation}\label{eq_electronegativity}
  \mathrm{S}\,(2.44) \;\to\; \mathrm{C}\,(2.50) \;\to\;
  \mathrm{N}\,(3.07) \;\to\; \mathrm{O}\,(3.50),
\end{equation}
with all tied orderings retained. Thus, a contact between two atoms of the same element contributes two oppositely directed orderings. This yields a hyperdigraph \(\vec{\mathcal{H}}(\epsilon)=(V,\vec{E}(\epsilon))\) in the sense of Definition~\ref{def_hyperdigraph}, and
\(\epsilon\mapsto\vec{\mathcal{H}}(\epsilon)\) is a filtration in the sense
of Remark~\ref{rem_filtration}, evaluated at \(\epsilon\in\{3,\ldots,12\}\,\text{\AA}\). In addition, the clique-based construction is naturally closed under taking faces: deleting a vertex from an ordered clique produces a lower-order clique whose induced ordering remains non-decreasing in electronegativity.
Hence, whenever this face-closed structure is retained, $\Omega_p=F_p$ and the corresponding boundary operator reduce to the usual signed incidence matrix.

The number of cliques grows combinatorially with both the filtration cutoff $\epsilon$ and the order $p$, and a few dense element-pair channels can therefore account for most of the computational cost. To control this growth, we impose a cap of $1000$ on the raw directed hyperedges retained in each hyperdigraph dimension. For each element pair, cutoff, and order $p\geq1$, candidates are ranked by their filtration score—the edge length for $p=1$ and the largest pairwise distance in the clique for $p\geq2$ with deterministic index-based tie breaking, and at most the first $1000$ are retained. Consequently, above the vertex space $F_0$, the raw spaces $F_p$ from which the embedded chain groups $\Omega_p$ are constructed have dimension  at most $1000$. The cap acts on the raw hyperdigraph representation and may break the face closure of the original clique construction. When the retained hyperedges remain face-closed, $\Omega_p=F_p$. Otherwise, $\Omega_p\subseteq F_p$ retains the chains whose boundaries remain in the corresponding lower-order spaces, ensuring that $\partial\Omega_p\subseteq\Omega_{p-1}$. Thus, the cap controls the size of the represented hyperdigraph, while the embedded construction provides the valid chain complex on which the persistent hyperdigraph Laplacian is defined. For capped instances, the raw capped boundary matrices are not used directly. For capped instances, the raw boundary matrices are not used directly. In the exact PHL construction, the raw boundary is split into a
retained-face block $G_p$ and a leakage block $C_p$. An orthonormal basis $Q_p$ of $\ker(C_p)$ is computed by rank-revealing pivoted QR using a standard machine-precision-based rank tolerance, and the embedded boundary is formed as
\(B_p = Q_{p-1}^{\top}G_pQ_p .\) In MFPHL, the same embedded operator is evaluated matrix-free without forming $Q_p$ explicitly. The projection onto $\Omega_p$ is carried out
using LSMR~\cite{fong2011lsmr}, an iterative Krylov method for solving sparse least-squares problems, with tolerance $10^{-6}$ and at most $100$ iterations. This allows the embedded projection to be applied using sparse matrix-vector products rather than constructing a dense null-space basis.

Write \(L_{p}(\epsilon)=L_{p}^{\epsilon,\epsilon}\) for the matrix of the Laplacian of Definition~\ref{def_persistent_laplacian} acting on \(\Omega_{p}(\vec{\mathcal{H}}(\epsilon);\mathbb{R})\). For \(p=0,1,\ldots,5\), we draw \(s\) probe vectors \(z_{1},\ldots,z_{s}\) and
evaluate
\begin{equation}\label{eq_probe_sample}
  q_{j} = z_{j}^{\top}L_{p}(\epsilon)z_{j},
  \qquad j=1,\ldots,s,
\end{equation}
through the quadratic-form factorization \eqref{eq_quadratic_form}, avoiding explicit eigendecomposition. All descriptors in this work use only the diagonal pairs \(a=b=\epsilon\). They sample the hyperdigraph Laplacians \(L_{p}(\epsilon)\) along the filtration, and no cross-scale operator \(L_{p}^{a,b}\) with \(a<b\) is evaluated.

The sample $\{q_j\}_{j=1}^{s}$ is the object from which the MFPHL features are formed. In the reported experiments, we use independent Rademacher probes without normalization or rescaling and summarize the resulting probe values by eight statistics: sum, minimum, maximum, mean, standard deviation, uncorrected empirical variance, probe
value $\ell_2$ norm, and cardinality. Among these quantities, the sample mean directly estimates $\operatorname{tr}L_p(\epsilon)$ by Theorem~\ref{thm_hutchinson}. The remaining statistics characterize the
dispersion, scale, and extremes of the randomized probe responses and are therefore used as randomized operator descriptors rather than spectrum-only invariants. In particular, the empirical variance is used without a diagonal correction or additional scaling. The cardinality records the number of probes $s$, and the sum is correspondingly
redundant with the mean, but both are retained to match the feature set used in the reported experiments.
The final descriptor concatenates these eight statistics over the $16$ element-pair channels, the $10$ filtration cutoffs, and the six retained orders, yielding $16\times10\times6\times8=7680$ features. All reported probe-based
features are generated using the fixed random seed $20260714$.

\subsection{Directed Bipartite Boundary Features}
\label{sec:directed_bipartite_d1}

We complement the high-order features of Section~\ref{hyperdigraph_construction} with a first-order descriptor confined to the interface. The hyperdigraph construction admits contacts within a partner as well as across it; here, only cross-partner contacts are used, which we enforce through the modified distance
\begin{equation}\label{eq_modified_distance}
  D_{\mathrm{mod}}(a_{i},a_{j}) =
  \begin{cases}
    \lVert \mathbf{r}_{i} - \mathbf{r}_{j}\rVert,
      & {a_{i} \in A,\; a_{j} \in B \ \text{or}\ a_{i} \in B,\; a_{j} \in A},\\[2pt]
    \infty,
      & \text{otherwise},
  \end{cases}
\end{equation}
where $\mathbf{r}_{i}$ and $\mathbf{r}_{j}$ are the positions of the atoms \(a_{i}\) and \(a_{j}\). \(A\), \(B\) are the two partners. Adjacency at a cutoff $\epsilon$ means \(D_{\mathrm{mod}} \leq \epsilon\), which is finite only across the interface; thus, the adjacency is {symmetric and} bipartite by construction. Atom types, orientation, and cutoffs are as in
Section~\ref{hyperdigraph_construction}. {The orientation of each selected contact is assigned afterward by electronegativity and may point from $B$ to $A$}.

The resulting object is a special case of Definition~\ref{def_hyperdigraph} in which \(\vec{E}(\epsilon)\) contains
directed hyperedges of order at most one. Since there are no \(2\)-hyperedges, the second term of \eqref{eq_laplacian} vanishes, and the Laplacian at order one reduces to {\(M(\epsilon) = B_{1}(\epsilon)^{\top}B_{1}(\epsilon)\), where
\(B_{1}(\epsilon)\in\mathbb{R}^{|V|\times|\vec{E}^{\,1}(\epsilon)|}\) is the boundary matrix of Definition~\ref{def_persistent_laplacian}, in the column-vector convention of Section~\ref{hyperdigraph}, with one row per atom and one column per directed edge. Since every vertex belongs to \(F_{0}\), the boundary of each directed edge lies in \(F_{0}\); therefore, \(\Omega_{1}=F_{1}\), the directed edges form an orthonormal basis, and \(B_{1}(\epsilon)\) is the signed vertex–edge incidence matrix.}

Its quadratic form $z^{\top}M(\epsilon)z=\lVert B_{1}(\epsilon)z\rVert_{2}^{2}$ requires only a single sparse matrix-vector product, so the estimation procedure in Section~\ref{stochastic_trace} applies unchanged. We draw the same $s$ probe vectors, form the sample \eqref{eq_probe_sample}, and
summarize it using the same eight statistics. No additional reorientation of the directed edges is performed in the reported experiments. Concatenating these statistics over the $16$ element pairs
and $10$ cutoffs gives $16\times10\times8=1280$ additional features generated using random seed $20260714$.

\subsection{Protein Language Model Features}

Sequence embeddings from protein language models complement the structural descriptors above, since they encode evolutionary constraints that the geometry of a single crystal structure does not express. We use {ESM-2} \cite{lin2023evolutionary}, a transformer trained by masked language modeling on UniRef50, in the {\texttt{esm2\_t33\_650M\_UR50D} checkpoint} of $33$ layers and roughly $650$ million parameters, without fine-tuning. Each chain is embedded by averaging the per-residue representations of the final layer, resulting in a \(1{,}280\)-dimensional vector; chains longer than the model's context length are divided into
segments, each embedded in the same way and treated as separate units. The units of a complex are concatenated in the order fixed by the partner assignment and  truncated to three, giving a \(3,840\)-dimensional vector per complex.

\subsection{Machine Learning Parameters}
Binding-affinity prediction is performed using a Gradient Boosting Decision Tree (GBDT) model implemented  in \texttt{scikit-learn}. It builds an ensemble of regression trees sequentially, with each new tree fitted
to reduce the residual error of the current ensemble. We use the Huber loss with $\alpha=0.9$, $1200$ estimators, maximum tree depth $3$, minimum split size $4$, minimum leaf size $5$, learning rate $0.012$,
subsampling ratio $0.7$, and \texttt{max\_features = sqrt}. These hyperparameters were selected in a preliminary screen on the WT dataset using three seeds ($42$, $1234$, and $5678$) and were then fixed for all
reported datasets and cross-validation runs. Within each training fold, features and targets are standardized using training fold statistics. Feature selection is also performed within the fold, retaining the top $\min(1000,k)$ features, where $k$ is the number of available
features; the resulting transformations are then applied to the held-out fold. To account for stochasticity in GBDT training, the final evaluation is repeated over ten random seeds ($42$, $1234$, $5678$, $91011$, $121314$, $151617$, $181920$, $212223$, $242526$, and $272829$), and the reported performance is aggregated over these runs.

\section{Results}\label{results}
Predictive accuracy is measured by the Pearson correlation coefficient \(R_{p}\) and the mean absolute error as follows:
\begin{equation}\label{eq_metrics}
  R_{p} = \frac{\sum_{i=1}^{N}(y_{i}-\bar{y})(y_{i}^{p}-\bar{y}^{p})}
       {\sqrt{\sum_{i=1}^{N}(y_{i}-\bar{y})^{2}}
        \sqrt{\sum_{i=1}^{N}(y_{i}^{p}-\bar{y}^{p})^{2}}},
  \qquad
  \mathrm{MAE} = \frac{1}{N}\sum_{i=1}^{N}\big|y_{i}-y_{i}^{p}\big|,
\end{equation}
where \(N\) is the number of complexes, \(y_{i}\) and \(y_{i}^{p}\) are the experimental and predicted affinities of the \(i\)-th complex, and
\(\bar{y}\) and \(\bar{y}^{p}\) are their means. The first measures linear association between predicted and measured affinities; the second is absolute accuracy. Both are
computed under \(10\)-fold cross-validation on the combined set of PDBbind V2020 protein–protein complexes and the wild-type subset of SKEMPI v2.

Since the contribution is computational as well as predictive, we report the mean wall-clock time per complex, the total dataset time, and the corresponding \(R_p\) as the number of probe vectors in Theorem~\ref{thm_recovery} is varied. All timings are single-threaded.

For completeness, we also compare the proposed MFPHL estimator with a spectral PHL baseline computed from the exact eigenvalues of the same capped hyperdigraph operators. Both methods use the same raw hyperedge
construction, including the cap of $1000$ directed hyperedges per order and element-pair channel, and therefore target the same embedded order-$p$ Laplacian. Spectral PHL explicitly constructs this operator
and computes its eigenvalues, whereas MFPHL evaluates the corresponding quadratic forms through matrix-free projection. Here, exact refers to the eigendecomposition of this specified capped operator.

For each Laplacian order $p$, we first define the relative trace error for an individual Laplacian $L_p$ as
\[
\operatorname{Rel}(L_p)
=
\frac{
\left|
\frac{1}{s}\sum_{j=1}^{s} z_j^\top L_p z_j
-
\operatorname{tr}(L_p)
\right|
}{
|\operatorname{tr}(L_p)|+10^{-12}
}.
\]
The error reported below is the relative root-mean-square trace error,
computed separately for each order $p$:
\[
\operatorname{RelRMS}(L_p)
=
\sqrt{
\frac{1}{N_p}
\sum_{i=1}^{N_p}
\operatorname{Rel}\!\left(L_{p,i}\right)^2
},
\]
where $L_{p,i}$ denotes the $i$-th valid order-$p$ Laplacian, and $N_p$ is the number of operators included across PDB complexes, element-pair channels, and filtration cutoffs. 

The spectral comparison is limited only by the cost of dense eigendecomposition. Exact eigenvalues are computed when the dimension of the resulting embedded operator is at most $10^4$. Operators above
this threshold are not replaced by leading principal submatrices; they are omitted from the exact-eigenvalue comparison, and their number and fraction are reported separately. This threshold affects only the availability of the spectral baseline and does not modify the hyperdigraph operator used by MFPHL. Another separate size limit is used for the directed-bipartite baseline: for the C--C channel, at most $2.2\times10^4$ directed edges are retained. This limit applies only to the construction and is not used in the PHL and MFPHL comparison.

\subsection{Prediction Performance}
\label{sec:prediction}

We first evaluate predictive accuracy under $10$-fold cross-validation. Folds are drawn at random, and the same partition is reused across feature sets so that comparisons are paired. Table~\ref{tab:mfphl-performance} reports $R_p$ and MAE, while Figure~\ref{fig:performance} compares the results with previously reported methods and shows predicted versus measured affinities for PHL. PHL achieves $R_p=0.7337$, $0.6729$, and $0.6923$ on WT, V2020, and P2P$_{\mathrm{WT}}$, respectively, exceeding the reported PLNet, PML, and PLD-tree values where available~\cite{xu2025plnet,xu2026pml}. These literature values are included for reference and were not obtained by rerunning the comparator methods on the present cross-validation folds. For the scatter plots, the predictions for each complex are first averaged across the repeated random-seed runs, and the resulting out-of-fold predictions are then
aggregated across the cross-validation folds. The plots show that PHL captures the overall variation in measured affinities, with predictions tending toward the center of the observed range and larger deviations remaining near the extremes.

MFPHL retains much of this predictive performance at substantially lower feature generation cost, with an accuracy trade-off that depends on the dataset. On WT, its best correlation is \(0.7322\), only \(0.0015\) below PHL, and its MAE is also close to that of the spectral descriptor. This correlation difference is smaller
than the observed seed-to-seed variation, suggesting that the probe-based features preserve much of the information needed for prediction on WT{. Training-seed standard deviations alone, however, do not establish statistical equivalence}. On V2020 and P2P$_{\mathrm{WT}}$, however, PHL consistently achieves higher correlation and lower MAE, while MFPHL exceeds the reported PLNet correlations but falls below PLD-tree. {A plausible explanation is} that MFPHL uses moment estimates and probe statistics rather than the full set of eigenvalue-based descriptors: accurate estimation of these summaries does not necessarily preserve all predictively useful spectral information. Increasing the probe count produces no systematic improvement, and the accuracy gap persists on the two larger datasets throughout the tested range. Together with the runtime results in Table~\ref{tab:runtime}, these findings support MFPHL as a practical option when computational cost is a priority,
while PHL provides higher predictive accuracy when spectral computation is affordable.
\begin{figure}[!htbp]
    \centering
    \includegraphics[width=1\linewidth]{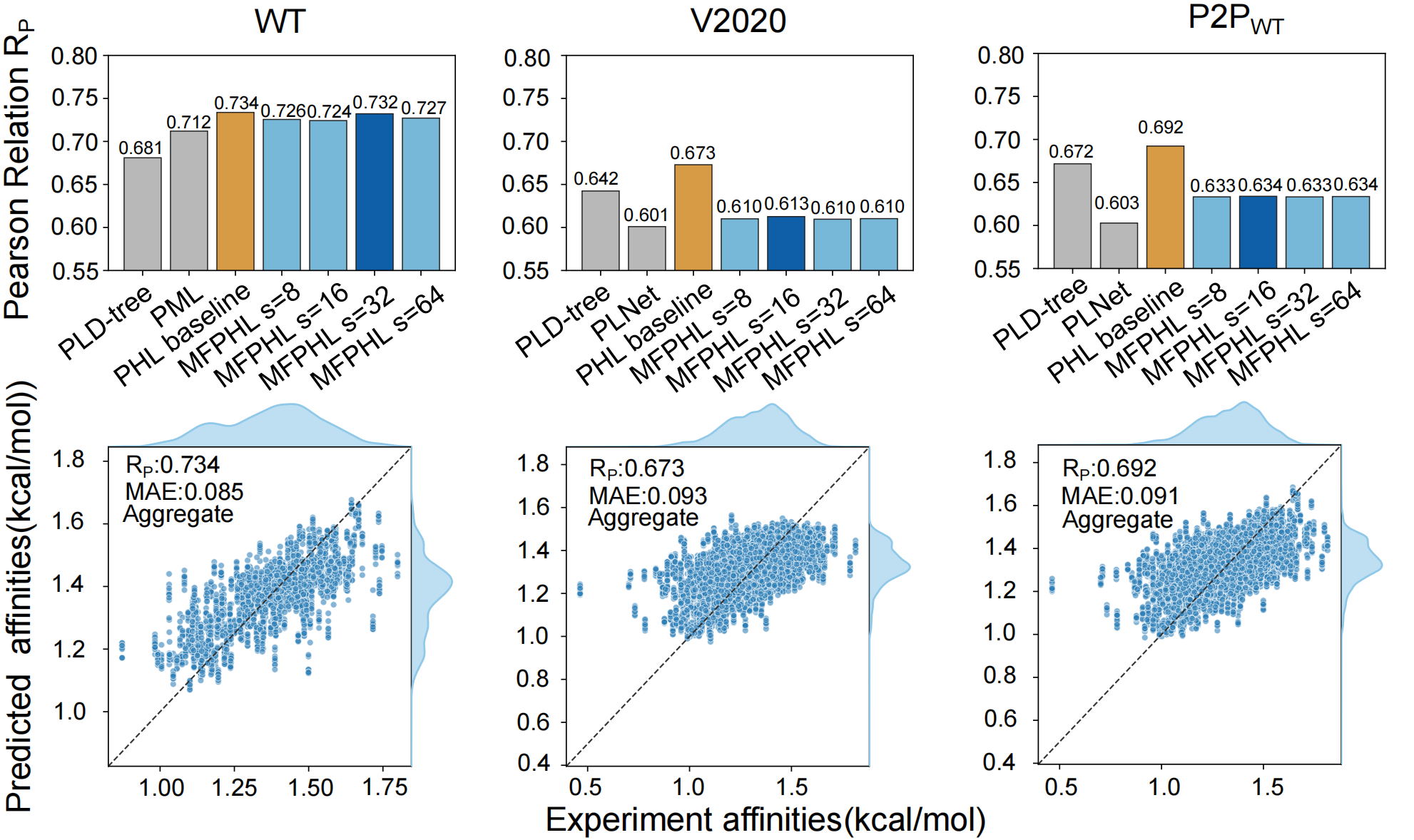}
    \caption{Prediction performance under \(10\)-fold cross-validation. Top: Pearson \(R_{p}\) on WT, V2020 and P2P\_WT for the probe descriptor at four probe counts, against the {eigenvalue-based PHL} baseline and the published PLNet~\cite{xu2025plnet}, PML~\cite{xu2026pml}{, and PLD-tree} results. Bottom:
    predicted versus measured binding affinities for PHL,
with marginal densities.}
    \label{fig:performance}
\end{figure}

\begin{table}[!htbp]
\centering
\small
\caption{Binding-affinity prediction performance, reported as mean \(\pm\) standard deviation over random seeds. The PHL baseline uses the eigenvalue-based persistent hyperdigraph descriptor; MFPHL uses the probe-based quadratic-form descriptor with \(s\) probe vectors.}
\label{tab:mfphl-performance}
\begin{tabular}{lccc}
\hline
Method & WT & V2020 & P2P$_{\mathrm{WT}}$ \\
\hline
\multicolumn{4}{l}{\text{Pearson} $R_p$} \\
PHL baseline  & 0.7337\,$\pm$\,0.0082 & 0.6729\,\(\pm\)0.0012                   & 0.6923\(\pm\)0.0010       \\
MFPHL, $s=8$  & 0.7256\,$\pm$\,0.0076 & 0.6101\,$\pm$\,0.0019 & 0.6333\,$\pm$\,0.0016 \\
MFPHL, $s=16$ & 0.7243\,$\pm$\,0.0061 & 0.6127\,$\pm$\,0.0023 & 0.6339\,$\pm$\,0.0018 \\
MFPHL, $s=32$ & 0.7322\,$\pm$\,0.0072 & 0.6096\,$\pm$\,0.0020 & 0.6332\,$\pm$\,0.0013 \\
MFPHL, $s=64$ & 0.7271\,$\pm$\,0.0044 & 0.6103\,$\pm$\,0.0021 & 0.6336\,$\pm$\,0.0021 \\
\hline
\multicolumn{4}{l}{\text{MAE}} \\
PHL baseline  & 0.0859\,$\pm$\,0.0010 & 0.0930 \(\pm\)0.0002                   & 0.0910\,\(\pm\) 0.0002                   \\
MFPHL, $s=8$  & 0.0872\,$\pm$\,0.0010 & 0.0964\,$\pm$\,0.0002 & 0.0948\,$\pm$\,0.0003 \\
MFPHL, $s=16$ & 0.0877\,$\pm$\,0.0006 & 0.0961\,$\pm$\,0.0003 & 0.0949\,$\pm$\,0.0003 \\
MFPHL, $s=32$ & 0.0869\,$\pm$\,0.0011 & 0.0963\,$\pm$\,0.0002 & 0.0949\,$\pm$\,0.0002 \\
MFPHL, $s=64$ & 0.0874\,$\pm$\,0.0006 & 0.0963\,$\pm$\,0.0002 & 0.0950\,$\pm$\,0.0003 \\
\hline
\end{tabular}
\end{table}

\subsection{Cost and Scaling of Feature Generation}\label{sec:runtime}

We next compare the computational cost of PHL and MFPHL using the same capped hyperdigraph construction. For each complex, element channel, and filtration cutoff, both methods begin from the same raw hyperdigraph and
target the same embedded operator. PHL then forms the corresponding Laplacian and computes eigenvalue-based statistics, whereas MFPHL replaces this eigendecomposition by $s$ Rademacher probe evaluations through sparse
boundary operations. All timings are single-threaded wall-clock measurements, with hyperdigraph construction and operator setup recorded separately from eigendecomposition or probe evaluation. ESM-2 embeddings
are precomputed and excluded from the timing comparison because they are shared by both methods.
Table~\ref{tab:runtime} reports the mean time per complex and total runtime on WT and V2020. On WT, the full PHL feature-generation pipeline requires approximately $363{,}497$ seconds, compared with $3{,}420$ seconds for MFPHL at $s=8$, corresponding to a speedup of about $106\times$; even at $s=64$, the speedup remains about $30\times$. On V2020, the corresponding speedups are approximately $121\times$ and $34\times$. These gains refer to replacing the full eigenvalue-statistics computation rather than the trace alone: by Lemma~\ref{lem_trace_frob}, the trace can already be computed exactly from the boundary representation without eigendecomposition. The exact-trace timing is therefore included as a separate baseline, while the main savings of MFPHL arise from avoiding the remaining spectral computations.

MFPHL runtime increases with the probe count because each additional probe requires further applications of the boundary operators. However, the total runtime grows more slowly than \(s\): an eightfold increase in probe count increases the overall cost by a factor of about \(3.6\) on both datasets, consistent with a combination of shared setup costs and probe-dependent work. The dependence also varies by topological order, with \(L_0\) timings remaining nearly constant while higher-order computations show a stronger increase. Figure~\ref{fig:scaling} illustrates the associated trade-off on WT: additional probes improve trace-estimation accuracy at greater computational cost. Since V2020 exhibits the same trend, only WT is shown. Together with Section~\ref{sec:prediction}, these results distinguish improved estimation accuracy from improved predictive accuracy: increasing \(s\) refines the trace estimates but does not systematically improve binding-affinity prediction. Small probe counts therefore capture much of MFPHL's practical computational advantage, while PHL provides the reference eigenvalue-based representation at substantially greater cost.

\begin{figure}[!htbp]
  \centering
  \includegraphics[width=0.9\linewidth]{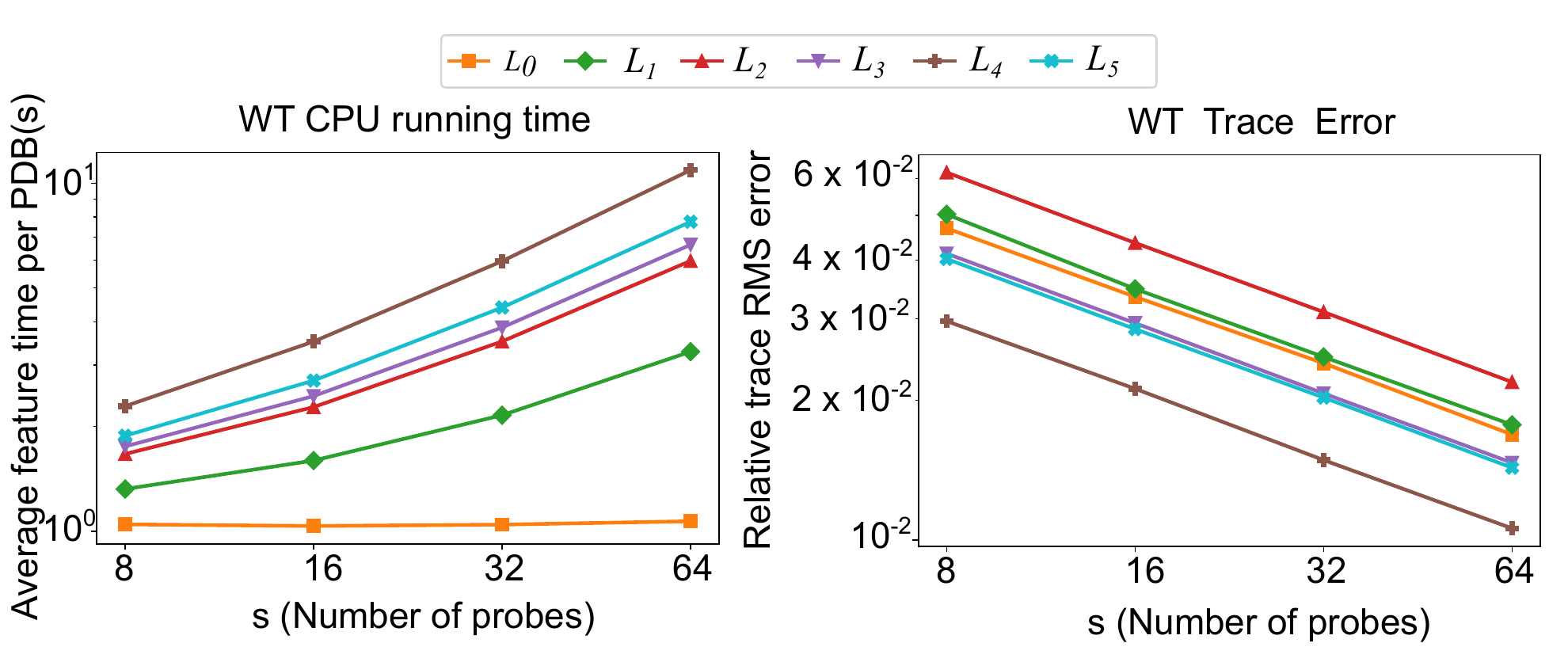}
  \caption{Scaling of feature-generation time and relative trace error with the number of probe vectors \(s\) on WT.}
  \label{fig:scaling}
\end{figure}

\begin{table}[!htbp] \centering \scriptsize \setlength{\tabcolsep}{2.2pt} \caption{Wall-clock time in seconds for persistent hyperdigraph feature generation. Here \(s\) denotes the number of probe vectors. MFPHL uses the quadratic-form estimator of Theorem~\ref{thm_hutchinson}, whereas the PHL baseline forms \(L_p(\epsilon)\) and computes the corresponding eigenvalue statistics.} \label{tab:runtime} 
\resizebox{\textwidth}{!}{% 
\begin{tabular}{lrrrrrrr} 
\hline 
Method & $b_{1}$ & $L_{0}$ & $L_{1}$ & $L_{2}$ & $L_{3}$ & $L_{4}$ & $L_{5}$ \\ 
\hline 
\multicolumn{8}{l}{\textbf{WT: mean per PDB complex}} \\ 
MFPHL, $s=8$ & 0.021 & 1.046 & 1.322 & 1.666 & 1.752 & 2.286 & 1.880 \\ 
MFPHL, $s=16$ & 0.028 & 1.035 & 1.595 & 2.273 & 2.455 & 3.507 & 2.709 \\ 
MFPHL, $s=32$ & 0.045 & 1.044 & 2.154 & 3.509 & 3.848 & 5.970 & 4.392 \\ 
MFPHL, $s=64$ & 0.076 & 1.068 & 3.280 & 5.980 & 6.653 & 10.905 & 7.746 \\ 
PHL baseline & 28.913 & 187.170 & 171.680 & 169.101 & 168.130 & 167.993 & 166.769 \\ 
\hline 
\multicolumn{8}{l}{\textbf{WT: total over all PDB complexes}} \\ 
MFPHL, $s=8$ & 7.044 & 358.725 & 453.342 & 571.577 & 600.888 & 784.084 & 644.833 \\ 
MFPHL, $s=16$ & 9.440 & 354.916 & 547.229 & 779.727 & 838.610 & 1,202.282 & 929.258 \\ 
MFPHL, $s=32$ & 15.404 & 358.183 & 738.840 & 1,203.702 & 1,319.872 & 2,047.744 & 1,506.523 \\ 
MFPHL, $s=64$ & 26.120 & 366.205 & 1,124.979 & 2,051.172 & 2,282.009 & 3,740.344 & 2,657.031 \\ 
PHL baseline & 9,917.244 & 64,199.264 & 58,886.346 & 58,001.621 & 57,668.750 & 57,621.534 & 57,201.765 \\ \hline 
\multicolumn{8}{l}{\textbf{V2020: mean per PDB complex}} \\ 
MFPHL, $s=8$ & 0.022 & 1.007 & 1.271 & 1.601 & 1.671 & 2.167 & 1.787 \\ 
MFPHL, $s=16$ & 0.030 & 0.995 & 1.542 & 2.195 & 2.334 & 3.324 & 2.573 \\ 
MFPHL, $s=32$ & 0.049 & 1.007 & 2.083 & 3.390 & 3.669 & 5.648 & 4.160 \\ 
MFPHL, $s=64$ & 0.084 & 1.030 & 3.174 & 5.781 & 6.336 & 10.295 & 7.325 \\ 
PHL baseline & 30.828 & 212.948 & 182.359 & 181.165 & 180.850 & 180.813 & 180.331 \\ 
\hline 
\multicolumn{8}{l}{\textbf{V2020: total over all PDB complexes}} \\ 
MFPHL, $s=8$ & 52.579 & 2,384.761 & 3,010.111 & 3,791.161 & 3,956.982 & 5,131.319 & 4,231.633 \\ 
MFPHL, $s=16$ & 70.838 & 2,356.952 & 3,650.636 & 5,197.592 & 5,527.054 & 7,871.015 & 6,091.745 \\ 
MFPHL, $s=32$ & 116.377 & 2,383.866 & 4,933.214 & 8,027.410 & 8,687.102 & 13,373.361 & 9,851.957 \\ 
MFPHL, $s=64$ & 199.235 & 2,438.224 & 7,515.032 & 13,690.547 & 15,002.696 & 24,379.209 & 17,344.979 \\ 
PHL baseline & 72,977.510 & 504,260.908 & 431,827.030 & 428,999.035 & 428,252.269 & 428,165.224 & 427,024.252 \\ 
\hline 
\end{tabular}% 
} \end{table}

\section{Discussion}\label{discussion}
The preceding results establish the predictive value of PHL and the computational advantages of MFPHL. We now examine
MFPHL in greater detail to understand its estimation behavior, the information retained by its descriptors, and the choices that guide their construction. Section~\ref{subsec:synthetic_hyperdigraph_speed} evaluates the accuracy-runtime trade-off on synthetic hyperdigraphs, where the exact trace is available. Section~\ref{subsec:why_works} explores the information captured by the two spectral moments and its relevance to prediction. 
Finally, Section~\ref{single-order} evaluates the contributions of individual topological orders to MFPHL prediction.

\subsection{Synthetic Benchmark}
\label{subsec:synthetic_hyperdigraph_speed}

We compare exact and matrix-free computations on synthetic directed hyperdigraphs to isolate the stochastic estimator from protein-specific effects. For each of ten random seeds, points are sampled uniformly from $[0,1]^3$ with $64$, $96$, or $128$ points and cutoff $0.30$, with no hyperedge cap. Each clique is directed by ordering its vertices according to projection onto $(1,1.5,2)/\lVert(1,1.5,2)\rVert_2$. Since every face of a retained clique is also retained, the construction is face-closed and $\Omega_p=F_p$, so the exact and matrix-free computations use the
same signed boundary operators without projection onto $\Omega_p$. We construct through order six so that $L_5$ includes the upper $B_6B_6^\top$ contribution.

Table~\ref{tab:synthetic-operators} summarizes the dimensions and sparsities of the resulting operators. The operator size grows substantially with the number of sampled points, with the largest spaces typically appearing at intermediate orders. In the face-closed setting, $\operatorname{nnz}(B_p)=(p+1)n_p$ for $p\geq1$. We further define $N_p=\operatorname{nnz}(B_p)+\operatorname{nnz}(B_{p+1})$ as the sparse
boundary work associated with one matrix-free probe. The table therefore shows both the growth of the exact operators and the corresponding matrix-free workload across orders.

\begin{table}[!htbp]
\centering
\small
\caption{Synthetic face-closed hyperdigraph operators, averaged over ten
point clouds. No hyperedge cap is applied. Here
$N_p=\operatorname{nnz}(B_p)+\operatorname{nnz}(B_{p+1})$.}
\label{tab:synthetic-operators}
\begin{tabular}{rrrrrr}
\hline
Points & Order & $n_p$ & $\operatorname{nnz}(B_p)$ &
$\operatorname{nnz}(L_p)$ & $N_p$ \\
\hline
64  & $L_0$ & 64.0   & 0.0    & 414.7  & 351.6 \\
64  & $L_1$ & 175.8  & 351.6  & 1169.4 & 907.2 \\
64  & $L_2$ & 185.2  & 555.6  & 744.0  & 998.8 \\
64  & $L_3$ & 110.8  & 443.2  & 339.2  & 649.2 \\
64  & $L_4$ & 41.2   & 206.0  & 100.4  & 258.2 \\
64  & $L_5$ & 8.7    & 52.2   & 16.5   & 57.8 \\
\hline
96  & $L_0$ & 96.0   & 0.0    & 826.2  & 730.4 \\
96  & $L_1$ & 365.2  & 730.4  & 3220.2 & 2313.5 \\
96  & $L_2$ & 527.7  & 1583.1 & 2765.7 & 3290.7 \\
96  & $L_3$ & 426.9  & 1707.6 & 1672.1 & 2808.6 \\
96  & $L_4$ & 220.2  & 1101.0 & 699.2  & 1555.2 \\
96  & $L_5$ & 75.7   & 454.2  & 200.5  & 573.9 \\
\hline
128 & $L_0$ & 128.0  & 0.0    & 1403.9 & 1276.4 \\
128 & $L_1$ & 638.2  & 1276.4 & 7143.4 & 4842.8 \\
128 & $L_2$ & 1188.8 & 3566.4 & 7630.6 & 8494.0 \\
128 & $L_3$ & 1231.9 & 4927.6 & 5727.1 & 9155.1 \\
128 & $L_4$ & 845.5  & 4227.5 & 3159.1 & 6767.9 \\
128 & $L_5$ & 423.4  & 2540.4 & 1342.6 & 3649.2 \\
\hline
\end{tabular}
\end{table}

Figure~\ref{fig:synthetic_data} evaluates orders $p=0,\ldots,5$ using $s\in\{8,16,32,64,128,256\}$ independent Rademacher probes. As the number of probes increases, the matrix-free runtime grows approximately linearly, while the relative RMS trace error decreases consistently across all orders. The same convergence behavior is observed for the second spectral moment: its relative RMS error decreases from
approximately $1.9\%$ at $s=8$ to $0.33\%$ at $s=256$, while the trace error decreases from about $1.1\%$ to $0.19\%$. The synthetic benchmark confirms the expected accuracy-cost tradeoff of the
matrix-free estimator. In the binding-affinity experiments, however, predictive accuracy changes little over the tested probe counts, indicating that increased numerical precision does not necessarily translate into improved prediction in the present setting.

\begin{figure}[t]
\centering
\includegraphics[width=\linewidth]{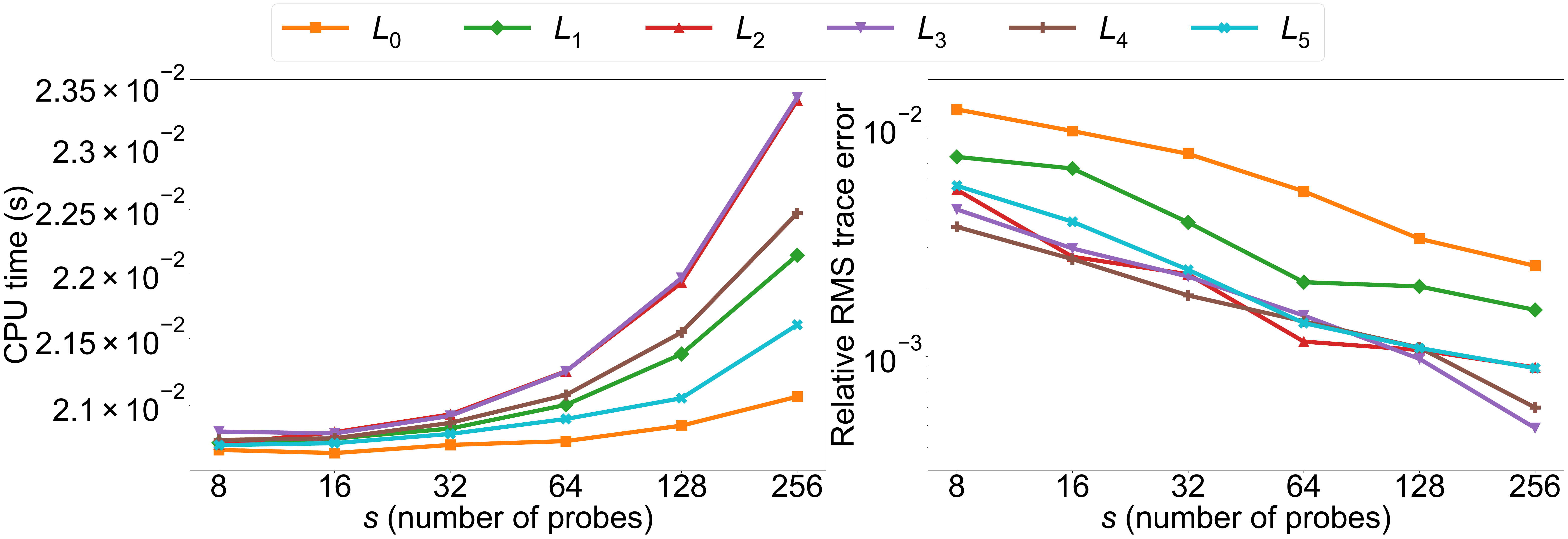}
\caption{
Synthetic matrix-free benchmark for orders $L_0$--$L_5$. Left: CPU time as a function of the number of probe vectors. Right: relative RMS trace error. Increasing the probe count increases the computational cost approximately linearly while reducing the stochastic trace-estimation error.
}
\label{fig:synthetic_data}
\end{figure}

\subsection{Topology and Protein-Language Feature Ablation}\label{subsec:why_works}

To examine how the matrix-free topological descriptors behave alone and together with sequence-derived information, we compare four feature settings using the same tuned Huber GBDT model and the same 10-fold cross-validation protocol: persistent hyperdigraph features, directed-bipartite features, persistent hyperdigraph features with ESM, and directed-bipartite features with ESM in Table \ref{tab:topology-esm-ablation}. For all features, we use \(s=16\) probe vectors.

The topology-only descriptors provide useful but moderate predictive signals. On WT, the directed-bipartite representation outperforms MFPHL, with $R_p=0.5326$ versus $0.4913$, whereas on V2020 the higher-order
MFPHL representation is stronger. Adding ESM substantially improves both topological representations across all three datasets, and the directed--bipartite–ESM combination gives the strongest performance among
the four settings considered here. Because an ESM-only model is not included in this ablation, these results characterize how the proposed topological descriptors behave when combined with ESM rather than isolating their incremental gain over sequence information alone. Published ESM-only results from related models are available in Table~5 of \cite{xu2025plnet} for reference.
\begin{table}[!htbp]
\centering
\small
\caption{Topology and ESM feature ablation for binding-affinity prediction,
reported as mean \(\pm\) standard deviation over 10 random seeds.}
\label{tab:topology-esm-ablation}
\begin{tabular}{lccc}
\hline
Method & WT & V2020 & P2P$_{\mathrm{WT}}$ \\
\hline
\multicolumn{4}{l}{\text{Pearson} $R_p$} \\
MFPHL \(L_0\)--\(L_5\)
& 0.4913\,$\pm$\,0.0148
& 0.3717\,$\pm$\,0.0049
& 0.4020\,$\pm$\,0.0046 \\
Directed-bipartite \(d_1\)
& 0.5326\,$\pm$\,0.0109
& 0.2851\,$\pm$\,0.0077
& 0.3401\,$\pm$\,0.0040 \\
MFPHL \(L_0\)--\(L_5\) + ESM
& 0.7004\,$\pm$\,0.0067
& 0.6075\,$\pm$\,0.0022
& 0.6285\,$\pm$\,0.0022 \\
Directed-bipartite \(d_1\) + ESM
& 0.7276\,$\pm$\,0.0078
& 0.6124\,$\pm$\,0.0026
& 0.6355\,$\pm$\,0.0015 \\
\hline
\multicolumn{4}{l}{\text{MAE}} \\
MFPHL \(L_0\)--\(L_5\)
& 0.1152\,$\pm$\,0.0016
& 0.1157\,$\pm$\,0.0005
& 0.1155\,$\pm$\,0.0004 \\
Directed-bipartite \(d_1\)
& 0.1135\,$\pm$\,0.0009
& 0.1201\,$\pm$\,0.0003
& 0.1195\,$\pm$\,0.0002 \\
MFPHL \(L_0\)--\(L_5\) + ESM
& 0.0899\,$\pm$\,0.0005
& 0.0965\,$\pm$\,0.0003
& 0.0953\,$\pm$\,0.0003 \\
Directed-bipartite \(d_1\) + ESM
& 0.0862\,$\pm$\,0.0009
& 0.0961\,$\pm$\,0.0002
& 0.0947\,$\pm$\,0.0002 \\
\hline
\end{tabular}
\end{table}
% \subsection{The Absence of Betti Numbers}\label{subsec:no_betti}

% We next examined whether explicit Betti-number information improves the proposed feature representation. For this test, direct Betti descriptors were computed by rank calculation for the element-specific persistent Laplacians over the same \(L_0,\ldots,L_5\) levels and cutoffs \(3,\ldots,12\), and then concatenated with the MFPHL hyperdigraph descriptor, the directed-bipartite descriptor, and ESM features. Using the same tuned Huber GBDT and the same 10-seed stratified 10-fold protocol, adding Betti numbers did not improve prediction performance. The resulting \(R_p\) values were \(0.7239\) on WT, \(0.6127\) on V2020, and \(0.6338\) on P2P\(_{\mathrm{WT}}\), with MAE values \(0.0877\), \(0.0961\), and \(0.0949\), respectively. These values are essentially unchanged and, in WT, slightly lower compared with the corresponding descriptors without Betti numbers\revised{: \(R_p\) changes by at most \(0.0004\) and MAE is unchanged (Table~\ref{tab:mfphl-cumulative-order-ablation}, \(L_0\)--\(L_5\)). One possible explanation, not tested by this experiment, is that the connectivity and cavity information in the Betti counts is already reflected indirectly in the other descriptors. The absence of a gain does not by itself establish this}.

\subsection{Single-order Hyperdigraph Contributions}
\label{single-order}

To examine the contribution of individual hyperdigraph orders, we evaluate descriptors based on $L_0,\ldots,L_5$, each combined with the same directed-bipartite and ESM features. As shown in
Table~\ref{tab:single-order}, the six variants achieve similar predictive performance across all three datasets. On WT, $R_p$ ranges from $0.7251$ for $L_0$ to $0.7293$ for $L_1$; on V2020, it ranges from
$0.6117$ to $0.6137$; and on P2P$_{\mathrm{WT}}$, from $0.6349$ to $0.6357$. These differences are comparable to or smaller than the variation across training seeds.

The consistent performance across orders indicates that useful structural information is not confined to a single hyperdigraph level. At the same time, the small differences among the single-order models do not provide a clear basis for selecting one order in preference to the others. Accordingly, in the full PHL and MFPHL representations, we retain $L_0$--$L_5$ together, allowing the model to make use of information
represented across multiple hyperdigraph orders rather than relying on a single prescribed order.

\begin{table}[!htbp]
\centering
\small
\setlength{\tabcolsep}{6pt}
\caption{Single-order ablation. Each setting uses one hyperdigraph Laplacian order together with the directed-bipartite and ESM descriptors, with \(s=16\) probe vectors and the same tuned Huber GBDT model. Values are reported as mean \(\pm\) standard deviation over ten random seeds.}
\label{tab:single-order}
\begin{tabular}{lccc}
\hline
Orders & WT & V2020 & P2P$_{\mathrm{WT}}$ \\
\hline
\multicolumn{4}{l}{\(R_p\)} \\
\(L_0\) & 0.7251\,$\pm$\,0.0040 & 0.6137\,$\pm$\,0.0015 & 0.6356\,$\pm$\,0.0015 \\
\(L_1\) & 0.7293\,$\pm$\,0.0067 & 0.6126\,$\pm$\,0.0023 & 0.6357\,$\pm$\,0.0018 \\
\(L_2\) & 0.7281\,$\pm$\,0.0060 & 0.6133\,$\pm$\,0.0021 & 0.6354\,$\pm$\,0.0015 \\
\(L_3\) & 0.7284\,$\pm$\,0.0058 & 0.6135\,$\pm$\,0.0019 & 0.6356\,$\pm$\,0.0017 \\
\(L_4\) & 0.7289\,$\pm$\,0.0056 & 0.6123\,$\pm$\,0.0026 & 0.6350\,$\pm$\,0.0014 \\
\(L_5\) & 0.7286\,$\pm$\,0.0052 & 0.6117\,$\pm$\,0.0027 & 0.6349\,$\pm$\,0.0013 \\
\hline
\multicolumn{4}{l}{MAE} \\
\(L_0\) & 0.0869\,$\pm$\,0.0004 & 0.0960\,$\pm$\,0.0002 & 0.0947\,$\pm$\,0.0002 \\
\(L_1\) & 0.0859\,$\pm$\,0.0009 & 0.0961\,$\pm$\,0.0003 & 0.0947\,$\pm$\,0.0002 \\
\(L_2\) & 0.0866\,$\pm$\,0.0008 & 0.0961\,$\pm$\,0.0002 & 0.0947\,$\pm$\,0.0002 \\
\(L_3\) & 0.0867\,$\pm$\,0.0008 & 0.0960\,$\pm$\,0.0002 & 0.0946\,$\pm$\,0.0002 \\
\(L_4\) & 0.0862\,$\pm$\,0.0007 & 0.0961\,$\pm$\,0.0002 & 0.0947\,$\pm$\,0.0002 \\
\(L_5\) & 0.0865\,$\pm$\,0.0006 & 0.0962\,$\pm$\,0.0003 & 0.0948\,$\pm$\,0.0002 \\
\hline
\end{tabular}
\end{table}

\section{Conclusion}\label{Conclusion}
This work introduced persistent hyperdigraph learning (PHL) for protein-protein binding affinity prediction and developed its matrix-free counterpart, MFPHL, to reduce feature-generation costs. PHL incorporates directed and many-body relationships into multiscale spectral representations of molecular interfaces. MFPHL replaces explicit Laplacian assembly and eigendecomposition with randomized probing through sparse boundary operators. Its cost per probe is governed by operator sparsity, with dimension-independent probe bounds for prescribed relative accuracy and confidence of the trace estimate. In our experiments, this formulation reduces feature-generation time by approximately two orders of magnitude at a small probe count, making the exploration of higher topological orders substantially more affordable.

Among the methods compared, PHL achieves the highest correlation on all three datasets, although the comparator values are published results rather than paired reruns. On WT, the gap between MFPHL and PHL is smaller than the training-seed variability. However, on V2020 and P2P\(_{\mathrm{WT}}\), the matrix-free descriptor exhibits a clear accuracy gap relative to PHL, although it exceeds the reported PLNet correlations. Increasing the probe count does not systematically improve predictions over the tested range. These results demonstrate a dataset-dependent trade-off between the computational efficiency of MFPHL and the predictive accuracy of the eigenvalue-based PHL descriptor.

The MFPHL ablation studies further clarify the role of the topological information. Adding persistent Betti numbers changes \(R_p\) by at most \(0.0004\), while first-order bipartite features combined with sequence embeddings achieve performance comparable to the full high-order descriptor across all datasets considered. Thus, making higher-order features computationally accessible does not necessarily translate into improved prediction. These findings apply to the tested MFPHL descriptors and do not establish that higher-order structure is uninformative in PHL more generally. Its contribution may be redundant with lower-order and sequence features or insufficiently resolved by the moment estimates and probe statistics used here.

Future work will explore richer matrix-free spectral summaries, including higher moments and spectral density estimates, to assess whether they narrow the accuracy gap between MFPHL and PHL and better distinguish contributions from higher topological orders. The same approach may also extend to other persistent operators, provided their factorizations support efficient matrix-vector products. Finally, reducing the need for truncation in large, dense channels would allow a more complete evaluation of higher-order representations and help separate the effects of computational approximations from those of descriptor design.

\section*{Data \& Code Availability}

The code needed to reproduce this paper's results is available on request
from the corresponding author. The SKEMPI v2 dataset is available at
\url{https://life.bsc.es/pid/skempi2/}, and the PDBbind V2020 dataset at
\url{http://www.pdbbind.org.cn/}.

\section*{Acknowledgments} J. C. was partially supported by the National Science Foundation under awards DMS/NIGMS 2553768, DMS 2514195, the Arkansas Biosciences Institute Gap Grant, and the computational resources provided by the Arkansas High Performance Computing Center (AHPCC). C. W. was partially supported   by the National Science Foundation under award DMS-2206332.

\bibliographystyle{plain}
\bibliography{main}

\end{document}

% --- supplement: supplement.tex ---

\maketitle
\section{Proof of Theorem~\ref{thm_hutchinson}}\label{proof_thm_hutchinson}

\begin{proof}[Proof of Theorem~\ref{thm_hutchinson}]
Write \(L=L_p^{a,b}\). Since the entries are independent
Rademacher variables, \(\mathbb{E}[z_i]=0\) and \(z_i^2=1\).
Thus \(\mathbb{E}[z_i z_j]=\delta_{ij}\), giving
\(\mathbb{E}[zz^\top]=I_n\). Consequently,
\[
  \mathbb{E}[z^\top Lz]
  =\operatorname{tr}\bigl(L\,\mathbb{E}[zz^\top]\bigr)
  =\operatorname{tr}(L).
\]

For the variance, symmetry of \(L\) and \(z_i^2=1\) give
\[
  z^\top Lz-\operatorname{tr}(L)
  =2\sum_{i<j}L_{ij}z_i z_j.
\]
For \(i<j\) and \(k<\ell\), independence and the zero means imply
\[
  \mathbb{E}[z_i z_j z_k z_\ell]
  =
  \begin{cases}
    1, & (i,j)=(k,\ell),\\
    0, & \text{otherwise}.
  \end{cases}
\]
It follows that
\[
  \operatorname{Var}(z^\top Lz)
  =4\sum_{i<j}L_{ij}^{2}
  =2\left(\|L\|_F^{2}-\sum_i L_{ii}^{2}\right)
  \le 2\|L\|_F^{2}.
\]
Since \(L\) is symmetric, \(\|L\|_F^2=\sum_i\lambda_i^2\).
Hence the first two spectral moments satisfy
\[
  \sum_i\lambda_i=\mathbb{E}[z^\top Lz],
  \qquad
  \sum_i\lambda_i^2
  =\frac12\operatorname{Var}(z^\top Lz)+\sum_iL_{ii}^2.
\]

For the matrix-free factorization, substitute
\[
  L={B_{p+1}^{a,b}(B_{p+1}^{a,b})^\top
    +(B_p^a)^\top B_p^a}
\]
to obtain
\[
  z^\top Lz
  ={\|(B_{p+1}^{a,b})^\top z\|_2^2
   +\|B_p^a z\|_2^2}.
\]
The diagonal correction is also available directly from these
operators, since
\[
  L_{ii}
  ={\sum_k(B_{p+1}^{a,b})_{ik}^{2}
   +\sum_k(B_p^a)_{ki}^{2}}.
\]
Computing all diagonal entries therefore requires
\(\mathcal{O}(N+n)\) operations without assembling \(L\).

Each quadratic-form evaluation uses two sparse matrix--vector
products. The products and squared norms can be evaluated in
\(\mathcal{O}(N+n)\) operations by storing only active output
coordinates, of which there are at most \(N\).
The boundary matrices require \(\mathcal{O}(N)\) storage in a
coordinate sparse representation, and the probes and working
vectors require \(\mathcal{O}(N+n)\) total storage.
No explicit Laplacian assembly is needed.
\end{proof}

\begin{remark}[Gaussian probes]
Gaussian probes \(g\sim\mathcal{N}(0,I_n)\) also satisfy \(\mathbb{E}[gg^\top]=I_n\), and hence \(\mathbb{E}[g^\top Lg]=\operatorname{tr}(L)\).
Writing \(L=U\Lambda U^\top\), orthogonal invariance gives \(U^\top g\sim\mathcal{N}(0,I_n)\). Independence of its entries
and \(\operatorname{Var}(g_i^2)=2\) yields
\(
  \operatorname{Var}(g^\top Lg)
  =2\sum_i\lambda_i^2
  =2\|L\|_F^2.
\)
Thus, Rademacher probes have strictly smaller variance whenever \(L\) has a nonzero diagonal entry, while the variance bound \(2\|L\|_F^2\) applies to both distributions.
\end{remark}
% \begin{proof}[Proof of Theorem~\ref{thm_hutchinson}]
% Write \(L = L_{p}^{a,b}\). Since the entries of $z$ are independent standard normal variables, \(\mathbb{E}[z_{i}z_{j}] = \delta_{ij}\) and hence \(\mathbb{E}[zz^{\top}] = I_{n}\).

% For the mean, note that \(z^{\top}Lz\) is a scalar and therefore equals its own trace, so by cyclicity \(z^{\top}Lz = \operatorname{tr}(Lzz^{\top})\). The trace is linear and $L$ is deterministic, so expectation commutes with it:
% \begin{equation*}
%   \mathbb{E}\big[z^{\top}Lz\big]
%   = \operatorname{tr}\big(L\,\mathbb{E}[zz^{\top}]\big)
%   = \operatorname{tr}(L).
% \end{equation*}

% For the variance, write \(L = U\Lambda U^{\top}\) with \(U\) orthogonal and \(\Lambda = \operatorname{diag}(\lambda_{1},\dots,\lambda_{n})\), and set \(w = U^{\top}z\). The standard Gaussian law is invariant under orthogonal transformations, so \(w \sim \mathcal{N}(0,I_{n})\) and
% \(z^{\top}Lz = \sum_{i}\lambda_{i}w_{i}^{2}\). The variables \(w_{i}^{2}\) are independent with \(\operatorname{Var}(w_{i}^{2}) = 2\), whence
% \begin{equation*}
%   \operatorname{Var}\big(z^{\top}Lz\big)
%   = 2\sum_{i}\lambda_{i}^{2}
%   = 2\lVert L\rVert_{F}^{2}.
% \end{equation*}

% For the factorization, substitute \(L = (B_{p+1}^{a,b})^{\top}B_{p+1}^{a,b} + B_{p}^{a}(B_{p}^{a})^{\top}\) and use \(u^{\top}u = \lVert u\rVert_{2}^{2}\):
% \begin{align*}
%   z^{\top}Lz
%   &= z^{\top}\big(B_{p+1}^{a,b}\big)^{\top}B_{p+1}^{a,b}z
%    + z^{\top}B_{p}^{a}\big(B_{p}^{a}\big)^{\top}z = \big\lVert B_{p+1}^{a,b}z\big\rVert_{2}^{2}
%    + \big\lVert \big(B_{p}^{a}\big)^{\top}z\big\rVert_{2}^{2}.
% \end{align*}
% A sparse matrix-vector product performs one multiplication and one addition per non-zero entry, and applying a transpose costs the same by traversing the stored matrix in the opposite orientation, so the two products together cost \(\mathcal{O}(\operatorname{nnz}(B_{p+1}^{a,b}) + \operatorname{nnz}(B_{p}^{a}))\) operations and the two norms a further \(\mathcal{O}(n)\). Storage is that of the two sparse matrices together with a constant number of vectors of length \(n\).
% \end{proof}

% \begin{remark}
% Rademacher probes, with independent entries equal to \(\pm 1\) with probability \(1/2\) each, also satisfy \(\mathbb{E}[zz^{\top}] = I_{n}\), and have the smaller variance \(2\big(\lVert L\rVert_{F}^{2} - \sum_{i}L_{ii}^{2}\big)\).
% \end{remark}

\section{Proof of Proposition~\ref{prop_annihilate}}
\label{proof_prop_annihilate}

\begin{proof}
Both summands of \eqref{eq_quadratic_form} are squared Euclidean norms
and hence non-negative, so $z^{\top}L_{p}^{a,b}z=0$ if and only if
\begin{equation*}
  {\big(B_{p+1}^{a,b}\big)^{\top}z=0
  \qquad\text{and}\qquad
  B_{p}^{a}z=0} .
\end{equation*}
In operator terms, these say {$\big(d_{p+1}^{a,b}\big)^{\ast}z=0$ and
$d_{p}^{a}z=0$}, that is
$z\in\ker d_{p}^{a}\cap\ker\big(d_{p+1}^{a,b}\big)^{\ast}$, which by
Theorem~\ref{thm_persistent_hodge} is exactly $\ker\Delta_{p}^{a,b}$.

For the second claim, write $z=z_{\mathrm{harm}}+z_{\perp}$ with
$z_{\mathrm{harm}}\in\ker\Delta_{p}^{a,b}$ and
$z_{\perp}\in\Ima d_{p+1}^{a,b}\oplus\Ima\big(d_{p}^{a}\big)^{\ast}$,
the decomposition being orthogonal by
Theorem~\ref{thm_persistent_hodge}. Since
$\Delta_{p}^{a,b}z_{\mathrm{harm}}=0$,
\begin{equation*}
  z^{\top}L_{p}^{a,b}z
  =\big(z_{\mathrm{harm}}+z_{\perp}\big)^{\top}
   L_{p}^{a,b}\big(z_{\mathrm{harm}}+z_{\perp}\big)
  =z_{\perp}^{\top}L_{p}^{a,b}z_{\perp},
\end{equation*}
the cross terms vanishing because
$L_{p}^{a,b}z_{\mathrm{harm}}=0$ and $L_{p}^{a,b}$ is self-adjoint.
\end{proof}

\section{Proof of Corollary~\ref{cor_invisible}}
\label{proof_cor_invisible}

\begin{proof}
Let $0=\lambda_{1}=\cdots=\lambda_{\beta}<\lambda_{\beta+1}\le\cdots
\le\lambda_{n}$ be the eigenvalues of $L_{p}^{a,b}$, with
$\beta=\beta_{p}^{a,b}$ by Theorem~\ref{thm_persistent_hodge}. For a
function $f$ defined on the spectrum, the spectral mapping theorem gives
\begin{equation*}
  \operatorname{tr}\big(f(L_{p}^{a,b})\big)
  =\sum_{i=1}^{n}f(\lambda_{i})
  =\beta f(0)+\sum_{i>\beta}f(\lambda_{i}),
\end{equation*}
so if $f(0)=0$, the first term vanishes, and the trace depends only on the
multiset $\{\lambda_{i}:\lambda_{i}>0\}$ instead of $\beta_{p}^{a,b}$.
\end{proof}

\section{Proof of Theorem~\ref{thm_recovery}}\label{proof_thm_recovery}

\begin{proof}[Proof of Theorem~\ref{thm_recovery}]
Unbiasedness follows from Theorem~\ref{thm_hutchinson} and linearity of
expectation, since each summand has mean $\operatorname{tr}(L)$. The
summands are independent and identically distributed. For the Rademacher
or standard Gaussian probes {assumed in the theorem}, the variance bound {of Theorem~\ref{thm_hutchinson} and the Gaussian remark} gives
\begin{equation*}
  \operatorname{Var}(T_{s})
  = \frac{1}{s}\operatorname{Var}\big(z^{\top}Lz\big)
  \le \frac{2\lVert L\rVert_{F}^{2}}{s}.
\end{equation*}

Because $L$ is positive semidefinite its eigenvalues are non-negative,
so every cross term in the expansion of
$\big(\sum_{i}\lambda_{i}\big)^{2}$ is non-negative and
\begin{equation*}
  \lVert L\rVert_{F}^{2}
  = \sum_{i}\lambda_{i}^{2}
  \;\leq\; \Big(\sum_{i}\lambda_{i}\Big)^{2}
  = \operatorname{tr}(L)^{2}.
\end{equation*}

{Since $L$ is positive semidefinite and $L\ne0$ by hypothesis, $\operatorname{tr}(L)>0$. (If $L=0$, every probe value is zero and $T_{s}=\operatorname{tr}(L)$ exactly, but the event in \eqref{eq_probe_count} then has probability one, which is why this case is excluded.)} Chebyshev's inequality applied to $T_{s}$ with
deviation $\varepsilon\operatorname{tr}(L)$ gives
\begin{equation*}
  \mathbb{P}\Big(\big|T_{s} - \operatorname{tr}(L)\big|
  \geq \varepsilon\operatorname{tr}(L)\Big)
  \;\leq\; \frac{2\lVert L\rVert_{F}^{2}}
                {s\,\varepsilon^{2}\operatorname{tr}(L)^{2}}
  \;\leq\; \frac{2}{s\,\varepsilon^{2}},
\end{equation*}
which is at most $\delta$ precisely when
$s \geq 2/(\delta\varepsilon^{2})$. The requirement involves only
$\varepsilon$ and $\delta$, and no dimension.
\end{proof}

\begin{remark}
The final inequality is tight only when $L$ has rank one. Writing
$r(L) = \operatorname{tr}(L)^{2}/\lVert L\rVert_{F}^{2} \in [1,n]$, the
bound reads $2/(s\varepsilon^{2}r(L))$, so
$s \geq 2/(\delta\varepsilon^{2}r(L))$ already suffices; the stated form
is uniform over all operators and pessimistic for those arising here.
Sharper bounds, with $\log(1/\delta)$ in place of $1/\delta$, follow from
concentration inequalities \cite{avron2011randomized,roosta2015improved}.
\end{remark}

\section{Proof of Theorem~\ref{thm_sandwich}}
\label{proof_thm_sandwich}

\begin{proof}
Write $L=L_{p}^{a,b}$, $B=B_{p+1}^{a,b}$ and $C=B_{p}^{a}$, so that by
Definition~\ref{def_persistent_laplacian}
\begin{equation}\label{eq_sm_L}
  L={BB^{\top}+C^{\top}C},
\end{equation}
a symmetric positive semidefinite matrix of order
$n=\dim\Omega_{p}^{a}$ with eigenvalues
$0\le\lambda_{1}\le\cdots\le\lambda_{n}$, of which exactly
$\beta:=\beta_{p}^{a,b}$ vanish by
Theorem~\ref{thm_persistent_hodge}. Recall
\begin{equation}\label{eq_sm_sr}
  r(L)=\frac{\operatorname{tr}(L)^{2}}{\lVert L\rVert_{F}^{2}}
      =\frac{\big(\sum_{i}\lambda_{i}\big)^{2}}{\sum_{i}\lambda_{i}^{2}},
  \qquad
  \sigma(M)=\sqrt{\lVert M\rVert_{1}\lVert M\rVert_{\infty}},
  \qquad
  \Lambda_{p}=\sigma(B)^{2}+\sigma(C)^{2},
\end{equation}
with $\lVert M\rVert_{1}=\max_{j}\sum_{i}|M_{ij}|$ and
$\lVert M\rVert_{\infty}=\max_{i}\sum_{j}|M_{ij}|$. Since $L\neq0$, both
$\operatorname{tr}(L)$ and $\lambda_{n}$ are positive.

For the rightmost inequality of \eqref{eq_sandwich}, observe that
$\lambda_{i}=0$ for $i\le\beta$, so both sums in \eqref{eq_sm_sr} range
effectively over the $n-\beta$ indices with $\lambda_{i}>0$. The
Cauchy--Schwarz inequality applied to the vectors $(1,\dots,1)$ and
$(\lambda_{i})_{\lambda_{i}>0}$, each of length $n-\beta$, gives
\begin{equation*}
  \Big(\sum_{\lambda_{i}>0}\lambda_{i}\Big)^{2}
  \;\le\;(n-\beta)\sum_{\lambda_{i}>0}\lambda_{i}^{2},
\end{equation*}
and dividing by $\sum_{i}\lambda_{i}^{2}>0$ yields $r(L)\le n-\beta$.
Equality in Cauchy--Schwarz holds exactly when the two vectors are
parallel, that is when all non-zero eigenvalues coincide, which is the
stated equality case.

For the middle inequality, $0\le\lambda_{i}\le\lambda_{n}$ for every $i$
gives
\begin{equation*}
  \sum_{i}\lambda_{i}^{2}=\sum_{i}\lambda_{i}\cdot\lambda_{i}
  \;\le\;\lambda_{n}\sum_{i}\lambda_{i}
  =\lambda_{n}\operatorname{tr}(L),
\end{equation*}
so that, substituting into \eqref{eq_sm_sr},
\begin{equation}\label{eq_sm_middle}
  r(L)\;\ge\;\frac{\operatorname{tr}(L)^{2}}
                  {\lambda_{n}\operatorname{tr}(L)}
  =\frac{\operatorname{tr}(L)}{\lambda_{n}} .
\end{equation}

The leftmost inequality requires bounding the numerator and denominator
of \eqref{eq_sm_middle} separately. For the numerator, any real matrix
$M$ satisfies
$\operatorname{tr}(M^{\top}M)=\operatorname{tr}(MM^{\top})
=\sum_{i,j}M_{ij}^{2}=\lVert M\rVert_{F}^{2}$, so applying this to each
summand of \eqref{eq_sm_L} and using linearity of the trace,
\begin{equation}\label{eq_sm_trace}
  \operatorname{tr}(L)=\lVert B\rVert_{F}^{2}+\lVert C\rVert_{F}^{2}.
\end{equation}
For the denominator, note first that any real $M$ and any vector $x$
satisfy, on writing
$|M_{ij}x_{j}|=|M_{ij}|^{1/2}\cdot|M_{ij}|^{1/2}|x_{j}|$ and applying
the Cauchy-Schwarz inequality to the inner sum,
\begin{align*}
  \lVert Mx\rVert_{2}^{2}
  &=\sum_{i}\Big(\sum_{j}M_{ij}x_{j}\Big)^{2}
   \;\le\;\sum_{i}\Big(\sum_{j}|M_{ij}|\Big)
          \Big(\sum_{j}|M_{ij}|\,x_{j}^{2}\Big)\\[2pt]
  &\le\;\lVert M\rVert_{\infty}\sum_{j}x_{j}^{2}\sum_{i}|M_{ij}|
   \;\le\;\lVert M\rVert_{\infty}\lVert M\rVert_{1}
          \lVert x\rVert_{2}^{2},
\end{align*}
whence $\lVert M\rVert_{2}\le\sigma(M)$. Since $L$ is symmetric positive
semidefinite, $\lambda_{n}=\lVert L\rVert_{2}$, and by \eqref{eq_sm_L},
the triangle inequality for the spectral norm and
$\lVert M^{\top}M\rVert_{2}=\lVert M\rVert_{2}^{2}$,
\begin{equation}\label{eq_sm_lammax}
  \lambda_{n}\;\le\;\lVert B\rVert_{2}^{2}+\lVert C\rVert_{2}^{2}
  \;\le\;\sigma(B)^{2}+\sigma(C)^{2}=\Lambda_{p}.
\end{equation}
Combining \eqref{eq_sm_middle}, \eqref{eq_sm_trace} and
\eqref{eq_sm_lammax},
\begin{equation*}
  r(L)\;\ge\;\frac{\operatorname{tr}(L)}{\lambda_{n}}
  \;\ge\;\frac{\lVert B\rVert_{F}^{2}+\lVert C\rVert_{F}^{2}}
              {\Lambda_{p}},
\end{equation*}
which is the leftmost inequality of \eqref{eq_sandwich}.

Finally, each of $\lVert B\rVert_{F}^{2}$, $\lVert C\rVert_{F}^{2}$,
$\lVert B\rVert_{1}$, $\lVert B\rVert_{\infty}$, $\lVert C\rVert_{1}$
and $\lVert C\rVert_{\infty}$ is a sum or a maximum over the stored
non-zero entries, computable in a single pass at cost
$\mathcal{O}(\operatorname{nnz}(B)+\operatorname{nnz}(C))
=\mathcal{O}(N)$, with no matrix product and no eigenvalue.
\end{proof}

\section{Proof of Corollary~\ref{cor_apriori}}
\label{proof_cor_apriori}

\begin{proof}
Retain the notation of Section~\ref{proof_thm_sandwich} and let
$z_{1},\dots,z_{s}$ be independent {probe vectors with independent Rademacher entries or with independent standard Gaussian entries, as in Theorem~\ref{thm_recovery}}, and
$T_{s}=s^{-1}\sum_{j=1}^{s}z_{j}^{\top}Lz_{j}$. By
Theorem~\ref{thm_hutchinson}, for Rademacher probes each summand has mean
$\operatorname{tr}(L)$ and variance at most $2\lVert L\rVert_{F}^{2}$.
The same bound holds for standard Gaussian probes. By independence,
\begin{equation}\label{eq_sm_var}
  \mathbb{E}[T_{s}]=\operatorname{tr}(L),
  \qquad
  \operatorname{Var}(T_{s})\le\frac{2\lVert L\rVert_{F}^{2}}{s}.
\end{equation}
{Since $L\ne0$ is positive semidefinite, $\operatorname{tr}(L)>0$.} Chebyshev's inequality
with deviation $\varepsilon\operatorname{tr}(L)$, followed by
\eqref{eq_sm_var} and the definition of $r(L)$ in \eqref{eq_sm_sr},
gives
\begin{equation}\label{eq_sm_cheb}
  \mathbb{P}\Big(\big|T_{s}-\operatorname{tr}(L)\big|
  \ge\varepsilon\operatorname{tr}(L)\Big)
  \;\le\;\frac{\operatorname{Var}(T_{s})}
              {\varepsilon^{2}\operatorname{tr}(L)^{2}}
  \;\le\;\frac{2\lVert L\rVert_{F}^{2}}
        {s\,\varepsilon^{2}\operatorname{tr}(L)^{2}}
  =\frac{2}{s\,\varepsilon^{2}\,r(L)} .
\end{equation}
Bounding $r(L)$ below by Theorem~\ref{thm_sandwich} and substituting
into \eqref{eq_sm_cheb},
\begin{equation*}
  \mathbb{P}\Big(\big|T_{s}-\operatorname{tr}(L)\big|
  \ge\varepsilon\operatorname{tr}(L)\Big)
  \;\le\;\frac{2\,\Lambda_{p}}
  {s\,\varepsilon^{2}\big(\lVert B\rVert_{F}^{2}
  +\lVert C\rVert_{F}^{2}\big)},
\end{equation*}
which is at most $\delta$ precisely when $s$ satisfies
\eqref{eq_apriori}. Every quantity appearing there is obtained from the
stored entries of $B$ and $C$ by the cost argument above, so the bound is
available before any spectral computation.
\end{proof}

\section{Applicability Beyond Hyperdigraphs}
\label{subsec:graph_laplacian}

The estimator of Theorem~\ref{thm_hutchinson} extends naturally beyond persistent hyperdigraph Laplacians. Its key requirement is a sparse factorization of the form \(B^{\top}B\), a structure shared by a broad class of discrete Laplacian operators. A canonical example is the ordinary graph Laplacian, for which \(B\) is the signed incidence matrix. Since the order-zero operator in our construction reduces precisely to this case, graph Laplacians provide a simple setting in which to isolate the computational scaling of the estimator and verify that its advantages are not tied to the hyperdigraph construction.

We therefore generate \(n\) points uniformly in \([0,1]^3\) and connect pairs whose Euclidean distance is at most \(\epsilon\), yielding the graph Laplacian \(L_0(\epsilon)\). The probe estimator uses (s) probe vectors and the eight statistics of Section~\ref{hyperdigraph_construction}, while the spectral baseline computes the dense eigendecomposition and forms the corresponding eight eigenvalue statistics. All timings are averaged over ten random seeds.

Table~\ref{tab:synthetic-baseline} varies the matrix size at
$s = 16$. The probe time tracks the number of non-zero entries rather
than the dimension, rising from $69$ to $264$ microseconds as
$\operatorname{nnz}(L_{0})$ grows by a factor of ninety, while the
baseline grows by two orders of magnitude over the same range, so the
speedup increases from roughly $2\times$ at $n = 64$ to $149\times$ at
$n = 768$. Table~\ref{tab:synthetic-sweep} instead fixes $n = 768$ and
$\epsilon = 0.20$ and varies $s$. Probe time grows linearly, as each
probe requires {one sparse product with the incidence matrix}, and the advantage
narrows accordingly; the estimator remains faster than dense
eigendecomposition by more than an order of magnitude even at
$s = 256$.

\begin{table}[!htbp]
\centering
\small
\caption{Runtime on synthetic graph Laplacians at $s = 16$ probe vectors, averaged over ten seeds.}
\label{tab:synthetic-baseline}
\begin{tabular}{rrrrrrr}
\hline
$n$ & $\epsilon$ & Edges & nnz$(L_{0})$
    & Probe (ms) & Baseline (ms) & Speedup \\
\hline
  64 & 0.16 &    28 &    120 & 0.065 & 0.180 & 2.8$\times$ \\
  64 & 0.20 &    57 &    178 & 0.060 & 0.165 & 2.8$\times$ \\
  64 & 0.24 &    94 &    252 & 0.057 & 0.171 & 3.0$\times$ \\
 128 & 0.16 &   112 &    352 & 0.064 & 0.526 & 8.3$\times$ \\
 128 & 0.20 &   211 &    550 & 0.063 & 0.582 & 9.2$\times$ \\
 128 & 0.24 &   351 &    830 & 0.065 & 0.615 & 9.4$\times$ \\
 256 & 0.16 &   460 &   1176 & 0.081 & 3.827 & 47.4$\times$ \\
 256 & 0.20 &   862 &   1980 & 0.084 & 3.693 & 44.1$\times$ \\
 256 & 0.24 &  1420 &   3096 & 0.086 & 3.754 & 43.6$\times$ \\
 512 & 0.16 &  1879 &   4270 & 0.142 & 18.170 & 128.2$\times$ \\
 512 & 0.20 &  3487 &   7486 & 0.154 & 18.237 & 118.8$\times$ \\
 512 & 0.24 &  5712 &  11937 & 0.164 & 18.334 & 111.8$\times$ \\
 768 & 0.16 &  4198 &   9165 & 0.184 & 38.601 & 210.1$\times$ \\
 768 & 0.20 &  7827 &  16421 & 0.218 & 38.779 & 178.3$\times$ \\
 768 & 0.24 & 12840 &  26448 & 0.275 & 38.790 & 141.0$\times$ \\
\hline
\end{tabular}
\end{table}

\begin{table}[!htbp]
\centering
\small
\caption{Runtime as the number of probe vectors increases, at $n = 768$ and $\epsilon = 0.20$.}
\label{tab:synthetic-sweep}
\begin{tabular}{rrrr}
\hline
$s$ & Probe (ms) & Baseline (ms) & Speedup \\
\hline
  16 & 0.231 & 38.931 & 168.9$\times$ \\
  32 & 0.377 & 38.948 & 103.4$\times$ \\
  64 & 0.606 & 38.918 & 64.2$\times$ \\
 128 & 1.192 & 39.063 & 32.8$\times$ \\
 256 & 2.297 & 39.138 & 17.0$\times$ \\
\hline
\end{tabular}
\end{table}

\section{Effect of Higher-Order Hyperdigraph Terms}
\label{subsec:order_limit}

To assess the contribution of higher-order persistent hyperdigraph information, we perform a cumulative-order ablation using the MFPHL descriptor together with the directed-bipartite feature and ESM embedding under \(s=16\). Table~\ref{tab:mfphl-cumulative-order-ablation} shows that simply increasing the hyperdigraph order does not lead to monotonic improvement. For WT, the best overall performance is obtained with \(L_0\)-\(L_1\), reaching \(R_p=0.7259\) and MAE \(=0.0869\), compared with \(R_p=0.7251\) for \(L_0\) alone. After adding \(L_2\) and \(L_3\), however, the correlation decreases to \(0.7247\) and \(0.7226\), respectively. Interestingly, the performance then gradually recovers as additional orders are incorporated, reaching \(R_p=0.7243\) for \(L_0\)-\(L_5\). {Thus \(L_0\)--\(L_5\) lies above \(L_0\)--\(L_3\) and \(L_0\)--\(L_4\) but slightly below \(L_0\)--\(L_2\) and the strongest low-order model. All of these WT differences, at most \(0.0043\), are smaller than the corresponding standard deviations of \(0.004\) to \(0.008\).} In contrast, for V2020 and P2P$_{\mathrm{WT}}$, \(L_0\) alone gives the highest correlation, with \(R_p=0.6137\) and \(0.6356\), respectively, and the addition of \(L_1\)-\(L_4\) produces only small fluctuations or slight degradation.

{Because these changes are within training-seed variability, we do not attribute them to specific structural mechanisms. They describe how predictive performance behaves as higher-order descriptors become computationally accessible. Under this model and feature set, adding orders beyond \(L_0\) or \(L_0\)--\(L_1\) does not improve prediction, and the WT correlation declines slightly from \(R_p=0.7243\) for \(L_0\)--\(L_5\) to \(0.7216\) for \(L_0\)--\(L_8\). Possible explanations, not tested here, include redundancy between orders, the increasing sparsity and geometric sensitivity of very high-order hyperedges, and the larger feature dimension.}
\begin{table}[!htbp]
\centering
\small
\setlength{\tabcolsep}{6pt}
\caption{Cumulative-order ablation. All settings include the directed-bipartite and ESM descriptors, use \(s=16\) probe vectors and the same tuned Huber GBDT; values are mean \(\pm\) standard deviation
over ten random seeds.}
\label{tab:mfphl-cumulative-order-ablation}
\begin{tabular}{lccc}
\hline
Orders & WT & V2020 & P2P$_{\mathrm{WT}}$ \\
\hline
\multicolumn{4}{l}{\(R_p\)} \\
$L_0$        & 0.7251\,$\pm$\,0.0040 & 0.6137\,$\pm$\,0.0015 & 0.6356\,$\pm$\,0.0015 \\
$L_0$--$L_1$ & 0.7259\,$\pm$\,0.0053 & 0.6130\,$\pm$\,0.0018 & 0.6342\,$\pm$\,0.0025 \\
$L_0$--$L_2$ & 0.7247\,$\pm$\,0.0074 & 0.6115\,$\pm$\,0.0018 & 0.6341\,$\pm$\,0.0024 \\
$L_0$--$L_3$ & 0.7226\,$\pm$\,0.0063 & 0.6119\,$\pm$\,0.0032 & 0.6344\,$\pm$\,0.0024 \\
$L_0$--$L_4$ & 0.7237\,$\pm$\,0.0058 & 0.6127\,$\pm$\,0.0028 & 0.6341\,$\pm$\,0.0019 \\
$L_0$--$L_5$ & 0.7243\,$\pm$\,0.0061 & 0.6127\,$\pm$\,0.0023 & 0.6339\,$\pm$\,0.0018\\
$L_0$--$L_6$ & 0.7227\,$\pm$\,0.0080 & 0.6118\,\(\pm\)0.0020   & 0.6342\,\(\pm\)0.0023\\
$L_0$--$L_7$ & 0.7217\,$\pm$\,0.0055 & 
0.6099\,\(\pm\)0.0026 & 0.6338\,$\pm$\,0.0017\\
$L_0$--$L_8$ & 0.7216\,$\pm$\,0.0056 &        0.6099\,\(\pm\)0.0027 & 0.6337\,$\pm$\,0.0017\\
\hline
\multicolumn{4}{l}{MAE} \\
$L_0$        & 0.0869\,$\pm$\,0.0004 & 0.0960\,$\pm$\,0.0002 & 0.0947\,$\pm$\,0.0002 \\
$L_0$--$L_1$ & 0.0869\,$\pm$\,0.0006 & 0.0961\,$\pm$\,0.0002 & 0.0948\,$\pm$\,0.0004 \\
$L_0$--$L_2$ & 0.0872\,$\pm$\,0.0009 & 0.0962\,$\pm$\,0.0002 & 0.0949\,$\pm$\,0.0003 \\
$L_0$--$L_3$ & 0.0878\,$\pm$\,0.0007 & 0.0961\,$\pm$\,0.0003 & 0.0948\,$\pm$\,0.0003 \\
$L_0$--$L_4$ & 0.0877\,$\pm$\,0.0007 & 0.0961\,$\pm$\,0.0003 & 0.0948\,$\pm$\,0.0003 \\
$L_0$--$L_5$ & 0.0877\,$\pm$\,0.0006 & 0.0961\,$\pm$\,0.0003 & 0.0949\,$\pm$\,0.0003                  \\
$L_0$--$L_6$ & 0.0878\,$\pm$\,0.0009 & 0.0962\,\(\pm\)0.0002  & 0.0949\,\(\pm\)0.0003\\
$L_0$--$L_7$ & 0.0881\,$\pm$\,0.0006 &         0.0964\,\(\pm\)0.0003     & 0.0950\,\(\pm\)0.0003\\
$L_0$--$L_8$ & 0.0881\,$\pm$\,0.0007 &          0.0964\,\(\pm\)0.0003   & 0.0950\,\(\pm\)0.0003\\
\hline
\end{tabular}
\end{table}

% \section{V2007 Smoke Test for Feature Generation Runtime}

% We next tested the feature-generation pipeline on three PDBbind V2007 protein-ligand complexes. For each complex, we generated element-pair \(L_0\) filtration Laplacians over two cutoffs, $2.0$ and $3.0$ \AA. The baseline V2007 smoke test used $N_z=16$ Rademacher probe vectors.

% We report feature-only runtime and end-to-end runtime separately. Feature-only runtime measures only the final conversion from an already constructed Laplacian into features. End-to-end runtime includes both the shared Laplacian construction and feature computation.

% \begin{table}[!htbp]
% \centering
% \scriptsize
% \caption{V2007 baseline smoke-test runtime with $N_z=16$.}
% \resizebox{\columnwidth}{!}{%
% \begin{tabular}{lrrrrrr}
% \hline
% PDB & \#L & zTLz Feat. & eig-L Feat. & Feat. Speedup & zTLz E2E & eig-L E2E \\
% \hline
% 10gs & 72 & 0.009181 & 0.326214 & 35.53$\times$ & 0.388256 & 0.705288 \\
% 11gs & 72 & 0.008647 & 0.424930 & 49.14$\times$ & 0.461431 & 0.877713 \\
% 1a08 & 72 & 0.010664 & 0.110302 & 10.34$\times$ & 0.232843 & 0.332480 \\
% \hline
% Avg.  & 72  & 0.009498 & 0.287149 & 30.23$\times$ & 0.360843 & 0.638494 \\
% Total & 216 & 0.028493 & 0.861446 & 30.23$\times$ & 1.082529 & 1.915482 \\
% \hline
% \end{tabular}%
% }
% \label{tab:v2007-baseline-nz16}
% \end{table}

% Finally, we varied the probe number $N_z$ for the same V2007 smoke-test setting.

% \begin{table}[!htbp]
% \centering
% \scriptsize
% \caption{V2007 feature-only runtime as the number of probe vectors $N_z$ increases.}
% \resizebox{\columnwidth}{!}{%
% \begin{tabular}{rrrrrr}
% \hline
% $N_z$ & Avg. zTLz (s/PDB) & Avg. eig-L (s/PDB) & Total zTLz (s) & Total eig-L (s) & Speedup \\
% \hline
% 16  & 0.009498 & 0.287149 & 0.028493 & 0.861446 & 30.23$\times$ \\
% 32  & 0.010021 & 0.271112 & 0.030064 & 0.813337 & 27.05$\times$ \\
% 64  & 0.012621 & 0.261168 & 0.037862 & 0.783504 & 20.69$\times$ \\
% 128 & 0.016932 & 0.212684 & 0.050795 & 0.638052 & 12.56$\times$ \\
% 256 & 0.028538 & 0.207828 & 0.085613 & 0.623485 & 7.28$\times$ \\
% \hline
% \end{tabular}%
% }
% \label{tab:v2007-nz-sweep}
% \end{table}

% The V2007 results follow the same trend as the synthetic experiment. Larger $N_z$ increases zTLz feature-only runtime, but the probe-based method remains faster than eig-L feature generation for all tested probe counts. The end-to-end gain is smaller because both methods share the same Laplacian construction stage; the main advantage of zTLz comes from avoiding dense eigendecomposition during feature computation.

% \begin{table}[ht]
% \centering % Centers the table
% \caption{The performance of PLNet and PLD-Tree for $R_p$ and MAE between experimental and predicted binding affinities on antibody-antigen dataset. }
% \label{tab:comparison 3}
% \begin{tabular}{|l|c|c|c|c|}
% \hline
% & \multicolumn{2}{c|}{{zTLz feature}} & \multicolumn{2}{c|}{{original feature}} \\ \hline
%  \textbf{Datasets} & \textbf{$R_p$} & MAE(kcal/mol) &
% \textbf{$R_p$} & MAE(kcal/mol)
% \\  \hline
% V2007 &  %{\textbf{0.66}} &  {\textbf{1.61}} & 

% 0.7002 &  1.5417 &  0.7292 & 1.4368\\ \hline

% V2016 & 0.7408 & 1.3547 & 0.7697 & 1.2860 \\ \hline

% V2020 & 0.7469 & 1.5562 & 0.7701 & 1.4826\\ \hline

% \hline
% \end{tabular}
% \footnotesize
% \end{table}

% \begin{itemize}
%     \item sampling: big cons might be wrong: you just choosing selected points which might not represent the exact topology for the structure.
%     \item but we might use some idea for realizing like statstic idea similar for sampling and it can solve the issue for not calculating the eigenvalues
% \end{itemize}

% \xj{

% Protein–ligand and protein–protein interactions occupy a low-dimensional geometric–topological manifold embedded in a high-dimensional combinatorial interaction space. Binding affinity is governed by smooth, cooperative variations along this manifold rather than by the mere presence of higher-order interactions. High-order Laplacians induce a spectral geometry on interaction motifs that approximates the intrinsic manifold structure, allowing eigenvalues to capture the energetic stability of cooperative binding modes. In contrast, hyperdigraph representations encode combinatorial incidence without defining a manifold geometry or energy structure, and therefore fail to capture affinity-determining interactions.
% }

% {\color{blue}
% \textbf{Basic facts:} 

% 1. Topology captures global and qualitative aspects of a space (e.g., connectivity, holes), while geometry captures local, quantitative properties (e.g., curvature, distance, angle).
% Both provide complementary insights into the structure of a manifold, and geometric features can often be more discriminative for specific tasks.

% 2. All datasets have a geometric structure, even if they are hidden in high dimensions. Many high-dimensional datasets actually lie on or near a low-dimensional manifold embedded in high-dimensional space. This manifold inherits geometry from the ambient space: distances, tangent spaces, curvatures, etc.
% }

% For WT data, when you use \(L_2\) (triangle Laplacian) features from triple interaction hypergraph or these p1-p2–p1 / p2–p1-p2 variants

% the GBDT performance is worse than just using the bipartite (pairwise) graph.

% But for WT–anti (WT + antibody or some other partner), \(L_2\)  helped before.

% [Plan]:
% 1. JUst do biparettile graph for protein-protein binding. but for triple interaction we can split it out add; and analyze why it fails to improve.
% even using peter idea.

% \url{https://openreview.net/pdf?id=afnyJfQddk}
% \section{Progress on Dec 25}

% \begin{enumerate}
%     \item \textbf{Model performance.} Using $L_0 \oplus L_2$ features, GBDT achieves good performance. However, linear regression and SVM do not perform well on the WT anti-55 case. $L_1$ appears no meaning here. One possible reason is that $L_2$ (triangle-level information) is constructed from three-way interactions, which may be noisy or incomplete under the current sampling strategy.

%     \item \textbf{Working hypothesis.} I guess $L_1$ may act as a denoising term (through the coupling $B_1^\top B_1 + B_2 B_2^\top$) rather than providing direct signal for binding free energy changes.

%     \item \textbf{Sampling issue and next step.} Our earlier experiments likely underperformed because we used too few triangles, leading to high variance and large approximation error. A clear next step is to include substantially more triangles (or repeated triangle sampling with aggregation) to stabilize $L_2$-based features and reduce fold-to-fold fluctuations.
% \end{enumerate}

% 1.Finish cleaning the repository.

% 2.For the dataset we are using for deletion, the results may not be optimal. However, I am still considering potential improvements.

% 3.For the neural network, the $R_p$ fluctuates quickly. Can we capture the best performance and save it? By "fast," I mean that $R_p$ might change from 1000-2000 at 0.6, to 2000-3000 at 0.7-0.8, and 3000-4000 at 0.5.

% 4. I've reviewed the dataset used by other researchers. They are considering pKd; maybe we could try pKd to improve our results.

% 5. For the hyperdigraph, I have several questions that need to be addressed:

% (1) Which datasets should we use? Should we use the p2p dataset or the deleted version of p2p? For now, I am considering only the wt dataset.

% (2) For some large-sized boundary matrices generated from PDBs, I have two plans:

% The first is to remove unnecessary atoms, especially in the "C" set (for example, when calculating the count of atoms that connect pairs, if some atoms only connect to one, perhaps we can delete them). However, another thought is that removing them might break the structure of the topology.

% The second approach is the reverse, where I consider keeping most connected atoms and deleting the rest. We can express this as an intermediate point, for example, 1-2-3. If we delete atom 2, we would still have the connection 1-2-3.

% 6. I came across a paper by Dr. Xiang Liu on hypergraphs. Maybe we can carefully study his work and apply his ideas to our digraph.

% \begin{itemize}
% \item paper software dataset in my repo
% \item PRODIGY delete
% \item 

%     \item using PDB2PQR IN pip right now is version 3.6.2. maybe not good to suit the code in mutation\_structure.py
%     \item I have already updated the mutation to realize for one batch. we might use it.
%     \item actually WT has 203 pdb same data shared with PDBbind\_v2020\_PP but their Kd is different. So I have back up result.
    
% \end{itemize}

% \begin{table}[h!]
%     \centering
%     \caption{Performance for $R_p$ and MAE by different features on the considered datasets}
%     \begin{tabular}{|c|c|c|c|c|c|c|}
%         \hline
%          & \multicolumn{2}{c|}{P2P\(_{\text{wt}}\)} & \multicolumn{2}{c|}{{{P2P\(_{\text{mt}}\)}}} & \multicolumn{2}{c|}{{{P2P}}} \\ \hline

%     10fold  & $R_p$ & MAE & $R_p$ & MAE & $R_p$ & MAE \\ \hline
%         {Auxiliary}  & 0.6480 & 1.546  & 0.7070  & 1.647 &  0.7200 & 1.639  \\ \hline
%         {PPI} & 0.6526& 1.520 & 0.7191 & 1.642 & 0.7320 & 1.618 \\ \hline
%         {ESM} &  0.6824 & 1.426  & 0.7654 & 1.455  & 0.7763 & 1.425 \\ \hline
%         {All} & 0.7196 & 1.365  & 0.7860 & 1.412 &  0.7804 & 1.390 \\ \hline
        
%     \end{tabular}
    
% \end{table}

% \section{Progress on 5.21}
% \begin{itemize}
%     \item left: alpha complex interaction result for $H_0$.
    
% \end{itemize}
%  \begin{table}[!htbp]
% \centering
% \caption{Cross-Validation Results (for persistent hyperdigraph $H_0$ for element-wise)}\label{TabPersistentrips_element}
% \begin{tabular}{|c|cc|cc|cc|cc|}
% \hline
% \multicolumn{9}{|c|}{\textbf{Persistent hyperdigraph rips without distance matrix}} \\
% \hline
% \multirow{2}{*}{\textbf{Fold}} & \multicolumn{2}{c|}{\textbf{ PH}} & \multicolumn{2}{c|}{\textbf{ betti}} & \multicolumn{2}{c|}{\textbf{spectra}} &  \multicolumn{2}{c|}{\textbf{betti+spectra}} \\
% \cline{2-9}
%  & \textbf{Rp} & \textbf{MAE} & \textbf{Rp} & \textbf{MAE} & \textbf{Rp} & \textbf{MAE}  & \textbf{Rp} & \textbf{MAE} \\
% \hline
% Fold 1  & 0.6241 & 2.2429 & 0.6300 & 2.1545 & 0.6519 & 2.1929 & 0.6519 & 2.1929\\
% Fold 2  & 0.6428 & 1.8871 & 0.5164 & 2.0731 & 0.6216 & 1.8551 & 0.6216 & 1.8551\\
% Fold 3  & 0.3834 & 1.8275 & 0.3210 & 1.9908 & 0.5722 & 1.5882 & 0.5722 & 1.5882\\
% Fold 4  & 0.5108 & 2.0530 & 0.5194 & 2.0471 & 0.5625 & 2.0460 & 0.5625 & 2.0460\\
% Fold 5  & 0.6267 & 1.6480 & 0.5563 & 2.0747 & 0.6177 & 1.5800 & 0.6383 & 1.5623\\
% Fold 6  & 0.4921 & 1.8110 & 0.4341 & 2.3608 & 0.5767 & 1.5513 & 0.5528 & 1.6080\\
% Fold 7  & 0.6730 & 1.8318 & 0.6572 & 1.8605 & 0.7718 & 1.6095 & 0.7610 & 1.6426\\
% Fold 8  & 0.4511 & 2.2403 & 0.4148 & 2.0636 & 0.4981 & 2.0541 & 0.5075 & 2.0463\\
% Fold 9  & 0.5076 & 1.6063 & 0.6192 & 2.2626 & 0.7591 & 1.3006 & 0.7755 & 1.2717\\
% Fold 10 & 0.4703 & 2.1596 & 0.4263 & 2.3415 & 0.5022 & 2.2130 & 0.4964 & 2.2445\\
% \hline
% \textbf{Overall} & \textbf{0.5352} & \textbf{1.9312} & \textbf{0.5027} & \textbf{1.9821} & \textbf{0.6141} & \textbf{1.7934} & \textbf{0.6109} & \textbf{1.8064}\\
% \hline
% \end{tabular}
% \end{table}

% \section{Progress on 5.14}
% Use Dr Wei assumption, based on energetic stability, we have $C(2.5) \to N(3.07) \to O(3.5)$. {\color{red}Alpha complex cannot naturally remove self-binding.}

% Table \ref{TabPersistentalpha_mix} is the performance of the persistent hyperdigraph alpha complex $H_0$ for mix type. Table \ref{TabPersistentalpha_ele} shows the performance of the persistent hyperdigraph alpha complex $H_0$ for the element-wise type.

%  \begin{table}[!htbp]
% \centering
% \caption{Cross-Validation Results (for persistent hyperdigraph $H_0$)}\label{TabPersistentalpha_mix}
% \begin{tabular}{|c|cc|cc|cc|}
% \hline
% \multicolumn{7}{|c|}{\textbf{Persistent hyperdigraph alpha for mix (element-wise and global)}} \\
% \hline
% \multirow{2}{*}{\textbf{Fold}} & \multicolumn{2}{c|}{\textbf{ betti}} & \multicolumn{2}{c|}{\textbf{spectra}} &  \multicolumn{2}{c|}{\textbf{betti+spectra}} \\
% \cline{2-7}
%  & \textbf{Rp} & \textbf{MAE} & \textbf{Rp} & \textbf{MAE} & \textbf{Rp} & \textbf{MAE} \\
% \hline
% Fold 1  & 0.4764 & 2.5477 & 0.7379 & 2.0045 & 0.7501 & 1.9750\\
% Fold 2  & 0.6276 & 1.9193 & 0.6610 & 1.7753 & 0.6885 & 1.7202\\
% Fold 3  & 0.2707 & 1.9490 & 0.6229 & 1.5680 & 0.6111 & 1.5952\\
% Fold 4  & 0.5497 & 2.0471 & 0.5856 & 1.9392 & 0.5739 & 1.9798\\
% Fold 5  & 0.4144 & 2.0747 & 0.6177 & 1.5800 & 0.6405 & 1.5523\\
% Fold 6  & 0.2590 & 2.3608 & 0.3834 & 2.0298 & 0.3727 & 2.1599\\
% Fold 7  & 0.5953 & 1.8605 & 0.7108 & 1.7731 & 0.7046 & 1.7987\\
% Fold 8  & 0.4482 & 2.0636 & 0.5928 & 1.9256 & 0.6082 & 1.9111\\
% Fold 9  & 0.2872 & 2.2626 & 0.5548 & 1.8852 & 0.5415 & 1.8895\\
% Fold 10 & 0.4303 & 2.3415 & 0.2974 & 2.4355 & 0.3321 & 2.3984\\
% \hline
% \textbf{Overall} & \textbf{0.4372} & \textbf{2.1426} & \textbf{0.5719} & \textbf{1.8907} & \textbf{0.5731} & \textbf{1.8968}\\
% \hline
% \end{tabular}
% \end{table}

%  \begin{table}[!htbp]
% \centering
% \caption{Cross-Validation Results (for persistent hyperdigraph $H_0$)}\label{TabPersistentalpha_ele}
% \begin{tabular}{|c|cc|cc|cc|}
% \hline
% \multicolumn{7}{|c|}{\textbf{Persistent hyperdigraph alpha for element-wise type}} \\
% \hline
% \multirow{2}{*}{\textbf{Fold}} & \multicolumn{2}{c|}{\textbf{ betti}} & \multicolumn{2}{c|}{\textbf{spectra}} &  \multicolumn{2}{c|}{\textbf{betti+spectra}} \\
% \cline{2-7}
%  & \textbf{Rp} & \textbf{MAE} & \textbf{Rp} & \textbf{MAE} & \textbf{Rp} & \textbf{MAE} \\
% \hline
% Fold 1  & 0.5980 & 2.4337 & 0.7375 & 2.0350 & 0.7506 & 2.0581\\
% Fold 2  & 0.7371 & 1.5826 & 0.7362 & 1.6386 & 0.7609 & 1.5966\\
% Fold 3  & 0.4510 & 1.7097 & 0.5124 & 1.7129 & 0.5288 & 1.6750\\
% Fold 4  & 0.5470 & 2.1398 & 0.5276 & 2.0377 & 0.5021 & 2.0820\\
% Fold 5  & 0.5091 & 1.8832 & 0.5671 & 1.7283 & 0.6107 & 1.6204\\
% Fold 6  & 0.5124 & 1.8411 & 0.3442 & 1.9567 & 0.3992 & 1.9792\\
% Fold 7  & 0.5997 & 2.0047 & 0.6829 & 1.8414 & 0.6920 & 1.8313\\
% Fold 8  & 0.6986 & 1.6929 & 0.5996 & 1.9574 & 0.6090 & 1.9390\\
% Fold 9  & 0.6072 & 1.6597 & 0.6110 & 1.6190 & 0.6240 & 1.5918\\
% Fold 10 & 0.4819 & 2.3144 & 0.4281 & 2.3076 & 0.4544 & 2.2944\\
% \hline
% \textbf{Overall} & \textbf{0.5689} & \textbf{1.9260} & \textbf{0.5735} & \textbf{1.8827} & \textbf{0.5874} & \textbf{1.8660}\\
% \hline
% \end{tabular}
% \end{table}

% Table \ref{TabPersistentrips_mix} shows the performance of the persistent hyperdigraph rip complex $H_0$ for mix type.
%  \begin{table}[!htbp]
% \centering
% \caption{Cross-Validation Results (for persistent hyperdigraph $H_0$)}\label{TabPersistentrips_mix}
% \begin{tabular}{|c|cc|cc|cc|}
% \hline
% \multicolumn{7}{|c|}{\textbf{Persistent hyperdigraph rips for mix (element-wise and global)}} \\
% \hline
% \multirow{2}{*}{\textbf{Fold}} & \multicolumn{2}{c|}{\textbf{ betti}} & \multicolumn{2}{c|}{\textbf{spectra}} &  \multicolumn{2}{c|}{\textbf{betti+spectra}} \\
% \cline{2-7}
%  & \textbf{Rp} & \textbf{MAE} & \textbf{Rp} & \textbf{MAE} & \textbf{Rp} & \textbf{MAE} \\
% \hline
% Fold 1  & 0.6179 & 2.2156 & 0.7287 & 1.9778 & 0.7339 & 1.9669\\
% Fold 2  & 0.6733 & 1.7760 & 0.7533 & 1.5728 & 0.7555 & 1.5676\\
% Fold 3  & 0.4671 & 1.7870 & 0.4801 & 1.7762 & 0.5218 & 1.7490\\
% Fold 4  & 0.5697 & 1.9951 & 0.5682 & 2.0088 & 0.5887 & 1.9699\\
% Fold 5  & 0.6053 & 1.6517 & 0.6261 & 1.6344 & 0.6741 & 1.5327\\
% Fold 6  & 0.5012 & 1.8542 & 0.5539 & 1.7951 & 0.5716 & 1.7781\\
% Fold 7  & 0.7077 & 1.7024 & 0.6831 & 1.7775 & 0.6953 & 1.7697\\
% Fold 8  & 0.6124 & 1.8289 & 0.6620 & 1.8339 & 0.6815 & 1.7822\\
% Fold 9  & 0.6521 & 1.5670 & 0.6055 & 1.7029 & 0.6149 & 1.7036\\
% Fold 10 & 0.4815 & 2.2174 & 0.3647 & 2.3874 & 0.3929 & 2.3413\\
% \hline
% \textbf{Overall} & \textbf{0.5873} & \textbf{1.8843} & \textbf{0.6031} & \textbf{1.8461} & \textbf{0.6206} & \textbf{1.8156}\\
% \hline
% \end{tabular}
% \end{table}

% So my guess for Table using $H_1, H_2$ for hyperdigraph will demotivate the process. So I still use hyperdigraph with the rips complex. but follow the rules like: all the edges distance must be $>2$. Otherwise, did not count.

% \section{Progress on 5.9}
% Based on the performance shown in the following table, we draw two key conclusions.

% (1). In this example, when using a single partner's binding representation to predict the binding affinity, the results are generally worse than the case for interaction-based representations between two distinct partners. This highlights the importance of explicitly modeling partner interactions in binding prediction.

% (2). In the hypergraph framework, our initial approach involved using all atoms from both Partner 1 and Partner 2 to predict the binding affinity. However, this approach did not outperform the element-wise handling, where features are treated separately per atom type. Interestingly, this observation inspired a mixed strategy in which both global atomic features and element-specific structures are integrated, potentially balancing local specificity with global context.

% \xj{So the remain target is that: 1. Test the performance for the mix strategy in which both global atomic features and element-specific structures. 2. Find the best one for building the rip complex or the alpha complex. [Do we also need to test the alpha complex on $H_0$ for completeness?] This will be in new section.}

%  \begin{table}[!ht]
% \centering
% \caption{Cross-Validation Results (for persistent hyperdigraph altogether)}
% \begin{tabular}{|c|cc|cc|cc|}
% \hline
% \multicolumn{7}{|c|}{\textbf{Persistent hyperdigraph rips $H_0$ for global}} \\
% \hline
% \multirow{2}{*}{\textbf{Fold}} & \multicolumn{2}{c|}{\textbf{ betti}} & \multicolumn{2}{c|}{\textbf{spectra}} &  \multicolumn{2}{c|}{\textbf{betti+spectra}} \\
% \cline{2-7}
%  & \textbf{Rp} & \textbf{MAE} & \textbf{Rp} & \textbf{MAE} & \textbf{Rp} & \textbf{MAE} \\
% \hline
% Fold 1  & 0.5078 & 2.4486 & 0.6727 & 2.1165 & 0.6849 & 2.0995\\
% Fold 2  & 0.6883 & 1.8107 & 0.6723 & 1.7381 & 0.7114 & 1.6897\\
% Fold 3  & 0.3153 & 2.2600 & 0.5041 & 1.6863 & 0.5210 & 1.7053\\
% Fold 4  & 0.4859 & 2.2112 & 0.5625 & 1.9853 & 0.5906 & 1.9782\\
% Fold 5  & 0.3423 & 1.9879 & 0.5583 & 1.7162 & 0.5639 & 1.6549\\
% Fold 6  & 0.2529 & 2.5615 & 0.6353 & 1.6765 & 0.6327 & 1.7427\\
% Fold 7  & 0.5427 & 2.0549 & 0.6449 & 1.8530 & 0.6421 & 1.8826\\
% Fold 8  & 0.6038 & 1.8543 & 0.7113 & 1.6936 & 0.6979 & 1.6931\\
% Fold 9  & 0.5652 & 1.8633 & 0.4595 & 1.8663 & 0.4858 & 1.8137\\
% Fold 10 & 0.1659 & 2.6382 & 0.2586 & 2.6276 & 0.2643 & 2.5764\\
% \hline
% \textbf{Overall} & \textbf{0.4364} & \textbf{2.1696} & \textbf{0.5685} & \textbf{1.8955} & \textbf{0.5814} & \textbf{1.8832}\\
% \hline
% \end{tabular}
% \end{table}

%  \begin{table}[!ht]
% \centering
% \caption{Cross-Validation Results (for persistent hyperdigraph separation for C-C,C-N,C-O)}
% \begin{tabular}{|c|cc|cc|cc|}
% \hline
% \multicolumn{7}{|c|}{\textbf{Persistent hyperdigraph betti for element-wise}} \\
% \hline
% \multirow{2}{*}{\textbf{Fold}} & \multicolumn{2}{c|}{\textbf{ Step size 1}} & \multicolumn{2}{c|}{\textbf{Step size 0.5}} &  \multicolumn{2}{c|}{\textbf{Step size 0.1}} \\
% \cline{2-7}
%  & \textbf{Rp} & \textbf{MAE} & \textbf{Rp} & \textbf{MAE} & \textbf{Rp} & \textbf{MAE} \\
% \hline
% Fold 1  & 0.6727 & 2.0724 & 0.6264 & 2.1750 & 0.6538 & 2.1370\\
% Fold 2  & 0.6868 & 1.6458 & 0.6692 & 1.6788 & 0.6846 & 1.6708\\
% Fold 3  & 0.4721 & 1.7149 & 0.5433 & 1.6169 & 0.5474 & 1.6166\\
% Fold 4  & 0.4667 & 2.1822 & 0.4537 & 2.2219 & 0.4324 & 2.2362\\
% Fold 5  & 0.6452 & 1.5274 & 0.6386 & 1.5818 & 0.6299 & 1.5859\\
% Fold 6  & 0.4986 & 1.8504 & 0.5235 & 1.8254 & 0.5223 & 1.8184\\
% Fold 7  & 0.6428 & 1.7904 & 0.6725 & 1.6950 & 0.6497 & 1.7811\\
% Fold 8  & 0.6701 & 1.7279 & 0.6574 & 1.7604 & 0.6517 & 1.7682\\
% Fold 9  & 0.6351 & 1.5686 & 0.6417 & 1.5597 & 0.6454 & 1.5642\\
% Fold 10 & 0.4720 & 2.1899 & 0.5669 & 2.0396 & 0.5439 & 2.0642\\
% \hline
% \textbf{Overall} & \textbf{0.5845} & \textbf{1.8263} & \textbf{0.5934} & \textbf{1.8155} & \textbf{0.5903} & \textbf{1.8241}\\
% \hline
% \end{tabular}
% \end{table}

%  \begin{table}[!ht]
% \centering
% \caption{Cross-Validation Results (for persistent hyperdigraph $H_0$)}
% \begin{tabular}{|c|cc|cc|cc|}
% \hline
% \multicolumn{7}{|c|}{\textbf{Persistent hyperdigraph eigens for element-wise}} \\
% \hline
% \multirow{2}{*}{\textbf{Fold}} & \multicolumn{2}{c|}{\textbf{ Step size 1}} & \multicolumn{2}{c|}{\textbf{Step size 0.5}} &  \multicolumn{2}{c|}{\textbf{Step size 0.1}} \\
% \cline{2-7}
%  & \textbf{Rp} & \textbf{MAE} & \textbf{Rp} & \textbf{MAE} & \textbf{Rp} & \textbf{MAE} \\
% \hline
% Fold 1  & 0.7418 & 1.9981 & 0.7537 & 1.9655 & 0.7613 & 1.9135\\
% Fold 2  & 0.8070 & 1.3970 & 0.7959 & 1.4490 & 0.7758 & 1.6396\\
% Fold 3  & 0.5366 & 1.7634 & 0.5733 & 1.6618 & 0.5540 & 1.6740\\
% Fold 4  & 0.5666 & 2.0192 & 0.5899 & 1.9715 & 0.6891 & 1.7025\\
% Fold 5  & 0.6292 & 1.6225 & 0.6680 & 1.5692 & 0.6054 & 1.7329\\
% Fold 6  & 0.4945 & 1.7428 & 0.4946 & 1.7607 & 0.4455 & 1.7674\\
% Fold 7  & 0.7210 & 1.7537 & 0.7214 & 1.7449 & 0.7308 & 1.7746\\
% Fold 8  & 0.7310 & 1.6504 & 0.7208 & 1.6740 & 0.6964 & 1.6366\\
% Fold 9  & 0.6668 & 1.5513 & 0.6763 & 1.5168 & 0.6146 & 1.6650\\
% Fold 10 & 0.4139 & 2.2891 & 0.3832 & 2.3534 & 0.5170 & 2.1038\\
% \hline
% \textbf{Overall} & \textbf{0.6284} & \textbf{1.7782} & \textbf{0.6336} & \textbf{1.7660} & \textbf{0.6404} & \textbf{1.7611}\\
% \hline
% \end{tabular}
% \end{table}

%  \begin{table}[!ht]
% \centering
% \caption{Cross-Validation Results (for persistent hyperdigraph $H_0$)}
% \begin{tabular}{|c|cc|cc|cc|}
% \hline
% \multicolumn{7}{|c|}{\textbf{Persistent hyperdigraph eigens+betti0 for element-wise}} \\
% \hline
% \multirow{2}{*}{\textbf{Fold}} & \multicolumn{2}{c|}{\textbf{ Step size 1}} & \multicolumn{2}{c|}{\textbf{Step size 0.5}} &  \multicolumn{2}{c|}{\textbf{Step size 0.1}} \\
% \cline{2-7}
%  & \textbf{Rp} & \textbf{MAE} & \textbf{Rp} & \textbf{MAE} & \textbf{Rp} & \textbf{MAE} \\
% \hline
% Fold 1  & 0.7649 & 1.9540 & 0.7676 & 1.9450 & 0.7863 & 1.8447\\
% Fold 2  & 0.8223 & 1.3623 & 0.8103 & 1.3974 & 0.7764 & 1.6308\\
% Fold 3  & 0.5674 & 1.6882 & 0.5794 & 1.6686 & 0.5794 & 1.6258\\
% Fold 4  & 0.5949 & 1.9458 & 0.5927 & 1.9416 & 0.6796 & 1.7023\\
% Fold 5  & 0.6499 & 1.5600 & 0.6745 & 1.5211 & 0.6028 & 1.7403\\
% Fold 6  & 0.5041 & 1.7359 & 0.5263 & 1.7134 & 0.4420 & 1.7834\\
% Fold 7  & 0.7095 & 1.7674 & 0.7275 & 1.7249 & 0.7284 & 1.7741\\
% Fold 8  & 0.7361 & 1.6250 & 0.7250 & 1.6842 & 0.6942 & 1.6304\\
% Fold 9  & 0.6567 & 1.5549 & 0.6665 & 1.5355 & 0.6210 & 1.6422\\
% Fold 10 & 0.4297 & 2.2809 & 0.4369 & 2.2936 & 0.5365 & 2.0748\\
% \hline
% \textbf{Overall} & \textbf{0.6424} & \textbf{1.7467} & \textbf{0.6456} & \textbf{1.7419} & \textbf{0.6471} & \textbf{1.7448}\\
% \hline
% \end{tabular}
% \end{table}

% \section{Interest topic: why ESM  features result is better than persistent homology}
% Idea for single partner representation comes from ESM features. And the original cut-off is 16.
% \begin{table}[!ht]
% \centering
% \caption{Cross-Validation Results Comparing Rips $H_0$ with Cutoff 8 vs. Default Cutoff}
% \begin{tabular}{|c|cc|cc|cc|cc|}
% \hline
% \multirow{3}{*}{\textbf{Fold}} 
% & \multicolumn{4}{c|}{\textbf{Cutoff = 8}} 
% & \multicolumn{4}{c|}{\textbf{Default Cutoff}} \\
% \cline{2-9}
% & \multicolumn{2}{c|}{\textbf{Single partner}} 
% & \multicolumn{2}{c|}{\textbf{Bipartite graph}} 
% & \multicolumn{2}{c|}{\textbf{Single partner}} 
% & \multicolumn{2}{c|}{\textbf{Bipartite graph}} \\
% \cline{2-9}
% & \textbf{Rp} & \textbf{MAE} & \textbf{Rp} & \textbf{MAE} 
% & \textbf{Rp} & \textbf{MAE} & \textbf{Rp} & \textbf{MAE} \\
% \hline
% Fold 1  & 0.5671 & 2.1398 & 0.5131 & 2.2747 & 0.7499 & 1.8952 & 0.6599 & 2.1523 \\
% Fold 2  & 0.5402 & 2.0153 & 0.6017 & 1.8704 & 0.6616 & 1.8227 & 0.7815 & 1.5692 \\
% Fold 3  & 0.6012 & 1.5979 & 0.5041 & 1.6167 & 0.6294 & 1.5289 & 0.4503 & 1.9205 \\
% Fold 4  & 0.3620 & 2.3994 & 0.3921 & 2.2744 & 0.5847 & 2.0380 & 0.6068 & 1.9280 \\
% Fold 5  & 0.4848 & 1.7018 & 0.4339 & 1.7446 & 0.5185 & 1.6202 & 0.6671 & 1.5044 \\
% Fold 6  & 0.3715 & 1.8343 & 0.5731 & 1.6650 & 0.4028 & 1.8508 & 0.3322 & 1.9909 \\
% Fold 7  & 0.6929 & 1.8783 & 0.6379 & 1.8762 & 0.7228 & 1.7242 & 0.7086 & 1.7028 \\
% Fold 8  & 0.4370 & 2.2450 & 0.4977 & 2.1482 & 0.5344 & 2.0779 & 0.6686 & 1.8155 \\
% Fold 9  & 0.5302 & 1.7504 & 0.5565 & 1.6262 & 0.6523 & 1.5452 & 0.7468 & 1.3404 \\
% Fold 10 & 0.5171 & 2.1281 & 0.3511 & 2.4374 & 0.4826 & 2.1374 & 0.5088 & 2.1164 \\
% \hline
% \textbf{Overall} 
% & \textbf{0.4976} & \textbf{1.9686} & \textbf{0.4931} & \textbf{1.9531} 
% & \textbf{0.5900} & \textbf{1.8234} & \textbf{0.6148} & \textbf{1.8047} \\
% \hline
% \end{tabular}
% \end{table}

%  \begin{table}[!ht]
% \centering
% \caption{Cross-Validation Results (Two Runs)}
% \begin{tabular}{|c|cc|cc|}
% \hline
% \multicolumn{5}{|c|}{\textbf{Persistent Homology from Rips Complex $H_0$}} \\
% \hline
% \multirow{2}{*}{\textbf{Fold}} & \multicolumn{2}{c|}{\textbf{Not bipartite graph}} & \multicolumn{2}{c|}{\textbf{With bipartite graph}} \\
% \cline{2-5}
%  & \textbf{Rp} & \textbf{MAE} & \textbf{Rp} & \textbf{MAE} \\
% \hline
% Fold 1  & 0.6296 & 2.1530 & 0.6599 & 2.1523\\
% Fold 2  & 0.6206 & 1.9198 & 0.7815 & 1.5692 \\
% Fold 3  & 0.5528 & 1.6567 & 0.4503 & 1.9205 \\
% Fold 4  & 0.5533 & 2.0092 & 0.6068 & 1.9280 \\
% Fold 5  & 0.5774 & 1.6562 & 0.6671 & 1.5044 \\
% Fold 6  & 0.4651 & 1.8511 & 0.3322 & 1.9909 \\
% Fold 7  & 0.7077 & 1.7140 & 0.7086 & 1.7028 \\
% Fold 8  & 0.3985 & 2.2952 & 0.6686 & 1.8155 \\
% Fold 9  & 0.6231 & 1.5982 & 0.7468 & 1.3404 \\
% Fold 10 & 0.4951 & 2.100 & 0.5088 & 2.1164 \\
% \hline
% \textbf{Overall} & \textbf{0.5551} & \textbf{1.8955} & \textbf{0.6148} & \textbf{1.8047} \\
% \hline
% \end{tabular}
% \end{table}

% \section{The decomposition of element-wise features}
% This following table is the result for comparing each part of feature work for last project for WT

% {\color{red}: Feature quantity: Graph Laplacian: 432; rigidy: 72;  alpha: 140; Rips: 162
% }

% \begin{table}[htbp]
% \centering
% \caption{Fold-wise performance: Pearson correlation (Rp) and Mean Absolute Error (MAE) for Rips, Alpha, and Rips+Alpha complexes (WT, last project) [on local machine in UF office]}
% \begin{tabular}{|c|cc|cc|cc|cc|}
% \hline
% \textbf{Fold} 
% & \multicolumn{2}{c|}{\textbf{Only Rips}} 
% & \multicolumn{2}{c|}{\textbf{Only Alpha}} 
% & \multicolumn{2}{c|}{\textbf{Only rigidy}} 
% & \multicolumn{2}{c|}{\textbf{Graph Laplacian}} \\
% \cline{2-9}
% & \textbf{Rp} & \textbf{MAE} 
% & \textbf{Rp} & \textbf{MAE} 
% & \textbf{Rp} & \textbf{MAE} 
% & \textbf{Rp} & \textbf{MAE}\\
% \hline
% 1  & 0.6599 & 2.1523 & 0.5344 & 2.3792 & 0.7673 & 1.9942 & 0.7351 & 2.0733 \\
% 2  & 0.7815 & 1.5692 & 0.7397 & 1.5854 & 0.7313 & 1.5616 & 0.7897 & 1.4258\\
% 3  & 0.4503 & 1.9205 & 0.5767 & 1.5775 & 0.5922 & 1.6319 & 0.5648 & 1.6669\\
% 4  & 0.6068 & 1.9280 & 0.4996 & 2.1732 & 0.5880 & 1.9705 & 0.5958 & 1.9501\\
% 5  & 0.6671 & 1.5044 & 0.7877 & 1.5159 & 0.6360 & 1.5025 & 0.6498 & 1.5492\\
% 6  & 0.3322 & 1.9909 & 0.3977 & 1.9149 & 0.5334 & 1.7141 & 0.3579 & 2.0209\\
% 7  & 0.7086 & 1.7028 & 0.5723 & 2.1068 & 0.7361 & 1.7334 & 0.6924 & 1.8360\\
% 8  & 0.6686 & 1.8155 & 0.4761 & 2.0248 & 0.6211 & 1.8544 & 0.7151 & 1.6191\\
% 9  & 0.7468 & 1.3404 & 0.4558 & 1.9219 & 0.6274 & 1.5727 & 0.7075 & 1.3665\\
% 10 & 0.5088 & 2.1164 & 0.5445 & 2.1344 & 0.5091 & 2.1227 & 0.4056 & 2.2960\\
% \hline
% \end{tabular}
% \end{table}

% \begin{table}[h]
% \centering
% \caption{Overall performance across all folds (Only Rips) [on local machine in UF office]}
% \begin{tabular}{|c|c|c|c|c|c|c|c|}
% \hline
% \multicolumn{2}{|c|}{\textbf{Only Rips}} & \multicolumn{2}{|c|}{\textbf{Only Alpha}} & 
% \multicolumn{2}{|c|}{\textbf{Only rigidy}}&
% \multicolumn{2}{|c|}{\textbf{Graph Laplacian}} \\
% \hline
% \textbf{Metric} & \textbf{Value} & \textbf{Metric} & \textbf{Value}& \textbf{Metric} & \textbf{Value}& \textbf{Metric} & \textbf{Value}\\
% \hline
% MAE & 1.8047 & MAE & 1.9326  & MAE & 1.7655& MAE & 1.7799\\
% Rp  & 0.6148 & Rp & 0.5505 & Rp & 0.6392 & Rp & 0.6226\\
% \hline
% \end{tabular}
% \end{table}

% For any comparison (just for element type), it is in Table 3:
% \begin{table}[h]
% \centering
% \caption{Fold-wise performance: Pearson correlation (Rp) and Mean Absolute Error (MAE) for Rips, Alpha, and Rips+Alpha complexes (WT, last project) [on local machine in UF office]}
% \begin{tabular}{|c|cc|cc|cc|cc|}
% \hline
% \textbf{Fold} 
% & \multicolumn{2}{c|}{\textbf{Rips+Alpha}} 
% & \multicolumn{2}{c|}{\textbf{two+Laplacian}} 
% & \multicolumn{2}{c|}{\textbf{two+ rigidy}} 
% & \multicolumn{2}{c|}{\textbf{All}} \\
% \cline{2-9}
% & \textbf{Rp} & \textbf{MAE} 
% & \textbf{Rp} & \textbf{MAE} 
% & \textbf{Rp} & \textbf{MAE} 
% & \textbf{Rp} & \textbf{MAE}\\
% \hline
% 1  & 0.6427 & 2.2020 & 0.7317 & 2.0766 & 0.6880 & 2.1525 & 0.7335 & 2.0727 \\
% 2  & 0.7919 & 1.5145 & 0.8335 & 1.3375 & 0.7925 & 1.4636 & 0.8121  & 1.3652\\
% 3  & 0.5776 & 1.6869 & 0.6089 & 1.6180 & 0.6131 & 1.6455 & 0.6096 & 1.6406\\
% 4  & 0.5613 & 1.9852 & 0.6149 & 1.8972 & 0.5927 & 1.9317 & 0.6037 & 1.9214 \\
% 5  & 0.7847 & 1.3683 & 0.7111 & 1.4583 & 0.7756 & 1.3786 & 0.7217 & 1.4498 \\
% 6  & 0.4383 & 1.8870 & 0.4314 & 1.9248 & 0.4967 & 1.7907 & 0.4475 & 1.8847\\
% 7  & 0.7067 & 1.8268 & 0.7110 & 1.7874 & 0.7100 & 1.7620 & 0.7116 & 1.7781\\
% 8  & 0.6291 & 1.7903 & 0.6985 & 1.6046 & 0.6482 & 1.7553 & 0.6900 & 1.6420\\
% 9  & 0.6085 & 1.6847 & 0.6782 & 1.5002 & 0.5746 & 1.7379 & 0.6650 & 1.5196 \\
% 10 & 0.5629 & 2.0195 & 0.4958 & 2.1387 & 0.5677 & 1.9989 & 0.4864 & 2.1437\\
% \hline
% \end{tabular}
% \end{table}

% \begin{table}[h]
% \centering
% \caption{Overall performance across all folds (Combination) [on local machine in UF office]}
% \begin{tabular}{|c|c|c|c|c|c|c|c|}
% \hline
% \multicolumn{2}{|c|}{\textbf{Rips+Alpha}} & \multicolumn{2}{|c|}{\textbf{Two+Laplacian}} & 
% \multicolumn{2}{|c|}{\textbf{Two+rigidy}}&
% \multicolumn{2}{|c|}{\textbf{All}} \\
% \hline
% \textbf{Metric} & \textbf{Value} & \textbf{Metric} & \textbf{Value}& \textbf{Metric} & \textbf{Value}& \textbf{Metric} & \textbf{Value}\\
% \hline
% MAE & 1.7927 & MAE & 1.7338 & MAE & 1.7616& MAE & 1.7414\\
% Rp  & 0.6283 & Rp & 0.6509& Rp & 0.6443 & Rp & 0.6469\\
% \hline
% \end{tabular}
% \end{table}

% The following table is the result for persistent homology and persistent hyperdigraph in [2.5,5] curoff results in WT.

% % \begin{table}[h!]
% % \centering
% % \caption{Cross-Validation Results (Two Runs)}
% % \begin{tabular}{|c|c|c|c|c|}
% % \hline
% % \multirow{2}{*}{\textbf{Fold}} & \multicolumn{2}{c|}{\textbf{Persistent homology}} & \multicolumn{2}{c|}{\textbf{Persistent Hiperdigraph}} \\
% % \cline{2-5}
% %  & \textbf{Rp} & \textbf{MAE} & \textbf{Rp} & \textbf{MAE} \\
% % \hline
% % Fold 1  & 0.4973 & 2.4604 & 0.5742  & 2.3880 \\
% % Fold 2  & 0.4870 & 2.1009 & 0.4365  & 1.7624 \\
% % Fold 3  & 0.3826 & 1.7343 & 0.3470 & 1.9499 \\
% % Fold 4  & 0.3482 & 2.5172 & 0.4711  & 2.1507 \\
% % Fold 5  & 0.5285 & 1.8640 & 0.3576  & 2.0782 \\
% % Fold 6  & 0.3284 & 2.1258 & 0.2915  & 1.8515 \\
% % Fold 7  & 0.6420 & 1.9108 & 0.3926  & 1.8515 \\
% % Fold 8  & 0.6040 & 2.0245 & 0.6632  & 1.8436 \\
% % Fold 9  & 0.4636 & 1.8104 & 0.4046  & 1.7969 \\
% % Fold 10 & 0.3855 & 2.3857 & 0.4589  & 2.4521 \\
% % \hline
% % \textbf{Overall} & \textbf{0.4587} & \textbf{2.0935} & \textbf{0.4415} & \textbf{2.0748} \\
% % \hline
% % \end{tabular}
% % \end{table}

% % Then use similar methods to do persistent homology for rips and Laplacian:
% % The following table is the result for persistent homology and persistent hyperdigraph in [2.5,5] curoff results in WT.

% % \begin{table}[h!]
% % \centering
% % \caption{Cross-Validation Results (Two Runs)}
% % \begin{tabular}{|c|c|c|c|c|c|c|}
% % \hline
% % \multicolumn{7}{|c|}{\textbf{Persistent homology}}  \\
% % \hline
% % \multirow{2}{*}{\textbf{Fold}} & \multicolumn{2}{c|}{\textbf{rips}} & \multicolumn{2}{c|}{\textbf{Laplacian}} & \multicolumn{2}{c|}{\textbf{rips+Lapacian}} \\
% % \cline{2-7}
% %  & \textbf{Rp} & \textbf{MAE} & \textbf{Rp} & \textbf{MAE} & \textbf{Rp} & \textbf{MAE}\\
% % \hline
% % Fold 1  & 0.6906 & 2.0358 & 0.5742  & 2.3880 \\
% % Fold 2  & 0.7915 & 1.7364 & 0.4365  & 1.7624 \\
% % Fold 3  & 0.3175 & 1.9311 & 0.3470 & 1.9499 \\
% % Fold 4  & 0.4432 & 2.1008 & 0.4711  & 2.1507 \\
% % Fold 5  & 0.6569 & 1.6057 & 0.3576  & 2.0782 \\
% % Fold 6  & 0.5450 & 1.5186 & 0.2915  & 1.8515 \\
% % Fold 7  & 0.7206 & 1.7697 & 0.3926  & 1.8515 \\
% % Fold 8  & 0.4793 & 1.9334 & 0.6632  & 1.8436 \\
% % Fold 9  & 0.6523 & 1.5087 & 0.4046  & 1.7969 \\
% % Fold 10 & 0.4986 & 2.0063 & 0.4589  & 2.4521 \\
% % \hline
% % \textbf{Overall} & \textbf{0.5790} & \textbf{1.8151} & \textbf{0.4415} & \textbf{2.0748} \\
% % \hline
% % \end{tabular}
% % \end{table}

% \section{before 4.3}

% {\color{blue} Understanding:

% \begin{itemize}
%     \item The Delaunay Triangulation (DT) condition: For each triangle in the triangulation, the circumcircle contains no other points from a set of points.
    
%     \textbf{[Idea]}: if a point in a triangular circumcircle, then actually u can construct a smaller triangular based on right now triangular, so this trinagular can be decompose to smaller tringular, but actually every triangular in DT condition is unique and indecomposbile.

%     \item The Alpha complex is built by filtering simplices (vertices, edges, triangles, trtrahedra) from DT based on a parameter $\alpha$.

%     \item Given a set of points $P$ in $\mathbb{R}^n$, a simplex (vertex,edge,triangle, or tetrahedron) $\sigma$ from DT(P) is included in the Alpha complex $A_{\alpha}(P)$ if:
%  \[A_{\alpha}(P) = \left\{\sigma \in DT(P) |r(\sigma) \le \sqrt{\alpha}\right\},\]
%  where: $r(\sigma)$ is the circumradius of the simplex $\sigma$, $\sqrt{\alpha}$ is the threshold radius.
% \end{itemize}

% }

% {\color{red} Remain Problem: 

% 1.Do we need to consider strong interaction for sure?

% 2. I didn't do residue level.

% 3. based on what i implemented right now, it should be considered in $\ge 2$ protein-protein binding but also assume in binary binding way in the sense that  assume we have 4 binding protein chains,  we can regard it as binary computation for these according to their location. 
% }

% \section {How to compute and ensure Hyperedges}
% To confirm hyperedges, follow this computational workflow:

% \subsubsection{{\color{red}[Finished]}(First case) Atom Level:} 
% \subsubsection*{(Step 1) Extract Relevant Atoms from Partner 1}
% \begin{itemize}
%     \item Select \(C,N,O,S\) atoms from partner 1
%     \item Identify nearby atoms from partner 2 (with 12 \(A\))
% \end{itemize}

% % Then what I already did is
% % For 5 \(A\) cutoff is based on the following classification case.
% % \begin{enumerate}
% %     \item Strong Interactions 1 [Hydrogen Bonds] \\
% %     (1). Use Euclidean distance between the oxygen and hydrogen atoms and Compute the Distance ($\le 3.5 A$) \\
% %     (2). Compute the Angle $(\ge 120^{\circ})$.
% %     \item Weaker but Significant Interactions
% %     [Van der Waals for $C-H$], [Sulfur-$\pi$ for $C-S$] and [$\pi-$Stacking for $C-C$] and [$CH-\pi$ Interactions] [Lone Pair-$\pi$interaction] among 4.5-5.0A
% % \end{enumerate}

% \subsubsection{(Second case) Residue Level:} 

% Salt-bridges  and Cation-$\pi$,  didn't finish checking

% For 1-Hyperedges (Pairwise Atomic Interactions)
% These are standard bonds or non-covalent interactions.
% Criteria:
% Hydrogen bonds (N-H•••O, O-H•••O, etc.):

% \subsection{Index all the binding data without cutting off}
% When we do computation for the whole thing, the first important thing is trying to give the index of every atom. So I decide the following psedueo process for the following:
% [by using assumption of Dr Wei]
% \begin{enumerate}
%     \item trying to find smallest distance  of $S$ in primary partner, then get the group of data of $S,S$; $S,C$; $S,N$ and $S,O$ 
% \end{enumerate}

% \subsection{Hypergraph assignment and rule setting:}
% Consider 2 partners as primary partner and secondary partner in one PDB:

% First give some assumption:
% \begin{itemize}
%     \item no self-binding for partner 1 to partner 1 and partner 2 to partner 2.
%     \item If regard it as primary partner and secondary partner in PDB for the same atom, the direction is always from primary partner to secondary  .
    
% \end{itemize}

% \begin{center}
%     \begin{tikzpicture}
%         % Draw squares with labels
%         \draw (0,0) rectangle (1,1) node[pos=.5] {1};
%         \draw (2,0) rectangle (3,1) node[pos=.5] {2};
%         %\draw (4,0) rectangle (5,1) node[pos=.5] {3};
%         %\draw (6,0) rectangle (7,1) node[pos=.5] {4};
%         %\draw (8,0) rectangle (9,1) node[pos=.5] {5};
%     \end{tikzpicture}
% \end{center}
% where square 1 represents the atom of primary partner. square 2,3,4,5 represent the different type of atoms in secondary partner. For example,
% \begin{center}
%     \begin{tikzpicture}
%         % Draw squares with labels
%         \draw (0,0) rectangle (1,1) node[pos=.5] {$C_{1}$};
%         \draw (2,0) rectangle (3,1) node[pos=.5] {$C_{2}$};
%         \draw (4,0) rectangle (5,1) node[pos=.5] {$N_{1}$};
%         \draw (6,0) rectangle (7,1) node[pos=.5] {$O_{2}$};
%         \draw (8,0) rectangle (9,1) node[pos=.5] {$S_{1}$};
%     \end{tikzpicture}
% \end{center}

% $C_{1}$ is element in primary partner,
% $C_{2},N_{2},O_2,S_2$ are elements of secondary partner.
%------------------------------------------------------------
\bibliographystyle{plain}
\bibliography{main}